\documentclass[12pt]{article}

\RequirePackage{amsthm,amsmath,amsfonts,amssymb}
\RequirePackage{bm}
\RequirePackage{url}

\usepackage{multirow}
\usepackage{colortbl}
\usepackage{adjustbox}
\usepackage{bbm,enumerate}

\RequirePackage{natbib}
\usepackage{comment}
\usepackage{appendix}
\usepackage{natbib}
\usepackage{enumitem}
\usepackage{graphicx}
\usepackage{subfigure}
\usepackage{booktabs}
\usepackage{array}
\usepackage{algorithm}
\usepackage{algpseudocode}

\usepackage{dsfont}
\usepackage{setspace}

\usepackage{bigints}

\usepackage[colorlinks,linkcolor=myred,anchorcolor=myblue,citecolor=myblue]{hyperref}
\usepackage{fullpage}
\usepackage[font={footnotesize,it}]{caption}
\numberwithin{equation}{section}

\usepackage{smile}
\usepackage{tikzstyle}

\newcommand{\calH}{\mathcal{H}}
\newcommand{\calP}{\mathcal{P}}
\newcommand{\calX}{\mathcal{X}}
\newcommand{\yihong}[1]{{\color{myred} \colorbox{myred!30}{YG} #1}}

\begin{document}

\title{\bf Neural Generative  Distributional Regression}
\author{Jinhang Chai\thanks{Department of Operations Research and Financial Engineering, Princeton University, jhchai@princeton.edu},\   Jianqing Fan\thanks{Department of Operations Research and Financial Engineering, Princeton University,  jqfan@princeton.edu.  His research is supported by NSF Grant DMS-2412029 and the ONR Grant N00014-25-1-2317},\  and Yihong Gu\thanks{Department of Biomedical Informatics, Harvard University, yihong\_gu@hms.harvard.edu}}
\maketitle

\setstretch{1} 
\begin{abstract}
Any continuous conditional distribution of $Y$ given $X$ can be generated from a transform of a known noise distribution $U$ such as the uniform or normal distribution via $Y = g(X, U)$. This paper provides an estimator of such a generative transformation $g$ by minimizing the empirical energy distance between distributions of $Y$ and $g(X, U)$, and implements it via neural networks. The estimated distribution can then be readily applied to downstream tasks such as conditional moment estimation, predictive interval construction, and conditional density estimation. By leveraging the representation power of neural networks, the estimator can adaptively exploit low-dimensional structures in a purely algorithmic manner. Theoretically, we establish an oracle inequality attaining the adaptive optimal nonparametric rates. Numerical simulations and real data analysis further demonstrate the practical effectiveness of the proposed method.

\end{abstract} 
\setstretch{1.5} 

\noindent{\bf Keywords}: Energy Statistics, Neural Networks, Distributional Regression, Generative Model

\section{Introduction}
Deep learning has achieved tremendous success across a wide range of domains, including computer vision \citep{voulodimos2018deep}, game playing \citep{silver2016mastering}, and natural language processing \citep{brown2020language}. The success of deep neural network models is often attributed to their remarkable ability to represent and approximate complex, high-dimensional functions. From a theoretical standpoint, this perspective aligns naturally with statistical learning theory, where deep networks have been shown to possess powerful approximation and generalization capabilities. Notably, \citet{schmidt2020nonparametric} demonstrated that neural networks, with carefully designed architectures, can achieve the minimax-optimal rate for nonparametric regression while adaptively exploiting compositional structures in the target function.

Traditional nonparametric regression \citep{fan1996local,gyorfi2002distribution,tsybakov2009nonparametric} focuses on estimating the conditional mean function. However, the conditional mean captures only a limited view of the relationship between covariates and responses~\citep{shaked2007stochastic,hastie2009elements}. In many applications, richer aspects of the conditional distribution are of primary interest. For instance, conditional quantiles are crucial in heavy-tailed settings or risk-sensitive decision-making~\citep{wang2012estimation,gardes2010functional}; prediction intervals are indispensable for uncertainty quantification in fields such as economics~\citep{chudy2020long} and meteorology~\citep{umlauf2018primer}; and conditional score functions or higher-order moments play key roles in semi-parametric inference~\citep{chernozhukov2018double} and financial risk modeling~\citep{harvey2013dynamic}. 

Generative distributional regression directly estimates the conditional distribution of the response $Y$ given the covariate $X$ instead of a fixed point estimate. There is a considerable literature proposing parametric or semi-parametric methods for distributional regression \citep{rigby2005generalized, hothorn2014conditional}. Although the effects of the covariates on the conditional distribution can be modeled nonparametrically, for example, using neural networks \citep{klein2024distributional,kneib2023rage}, these methods do assume the conditional distribution belongs to certain parametric families. This may risk model misspecification and lack scalability under complex data-generating processes. 

With the development of modern machine learning models, including neural networks and tree-based methods, there is also a considerable literature on realizing distributional regression using these nonparametric techniques and a regression-type loss that shares a similar spirit to $L_2$ and $L_1$ losses in that the computation is easy and thus the implementation is similar to running conditional mean or conditional quantile regression, both in optimization stability and computation time. Two main types of loss are under consideration: the Continuous Ranked Probability Score (CRPS) \citep{henzi2021isotonic, clement2024distributional, padilla2025risk} and the conditional Maximum Mean Discrepancy (MMD) loss \citep{ren2016conditional, huang2022evaluating, cevid2022distributional, chatterjee2025one}.

However, it is still unclear whether these losses can be realized for both sample-efficient and computation-efficient distributional regression in theory. There is some work establishing the statistical rate of convergence for the estimators using CRPS loss, including the isotonic regression \citep{henzi2021isotonic}, and kernel estimators \citep{pic2023distributional}; the latter suffers from the curse-of-dimensionality in the covariate dimension $d$. \cite{padilla2025risk} shows that a variant of CRPS loss, integrated by neural networks, can efficiently adapt to the low-dimensional structure in the conditional CDF function. However, only the in-sample error of CDF, i.e., $\frac{1}{n}\sum_{i=1}^n \int \{ \hat{F}(t|X=X_i)-F^\star(t|X=X_i)\}^2 dt$, is established, and it is unclear how to simulate or estimate the conditional distribution of $Y$ for new observed $x$ and the correspondings theoretical guarantees. Furthermore, this estimation procedure is not computation-efficient, given that one needs to run the loss for different quantiles $t$, and thus is conceptually similar to running quantile regressions with $\Omega(\mathrm{poly}(n))$ cumulative probabilities. \cite{chatterjee2025one} establishes a convergence rate for a neural network using a conditional MMD loss. However, it explicitly uses the nearest neighbour in the estimation procedure, which slows the computation and inevitably leads to a statistical rate that suffers from the curse of dimensionality in $d$. 

In this paper, we fill the above gap by showing that a neural generative distributional regression method, which uses regression-type loss, with similar computation and implementation as regression, can fully leverage the neural network's sample efficiency in modeling conditional distributions and further leads to accurate downstream conditional quantities estimation. The core idea, which dates back to the conditional generative adversarial network \citep{mirza2014conditional}, is to represent the conditional law of $Y|X$ by transforming an exogenous noise variable through a deep neural network, i.e., $Y = g(X,U)$ with the covariate $X\in \mathbb{R}^d$ and some noise $U$. This implicit formulation unifies various estimation objectives -- conditional means, quantiles, moments, or densities -- into a single modeling framework. Once trained, downstream quantities can be efficiently estimated via Monte Carlo sampling from the fitted conditional simulator. While we consider the general kernel function in Appendix~\ref{sec:general-kernel}, we focus on the specific instance where the estimator is an empirical risk minimizer of a regression-type loss -- energy distance \citep{szekely2013energy} loss. This loss is a realization of the average MMD loss \citep{huang2022evaluating} with first-order Sobolev kernel, or constrained kernel $K(x,y) = -|x-y|$, and is also an unbiased estimate of the CRPS loss. The estimator is conceptually the same as the Engression \citep{shen2025engression} when the response $Y$ is one-dimensional.

We provide a thorough and sharp non-asymptotic analysis of the proposed method. We summarize the key findings below.
\begin{itemize}
    \item We establish the non-asymptotic oracle inequalities akin to regression for the proposed estimator. Moreover, in a similar spirit to the conditional moment regression, we show that as long as the conditional quantile function admits the form of a hierarchical composition model \citep{bauer2019deep, schmidt2020nonparametric, fan2022factor}, the estimator can get an optimal, no curse-of-dimension rate of convergence in the $L_2$ distance of the induced CDF. 

    \item Our error bound offers an explicit dependence on the number of auxiliary noises $U$. We show that a constant number of noise samples per batch, in particular, two, is sufficient for an optimal rate of convergence. This distinguishes significantly from prior distributional estimation methods that require a growing number of noise/quantile values to attain consistency. 

    \item Although the conditional distribution simulator $\hat{g}$ only has error guarantees of $L_2$ error in conditional CDFs functions, which is a relatively weak distance compared with, for example, TV distance, we show the simulator can successfully result in (1) optimal error rates in downstream conditional moments estimation; and (2) a no curse-of-dimensionality error rate with a constant pre-factor in exponent in downstream conditional quantiles/prediction-band/densities/score-functions. This supports its versatility and sample efficiency in downstream estimation tasks.
\end{itemize}

\subsection{Related Works}

Our study connects to several active areas of research spanning generative modeling, distributional prediction, and statistical learning theory.

\noindent\textbf{Implicit generative modeling.}
The proposed estimator is closely related to implicit generative models implemented via deep neural networks, including Generative Adversarial Networks (GANs; \citealp{goodfellow2014generative}) and diffusion-based models \citep{ho2020denoising}. 
Both frameworks have proven highly effective for flexible density modeling but face well-documented limitations. 
GANs and their variants, such as Wasserstein GANs~\citep{arjovsky2017wasserstein}, require solving a minimax optimization problem and often suffer from mode collapse, training instability, and sensitivity to hyperparameter tuning \citep{thanh2020catastrophic}, whereas diffusion models typically entail heavy computational overhead and slow inference due to iterative denoising procedures \citep{li2024snapfusion}. 
In contrast, our method offers a lightweight and stable alternative: it preserves the expressiveness of implicit generative models while maintaining computational complexity comparable to standard regression.

\noindent\textbf{Uncertainty quantification and distributional prediction.}
Our framework also contributes to the literature on uncertainty-aware prediction, encompassing quantile regression \citep{koenker2001quantile,belloni2019conditional}, conformal prediction \citep{lei2014distribution,barber2023conformal}, and predictive inference methods \citep{duchi2024predictive,fan2023utopia}. 
These approaches typically augment point estimators with post-hoc uncertainty measures or calibration procedures. 
By contrast, our method models the entire conditional distribution directly, thereby integrating uncertainty learning intrinsically within the estimation process and avoiding the need for separate calibration or conformalization steps.

\noindent\textbf{Energy-based objectives and distributional metrics.}
The loss function employed in our estimator is based on the energy distance \citep{szekely2013energy}, which can be viewed as a specific instance of the maximum mean discrepancy (MMD; \citealp{gretton2012kernel}), itself belonging to the broader class of integral probability metrics (IPM; \citealp{muller1997integral}).
Energy-based distances provide a flexible nonparametric approach for quantifying discrepancies between probability distributions when the likelihood function is computationally intractable.
A growing line of work has studied the statistical properties of MMD in parametric estimation problems \citep{briol2019statistical,cherief2022finite,oates2022minimum}. Our work differs by focusing on the nonparametric setting.
Moreover, our analysis reveals a surprising efficiency property: using as few as two auxiliary samples (\(m_2 = 2\)) per batch suffices to achieve the same asymptotic accuracy as employing a large number of samples.


\noindent\textbf{Implicit conditional modeling.}
Finally, our approach is conceptually related to Engression \citep{shen2025engression}, which also adopts an implicit generative framework for conditional distribution estimation. However, their work primarily focuses on the extrapolation capability of the method, considering the regime where auxiliary sample size per batch $m_2$ tends to infinity, and does not establish nonasymptotic results for nonparametric settings. In contrast, our work develops rigorous statistical theory—including optimal nonasymptotic convergence rates, oracle inequalities, and guarantees for downstream tasks.

\subsection{Organization and Notations}

The rest of the paper is organized as follows. Section \ref{sec:method} discusses our proposed method in detail. Section \ref{sec:theory} offers theoretical guarantees of the generic method and various downstream tasks. Section \ref{sec:numerical} showcases the effectiveness of our proposed method through numerical experiments. All the proofs are collected in the supplemental material.

\noindent\textbf{Notation}
Constants $c,C_0,C',\cdots$ may vary from line to line.
We denote $[m]=\{1,2\cdots,m\}$. For two nonnegative sequences $f(n)$ and $g(n)$, we use $f(n)\lesssim g(n)$ or $f(n)=\mathcal{O}(g(n))$ to represent that $f(n)\le Cg(n)$ for some universal constant $C$; similarly, we use $f(n)\gtrsim g(n)$ or $f(n)=\Omega(g(n))$ to represent that $f(n)\ge Cg(n)$ for some universal constant $C$. We write $f(n)\asymp g(n)$ if both $f(n)\lesssim g(n)$ and $f(n)\gtrsim g(n)$. Moreover, we have the shorthand that $a\wedge b=\min\{a,b\}$ and $a\vee b=\max\{a,b\}$. For a random event $\mathcal{E}$, we denote $\indicator{\mathcal{E}}$ or $\mathds{1}_{\mathcal{E}}$ to denote the indicator function of $\mathcal{E}$. For random samples $X_1,X_2\cdots,X_n\in \calX$, and $f$ a function defined on $\calX$, denote $\|f\|_n=\sqrt{\frac{1}{n}\sum_{i=1}^n f(X_i)^2}$. $\|f\|_2=\sqrt{\int f^2(x)d\mu(x)}$ is the 2-norm and $\|\cdot\|_{\infty}=\sup_{x\in \calX} f(x)$ is the $\infty$-norm.

\vspace{-1em}
\section{Setup}
In this section, we introduce our setup, as well as provide relevant background knowledge.

\vspace{-1em}
\subsection{Implicit Conditional Distribution Estimation via Sampling}

Let $X \in \mathbb{R}^p$ be the covariate and $Y\in \mathbb{R}$ be the response variable. Our primary goal is to understand or predict $Y$ based on observed $X$, characterizing the conditional distribution of $Y$ given $X$. While we can directly estimate probabilistic quantities like the conditional Cumulative Distribution Function (CDF) or conditional probability density function (PDF) directly, in this paper, we adopt a sampling-based approach.  Given an i.i.d. sample $(X_1, Y_1),\ldots, (X_n, Y_n) \sim \mu_0$, we estimate the conditional distribution of $Y$ given $X$ in an implicit manner. Specifically, we aim to learn a function $\hat g$, using the data $\mathcal{D} = \{(X_i, Y_i)\}_{i=1}^n$, such  that 
\begin{align}
\label{eq:goal}
    \hat g(X=x, U) \overset{d}{\approx} Y|X=x \qquad (X,Y)\sim \mu_0,
\end{align} 
for a noise vector $U \in \mathbb{R}^q$.
We refer to the above problem as the \emph{implicit conditional distribution estimation} since it is modeled implicitly via sampling.

\subsection{Energy Statistics}

Let $X$ and $Y$ be two independent random variables satisfying $\mathbb{E}[|X|], \mathbb{E}[|Y|] < \infty$ and let $F_X$ and $F_Y$ be their cumulative distribution functions (CDF), respectively. The energy distance (in one dimension) between $X$ and $Y$ is defined as
\begin{align}
\label{eq:energy-stat}
    E^2(X,Y) = 2\mathbb{E}\left[|X-Y|\right] - \mathbb{E}[|X-X'|] - \mathbb{E}[|Y-Y'|]
\end{align} where $X'$ and $Y'$ are independent copies of $X$ and $Y$, respectively. Energy distance \citep{szekely2013energy} in general is a statistical distance between two distributions of random vectors, with the name motivated by analogy to the potential energy between objects in a gravitational space. It has been widely applied in hierarchical clustering \citep{szekely2005hierarchical}, distributional testing \citep{szekely2005new,rizzo2010disco}, change-point detection \citep{kim2009using}, scoring rule \citep{gneiting2007strictly}, to name a few.

We have the following identity unveiling the relationship between the defined quantity \eqref{eq:energy-stat} and the $L_2$ difference between their CDF, also known as Cramér's distance \citep{cramer1928composition}.  
\begin{align}
\label{eq:energy-stat-ident}
    E^2(X,Y) = 2\int_{-\infty}^\infty \left\{F_X(t) - F_Y(t)\right\}^2 dt.
\end{align}
We will later utilize this crucial identity in our proof to connect the distributional distance with its sample-analog loss.

\subsection{Stochastic Neural Networks}

We consider the following stochastic fully connected deep neural network with ReLU activation $\sigma(\cdot) = \max\{0, \cdot\}$ that injects noise in the first layer of a standard fully connected deep neural network, and call it \emph{stochastic deep ReLU network} for short. Let $L, N, \tilde d$ be three positive integers, a \emph{deep ReLU network with depth $L$, ambient width $N$, and noise dimension $\tilde d$~\footnote{When $\tilde d = 1$, the learned neural network approximates the conditional quantile, as we will demonstrate later.  Typically, $\tilde d = 1$ suffices, but choosing a larger $\tilde d$ that possibly depends on the covariate dimension $p$ gives some flexibility of approximation.
For this reason, we treat $\tilde d$ as a generic positive integer, and we will show different choices in the numerical results.}} admits the form that is defined recursively as
\begin{align}
\label{eq:nn-architecture-0}
    g(x, u)= T_{L+1} \circ \bar{\sigma}_\ell \circ T_{L} \circ \cdots \circ T_1([x,u])
\end{align} Here $T_{l}(z) = W_l z + b_l: \mathbb{R}^{d_l} \to \mathbb{R}^{d_{l+1}}$ is a linear map with weight matrix $W_l \in \mathbb{R}^{d_{l}\times d_{l-1}}$ and bias vector $b_{l} \in \mathbb{R}^{d_{l}}$, where $(d_0,d_1\ldots, d_{L-1}, d_L, d_{L+1}) = (d, N, \ldots, N, N, 1)$, and $\bar{\sigma}_l: \mathbb{R}^{d_l} \to \mathbb{R}^{d_l}$ applies the ReLU activation $\sigma(\cdot)$ to each entry of a $d_l$-dimensional vector in $l\in [L]$. From the definition, we have $d = p + \tilde d$, where we recall $p$ is the covariate dimension. See an example of $L=3$, $N=7$, $d=4$ and $\tilde d=1$ in Fig. \ref{fig:snn-0}.

\begin{figure}
\begin{center}
\begin{tikzpicture}[scale=1.5]
\draw[gray] (-0.03, -0.33) rectangle +(0.24+0.03*2, 0.3*2+0.24+0.03*2);
\draw[gray] (0.12, 0.7) node {$x$};
\foreach \x in {-1, 0, 1}{
  \draw[black] (0, \x*0.3) rectangle +(0.24, 0.24);
}

\draw[myblue] (-0.03, -0.93+0.3) rectangle +(0.24+0.03*2, 0.24+0.03*2);
\draw[myblue] (0, -2*0.3) rectangle +(0.24, 0.24);
\draw[myblue] (0.12, -0.93-0.15+0.3) node {\myblue{$u$}};

\foreach \l in {1,2,3}{
  \foreach \x in {-3,-2,-1,0,1,2,3}{
    \draw[black] (\l, \x*0.3) rectangle +(0.24, 0.24);
  }
}

\draw[black] (4, 0) rectangle +(0.24, 0.24);

\foreach \xin in {-1,0,1,-2}{ 
  \foreach \y in {-3,-2,-1,0,1,2,3}{
    \draw[->, myred] (0+0.24, \xin*0.3+0.12) -- (1, \y*0.3+0.12);
  }
}

\foreach \l in {2,3}{
  \foreach \x in {-3,-2,-1,0,1,2,3}{
    \foreach \y in {-3,-2,-1,0,1,2,3}{
      \draw[->, myred] (\l-1+0.24, \x*0.3+0.12) -- (\l, \y*0.3+0.12);
    }
  }
}

\foreach \x in {-3,-2,-1,0,1,2,3}{
  \draw[->, myred] (3+0.24, \x*0.3+0.12) -- (4, 0*0.3+0.12);
}

\draw[black] (4+0.24+0.6, 0.4) node{\small $T_M(\cdot)$};
\draw[->] (4+0.24 + 0.1, 0.12) -- (4+0.24+1.1, 0.12);

\end{tikzpicture}
\end{center}
\caption{A visualization of the stochastic neural network: depth $L=3$, width $N=7$. It takes a $p=3$ covariate vector $x$ and a $\tilde d=1$ noise scalar $u$ as inputs to the first hidden layer.}
\label{fig:snn-0}
\end{figure}

\begin{definition}[Stochastic Deep ReLU Network Class]
\label{def:snn-0}
    Define the family of stochastic deep ReLU networks taking $d$-dimensional vector as input with depth $L$, ambient width $N$, noise dimension $\tilde d$, truncated by $M$ and weights bounded by $B$ as 
        \[\mathcal{H}_{\mathtt{nn}}(d, L, N, \tilde d, M,B) = \{\tilde{g}(x, u) = \mathrm{Tc}_M(g(x, u)): g(x, u) \text{ in } \eqref{eq:nn-architecture-0} \ \text{with} \ \|W_l\|_{\max}\lor \|b_l\|_{\max}\le B\},\]
        where $\mathrm{Tc}_M: \mathbb{R} \to \mathbb{R}$ is the truncation operator defined as $\mathrm{Tc}_M(z) = \min\{|z|, M\} \cdot \mathrm{sign}(z)$.
\end{definition}



\section{Method}
\label{sec:method}
\subsection{Distributional Regression Using Energy Loss}

Let $\mathcal{D}=\{(X_i, Y_i)\}_{i=1}^n$ be an i.i.d. sample from $\mu_0$. Suppose the noise vector $U$ is sampled from the distribution $\mu_u$, the $\mu_u$-population-level objective given the data can be written as
\begin{align}
\label{eq:loss0}
    \mathsf{R}_n(g) = \frac{1}{n} \sum_{i=1}^n \Big\{2\mathbb{E}_U[|g(X_i, U) - Y_i|] - \mathbb{E}_{U,U'}[|g(X_i, U) - g(X_i, U')|] \Big\},
\end{align} 
which is the energy distance \eqref{eq:goal}, after ignoring constant term independent of $g$.
As the form $g(X, U)$ is data generative and \eqref{eq:loss0} is similar to the regression, but minimizing the discrepancy in distribution,
we term it as “generative distributional regression”.
However, the exact calculation of $\mathbb{E}_U[|g(X_i, U) - Y_i|]$ and $\mathbb{E}_{U,U'}[|g(X_i, U) - g(X_i, U')|]$ are both infeasible for complicated functions $g$ like neural networks. We consider the following fully empirical-level objective. For given $m=(m_1, m_2) \in \{1,2,\ldots\} \times \{2, 3,\ldots\}$, let $\mathcal{U} = \{U_{b, i, k}\}_{b\in [m_1], i\in [n], k\in [m_2]}$ be an i.i.d. sample of $\mu_u$ that is also independent of $\mathcal{D}$, our neural distributional regression estimator minimizes the following objective
\begin{align}
\label{eq:loss-n}
\begin{split}
    \mathsf{R}_{n,m}(g) = \frac{1}{m_1} \sum_{b=1}^{m_1} \frac{1}{n} \sum_{i=1}^n \Bigg(&\frac{2}{m_2} \sum_{k=1}^{m_2} |Y_i - g(X_i, U_{b,i,k})| \\
    &~~~~~~~~ - \frac{1}{m_2(m_2-1)}\sum_{k\neq k'} |g(X_i, U_{b, i, k}) - g(X_i, U_{b, i, k'})|\Bigg)
\end{split}
\end{align}
over the stochastic deep ReLU network class $\mathcal{H}_{\mathtt{nn}}(d, L, N, \tilde d, M, B)$ for network architecture parameters $(L, N, \tilde d, M,B)$. It is also possible to replace the Monte Carlo simulation with a low-discrepancy quasi-Monte Carlo approximation, including grid point approximation.  We do not pursue this direction.

Here we disentangle the number of auxiliary samples $\mathcal{U}$ into two dimensions $m=(m_1,m_2)$ rather than the predecessors that uses $U$-statistics \citep{cherief2022finite,briol2019statistical} that consider $m=(1, m_2)$.   First of all, it involves a quadratic computation in terms of $m_2$.  The computation in terms of $m_1$ can easily be distributed in several batches, while the computation in terms of $m_2$ can not.  As shown later, the statistical rates depend on the numerical error, which is related to $m_1 m_2$.  Therefore, it is computationally fast to consider a large $m_1$ and a small $m_2 \geq 2$.


Minimizing the objective function \eqref{eq:loss-n} can be approximated via the following batched gradient descent algorithm (Algorithm \ref{algo1}) with the number of epochs $T=m_1$ and the number of noise samples per data $K=m_2$ to be chosen. To choose a small $m_1$, we adopt Algorithm \ref{algo2} as a variant of Algorithm \ref{algo1} that is deferred to Appendix \ref{sec:algo2}. Note that Algorithm \ref{algo2} adopts a small $m_1$ by reusing the simulated noise, and Algorithm \ref{algo1} is a special case when taking $m_1=T$. 

\begin{algorithm}
\caption{Batched Gradient Descent Training for Neural Distributional Regression}
\label{algo1}
\begin{algorithmic}[1]
\State \textbf{Hyper-parameter: } number of epoches $T$, batch size $B$, number of noise samples per data $K$, stepsize $\eta$.
\State \textbf{Input:} data $\{(X_i, Y_i)\}_{i=1}^n$
\State Initialize $\theta$ with random weights
\For {$t \in \{1,\ldots, T\}$} 
    \State Shuffled sample $\breve{\mathcal{D}} =  \{(\breve{X}_{i}, \breve{Y}_{i})\}_{i=1}^n = \{(X_{\pi(i)}, Y_{\pi(i)})\}_{i=1}^n$ by random permutation $\pi$ in $[n]$.
    \For {$\ell \in \{1,\ldots, \lfloor n/B\rfloor\}$}
        \State Get batch ${\mathcal{B}} = \{(\tilde{X}_{i}, \tilde{Y}_i)\}_{i=1}^{B} = \{(\breve{X}_{i+(\ell-1)B}, \breve{Y}_{i+(\ell-1)B})\}_{i=1}^{B}$.
        \State Sample noises $\{\tilde{U}_{i,k}\}_{i\in [B], k\in [K
        ]}$ i.i.d. from $\mu_u$.
        \State Update $\theta$ by gradient descent with stepsize $\eta$:
            \begin{align*}
                \small 
                \theta\leftarrow \theta-\eta
                \nabla_\theta \left[\frac{1}{B}\sum_{i=1}^B \left\{\frac{2}{K} \sum_{k=1}^K |\tilde{Y}_i-g_\theta(\tilde{X}_i,\tilde{U}_{i,k})| - \frac{\sum_{k\neq l} |g_\theta(\tilde{X}_i, \tilde{U}_{i,k}) - g_\theta(\tilde{X}_i, \tilde{U}_{i, l})|}{K(K-1)}\right\}\right]
            \end{align*}
    \EndFor
\EndFor
\State \textbf{Output:} distributional regression estimate $g_\theta$. 
\end{algorithmic}
\end{algorithm}

Our neural distributional regression estimator approximately optimizes the objective \eqref{eq:loss-n}. Specifically, we return $\hat g$ such that
\begin{align}
\label{eq:estimator}
    \mathsf{R}_{n,m}(\hat{g}) \le \inf_{g\in \mathcal{H}_{\mathtt{nn}}(d, L, N, \tilde d, B)} \mathsf{R}_{n,m}(g) + \delta_{\mathtt{opt}}
\end{align} with some optimization error $\delta_{\mathtt{opt}}$.

\subsection{Downstream Estimation via Sampling}
\label{sec:downstream-estimation}

Given that our distributional regression estimator can learn the conditional distribution of $Y$ given $X$ using observations, many downstream estimation and inference problems can be further done via first sampling from our distributional regression estimator $\hat{g}$ in \eqref{eq:estimator} followed by straightforward calculation, which includes point estimations, prediction band construction, conditional density estimation, and conditional score function estimation. Let $U_1,\ldots U_k\overset{i.i.d.}{\sim} \mu_u$ as generated i.i.d. noises used in all the following, where $k$ is the user-defined number of noise samples in evaluation.

\noindent\textbf{Conditional Moments.} For given $x \in \mathbb{R}^p$, one can estimate the conditional expectation  $m_{w}^\star(x)=\mathbb{E}[w(Y)|X=x]$ for a given function $w$.  This can be estimated by $\mathbb{E}_U [w(\hat{g}(x, U))]$ and computed by generating a sufficiently large sample size $k$ and taking the generated sample mean:
\begin{align}
\label{eq:cm-est}
    \hat{m}_w(x) = \frac{1}{k} \sum_{l=1}^k w(\hat{g}(x, U_l)).
\end{align}
It can also be implemented by numerical integration with the midpoint method. 

\noindent\textbf{Conditional Quantile.} One may also be interested in estimating the conditional quantile of $Y$ given $X=x$, which can also be viewed as a prediction interval construction. Let $Q^\star(\alpha|x)$ be the conditional $\alpha$-quantile of $Y$, given by
\begin{align}
\label{eq:conditional-quantile}
    Q^\star(\alpha|x) = \inf_{q} \left\{\mathbb{P}_{(X,Y) \sim \mu_0}(Y \le q|X=x) \ge \alpha \right\}.
\end{align}
We use the following empirical quantile estimator based on generated samples of $\hat{g}(x, U)$ to estimate $Q^{\star}(\alpha|x)$, namely,
\begin{align}
\label{eq:cq-est}
\hat Q(\alpha|x)=\inf_{t}\Big\{t:\frac{1}{k}\sum_{l=1}^k\indicator{\hat g(x,U_l)\le t}\ge \alpha\Big\}.
\end{align}

\noindent\textbf{Prediction Interval. }
One may also be interested in constructing a prediction interval that ensures both good coverage and small average width. 
Here, we can use the generated samples to build a more straightforward and covariate-adaptive prediction band. For a pre-determined confidence level $\alpha \in (0,1)$, we first calculate the ordered statistics $\{\hat{Y}_{(l)}(x)\}_{l=1}^k$ of $\{\hat{g}(x, U_l)\}_{l=1}^k$, and let 
\begin{align}
\label{eq:pb-left-def}
\hat{l}(x)=\arg\min_{1\le l\le k-\lceil \alpha k\rceil}\left(\hat{Y}_{(l+\lceil \alpha k\rceil)}(x)-\hat{Y}_{(l)}(x)\right),
\end{align}
the shortest interval with empirical coverage $\alpha$.  Then, the prediction band is 
\begin{align}
\label{eq:pb-est}
\hat{\mathrm{PI}}_{\alpha}(x):=\left[\hat{Y}_{\hat{l}(x)}(x),\hat{Y}_{\hat{l}(x)+\lceil (1-\alpha) k\rceil}(x)\right].
\end{align}
The minimum in \eqref{eq:pb-left-def} is to reduce the length of the interval.  The prediction intervals can also be
constructed by using the traditional interval $ \left[\hat{Y}_{\lfloor \alpha k/2\rfloor}(x),\hat{Y}_{\lceil (1-\alpha/2) k\rceil}(x)\right]$.

\noindent\textbf{Conditional Density Function Estimation.} 
One can also build a conditional density estimator to estimate the conditional density $p^\star(y|x)$ on top of our learned conditional sampler $\hat{g}(x, \cdot)$. For a given smooth hyper-parameter $m \in \mathbb{N}^+$ and bandwidth $h$, let $K: \mathbb{R}\to \mathbb{R}$ be an $m$-order kernel function satisfying
\begin{align}
K(0)=1, \qquad  \int s^j K(s)ds=0 ~~ \forall j\in [m-1], \qquad \text{and} ~~ \int |s^m K(s)|ds< \infty. 
\end{align} 
Typical choices of second-order kernel are uniform,  Epanechnikov and Gaussian kernels. The kernel conditional density estimator for given $x$ is given by 
\begin{align}
\label{eq:cd-est}
\hat p(y|x)=\frac{1}{kh}\sum_{l=1}^k K\left(\frac{y - \hat g(x,U_l)}{h}\right).
\end{align}

\noindent\textbf{Conditional Score Function Estimation.} The conditional score function 
\begin{align*}
    s^\star(y|x) = \frac{d}{dy} \left[\log p^\star(y|x) \right] = \frac{d}{dy} \left[ p^\star(y|x) \right] \Big/ p^\star(y|x) 
\end{align*} is also of interest, for example, constructing a double robust estimation of the average partial effect \citep{klyne2023average}.   It also appears in data generation using diffusion models \citep{rezende2015flows,ho2020denoising}.  
Following the notation in the estimation of the density function, we estimate $s^\star(y|x)$ by
\begin{align}
\label{eq:cs-est}
\hat s(y|x)= \left\{\frac{1}{kh^2}\sum_{l=1}^k \tilde K'\left(\frac{y - \hat g(x,U_l)}{h}\right)\right\} \Big/ \max\left\{\frac{1}{kh}\sum_{l=1}^k K\left(\frac{y - \hat g(x,U_l)}{h}\right),c_d\right\}
\end{align}
where $\tilde K$ is an $(m-1)$-order kernel function, and $c_d$ is a constant used to bound the denominator from below.

All the above discussions in the downstream estimation are for fixed $x$. For evaluations on multiple $x$ values, the generated noises $\{U_l\}_{l=1}^k$ should be resampled for different $x$. 

\section{Theoretical Analysis}
\label{sec:theory}

Let $F^\star(x, y) = \mathbb{P}_{\mu_0}(Y\le y|X=x)$ be the conditional CDF of $Y$ given $X=x$. For a stochastic neural network $g$ taking covariate $x$ and noise $U\sim \mu_u$ as input, we define the induced conditional CDF $F_{g}(x, y)$ as
\begin{align}
\label{eq:induced-CDF}
    F_g(x, y) = \mathbb{E}_{U\sim \mu_u}\left[ \indicator{g(x, U) \le y}\right].
\end{align}
For two conditional CDFs $F(x, y)$ and $G(x, y)$, we define the following $L_2$-type metric
\begin{align}
\label{eq:l2-norm}
    \|F-G\|_2 = \sqrt{\int \left(\int_{-\infty}^\infty (F(x, t) - G(x, t))^2 dt \right) \mu_{0,x}(dx)}.
\end{align}
Similar to the identity \eqref{eq:energy-stat-ident}, the population-level counterpart of \eqref{eq:loss0} has the following identity
\begin{align}
\label{eq:identity-ccdf}
    \mathbb{E}\left[2|Y - g(X, U)| - |g(X, U) - g(X, U')|\right] = 2\|F_g - F^\star\|_2^2 + C_{\mu_0},
\end{align} where $(X,Y)$ are sampled from $\mu_0$, $(U, U')$ are independently drawn from $\mu_u$ that are also independent of $(X,Y)$, $C_{\mu_0}$ is a constant dependent on $\mu_0$ but independent of $g$; see a proof in Appendix \ref{proof:identity}. This means that minimizing the energy loss is equivalent to minimizing the $L_2$ norm $\|F_g-F^\star\|_2$ defined in \eqref{eq:l2-norm} in population, and one can also establish the error bound on $\|F_{\hat{g}} - F^\star\|_2$ for the empirical-level minimizer $\hat{g}$ in \eqref{eq:estimator}. 


\subsection{Main Non-Asymptotic Result}
We introduce some standard conditions.
\begin{condition}[Regularity]
\label{cond:bounded} There exists some constant $c_0>0$ such that
\begin{itemize}
\item[(a)] \underline{(Data Generating Process)} $\{(X_i, Y_i)\}_{i=1}^n$ are i.i.d. drawn from $\mu_0$.

\item[(b)] \underline{(Boundedness)} $X$ is supported on $[-c_0,c_0]^p$ and $|Y|\le c_0$ $\mu_0$-a.s. 

\item[(c)] \underline{(Regularity for Neural Networks)} $\min\{L, n\}\ge 3$, $N\ge \tilde d + d + 2$, and $M = c_0$. In addition, $\log B,\log (NL)\lesssim\log n$.
\end{itemize}
\end{condition}


Condition \ref{cond:bounded} (a)--(b) are standard in nonparametric statistics to facilitate technical proofs. It is also easy to replace the boundedness condition (b) by sub-Gaussianity of $(X,Y)$ at the cost of introducing $\mathrm{polylog}(n)$ factors in the error bound. Condition \ref{cond:bounded} (c) is imposed to simplify the presentation of the error bounds. Here $M$ can be larger than $c_0$: one can choose the truncation parameter at $\mathrm{polylog}(n)$ rate and it will not affect the result up to poly-log factors. 

Letting $t>0$ be the logarithm of failure probability, we define the following error quantities:
\begin{align}
\delta_{\mathtt{a}} = \inf_{g\in \mathcal{H}_{\mathtt{nn}}(d, L, N, \tilde d, B)} \|F_g - F^\star\|_2^2 \qquad \text{and} \qquad \delta_{\mathtt{s}} = \frac{(NL)^2 \log (n) + t}{n}.
\label{eq:nn-error}
\end{align}
The first quantity measures the approximation error of the stochastic neural networks to the ground truth conditional distribution $F^\star$ with respect to the metric defined in \eqref{eq:l2-norm}, and the second quantity is the standard stochastic error. 

We have the following oracle-type inequality regardless of the structure of $F$. It shows that the error bound is the sum of the approximation error $\delta_{\mathtt{a}}$, the stochastic error $\delta_{\mathtt{s}}$, and the product of $\sqrt{\delta_{\mathtt{s}}}$ and the parametric rate in terms of $m_1m_2$. The dependency on the first two terms is similar to running a standard regression. It also characterizes the dependency on the number of auxiliary noise samples $m=(m_1,m_2)$.

\begin{theorem}
\label{thm:oracle-n-infty}
Assume Condition \ref{cond:bounded} holds. For implicitly estimated conditional distribution function $F_{\hat{g}}$, there exists some constant $C$ dependent only on $c_0$ such that for any $t>0$ and $n\ge 3$, 
\begin{align}
\label{eq:est-error-ndre}
    \|F_{\hat{g}} - F^\star\|^2_2/{C} \le \delta_{\mathtt{a}} + \delta_{\mathtt{s}} + \sqrt{\delta_{\mathtt{s}}} \sqrt{\frac{1}{m_1 m_2}}+  \delta_{\mathtt{opt}} := \delta_{\mathtt{NDRE}}
\end{align} with probability at least $1-Ce^{-t}$.
\end{theorem}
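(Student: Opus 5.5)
We follow the standard ERM oracle-inequality route, with a localized analysis that exploits a variance–excess-risk relation for the energy loss.

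\textbf{Step 1 (excess loss and its mean).} Fix $\bar g\in\mathcal{H}_{\mathtt{nn}}$ with $\|F_{\bar g}-F^\star\|_2^2\le 2\delta_{\mathtt{a}}$. For each $g$, define the per-sample excess loss
\[
\ell_g(X_i,Y_i,\mathbf U_{b,i})=2\cdot\tfrac{1}{m_2}\sum_k|Y_i-g(X_i,U_{b,i,k})|-\tfrac{1}{m_2(m_2-1)}\sum_{k\ne k'}|g(X_i,U_{b,i,k})-g(X_i,U_{b,i,k'})|,
\]
minus the same expression with $g$ replaced by $\bar g$. By the identity \eqref{eq:identity-ccdf}, whose $U$-statistic form is unbiased, we have $\mathbb{E}\ell_g=2(\|F_g-F^\star\|_2^2-\|F_{\bar g}-F^\star\|_2^2)$. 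The defining inequality \eqref{eq:estimator} then gives
\[
\mathsf{R}_{n,m}(\hat g)-\mathsf{R}_{n,m}(\bar g)\le\delta_{\mathtt{opt}}.
\]
It therefore suffices to control, uniformly in $g$, the deviation between the empirical average of $\ell_g$ over the $nm_1$ independent blocks $(X_i,Y_i,\mathbf U_{b,i})$ and its mean.

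\textbf{Step 2 (split the fluctuation into two parts).} Write $\ell_g-\mathbb{E}\ell_g=A_g+B_g$, where
\[
A_g=\mathbb{E}_U[\ell_g\mid X,Y]-\mathbb{E}\ell_g,
\]
which is a function of the data only, and $B_g$ is the conditional fluctuation due to the noise.

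For $A_g$, use the one-dimensional identity $\mathbb{E}_U[2|y-g(x,U)|-|g(x,U)-g(x,U')|]=2\int(F_g(x,t)-\indicator{y\le t})^2dt+\text{const}(y)$. This shows $\mathbb{E}_U[\ell_g\mid X,Y]$ equals
\[
2\int\bigl\{(F_g-F_{\bar g})(F_g+F_{\bar g}-2\indicator{Y\le t})\bigr\}dt,
\]
so its variance is $\lesssim\int\int|F_g-F_{\bar g}|\,dt\,d\mu\lesssim\|F_g-F_{\bar g}\|_2$, using boundedness on $[-c_0,c_0]$ and Cauchy–Schwarz.

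This first-power bound is too weak on its own. The key step is a Bernstein condition, $\mathrm{Var}\lesssim\|F_g-F_{\bar g}\|_2^2$. I would obtain it by centering with $F^\star$ conditionally: the term $F_g+F_{\bar g}-2\indicator{Y\le t}$ has conditional mean $F_g+F_{\bar g}-2F^\star$, and the product of $(F_g-F_{\bar g})$ with the centered indicator has second moment bounded via $\mathbb{E}[(\int(F_g-F_{\bar g})(\indicator{Y\le t}-F^\star)dt)^2]\le 2c_0\int(F_g-F_{\bar g})^2dt$, again by Cauchy–Schwarz on the bounded support.

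The class $\{\mathbb{E}_U\ell_g\}$ is Lipschitz in $g$ in sup norm, so its covering number is that of ReLU networks, with pseudo-dimension $\lesssim (NL)^2\log n$ as in the standard bounds under Condition~\ref{cond:bounded}(c). A localized Talagrand/Bernstein peeling argument then yields $|\frac1n\sum A_g|\le \epsilon\|F_g-F_{\bar g}\|_2^2+C\delta_{\mathtt{s}}$.

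\textbf{Step 3 (noise term; expected main obstacle).} Conditionally on the data, $B_g$ averages $nm_1$ independent bounded terms, each a $U$-statistic of order $2$ in $m_2$ noises. Its conditional variance is at most $\lesssim 1/m_2$ per block: for the degenerate part, using $m_2\ge2$ suffices and no higher-order decay is needed. Crucially, this variance does not shrink with $\|F_g-F_{\bar g}\|$, because $g$ and $\bar g$ may differ pointwise even when their CDFs agree.

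Applying Bernstein with the same entropy $(NL)^2\log n+t$ over $nm_1$ blocks gives
\[
\sup_g|B_g|\lesssim\sqrt{\frac{(NL)^2\log n+t}{nm_1m_2}}+\frac{(NL)^2\log n+t}{nm_1}\le\sqrt{\delta_{\mathtt{s}}/(m_1m_2)}+\delta_{\mathtt{s}}.
\]
Handling the $U$-statistic structure inside the chaining, via Hoeffding decomposition into linear and degenerate parts with a union bound over a sup-norm cover, is the most delicate part. I expect it to be the main obstacle, since the degenerate part must be bounded without losing the $1/\sqrt{m_1m_2}$ factor.

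\textbf{Step 4 (combine).} The inequalities give
\[
2\|F_{\hat g}-F^\star\|_2^2\le 2\|F_{\bar g}-F^\star\|_2^2+\epsilon\|F_{\hat g}-F_{\bar g}\|_2^2+C\bigl(\delta_{\mathtt{s}}+\sqrt{\delta_{\mathtt{s}}/(m_1m_2)}\bigr)+\delta_{\mathtt{opt}}.
\]
Use $\|F_{\hat g}-F_{\bar g}\|^2\le2\|F_{\hat g}-F^\star\|^2+2\|F_{\bar g}-F^\star\|^2$, choose $\epsilon$ small so the first piece is absorbed into the left-hand side, and conclude \eqref{eq:est-error-ndre} with probability at least $1-Ce^{-t}$.
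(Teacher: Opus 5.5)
Your proposal is correct and follows essentially the paper's route. Your $A_g$ and $B_g$ are the paper's $\Delta_n(g,\bar g)$ and $\chi_n(g)-\chi_n(\bar g)$. The data part is localized with the same Cauchy--Schwarz variance bound $\lesssim c_0\|F_g-F_{\bar g}\|_2^2$, the Monte Carlo part is handled non-locally by Bernstein with variance of order $1/(m_1m_2)$ per observation, and the pieces are combined through the basic inequality and absorption as in the paper.

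The step you flag as the main obstacle is routine. For every noise realization the loss is deterministically $4$-Lipschitz in $g$ under $\|\cdot\|_\infty$. So Bernstein over the $nm_1$ conditionally independent blocks, each of variance $O(1/m_2)$, plus a union bound over a single sup-norm net at scale $1/n$ already gives $\sqrt{\delta_{\mathtt{s}}/(m_1m_2)}+\delta_{\mathtt{s}}$. No Hoeffding decomposition is needed; the paper instead uses multi-scale chaining with Hoeffding increments.
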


\begin{remark}
This theorem shows that the stochastic error originated from $U_{b,i,k}$ can be negligible when $m_1m_2\gtrsim \frac{1}{\delta_s}$, and we get the usual oracle-type inequality in nonparametric estimation. The computation burden is low when we take $m_2\ge 2$ to be small, as $m_1$ can be inherently regarded as the number of epochs $T$ in running SGD; see the definition of $T$ in Algorithm \ref{algo1}. Notably, the optimal convergence can be attained by choosing a large $m_1$ even when $m_2=2$. 
\end{remark}

\begin{remark}
The results hold for more kernel MMD with mild smoothness besides the first-order Sobolev kernel. We leave the details in Appendix~\ref{sec:general-kernel}.
\end{remark}

\begin{remark}
We also conduct an experiment to empirically corroborate the theoretical findings. 
Figure~\ref{fig:error-decay} depicts the empirical relationship between the CDF error $\|F_{\hat g}-F^*\|_2$ and the product $m_1 m_2$. 
The data are generated from\vspace{-1em}
\[
Y = \sin(3X) + \bigl(1 - 0.7\cos(4X)\bigr)U, 
\qquad U \sim \mathcal{U}[-1,1],
\]
with the neural network architecture specified as $(L, N, \tilde d) = (3, 100, 1)$. 
The CDF error is computed via numerical integration using the midpoint rule over a 2000-point uniform grid. 
Each curve, shown in different colors, corresponds to a distinct choice of $m_2$, while the horizontal axis represents the product $m_1 m_2$. The shaded bands indicate the standard errors across repeated trials.
As observed in Figure~\ref{fig:error-decay}, the error decays approximately polynomially with $m_1 m_2$ at first and subsequently levels off.

\end{remark}

\begin{figure}[h]
	\centering
	\includegraphics[width=0.5\linewidth]{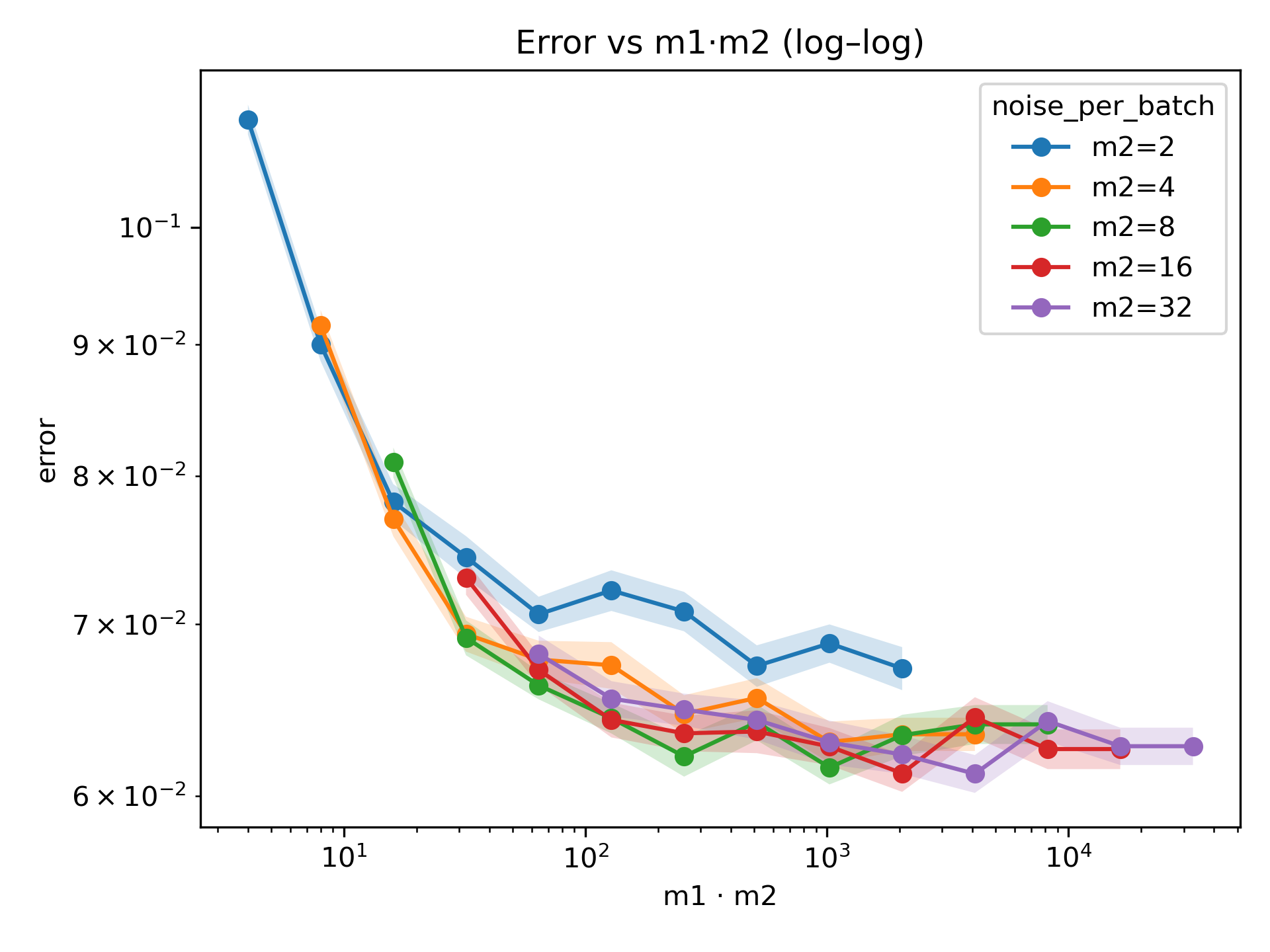}
	\caption{Empirical validation on a log--log plot with sample size $n = 2000$, averaged over 100 trials. 
		The x-axis represents the product $m_1 m_2$, and the y-axis shows the CDF error $\|F_{\hat g} - F^*\|_2$. 
		Each colored curve depicts the mean error across 100 trials for a given $m_2$, with shaded bands indicating the corresponding standard errors. Specifically, blue, orange, green, red, purple curves correspond to $m_2=2,4,8,16,32$, respectively.}
	\label{fig:error-decay}
\end{figure}

\subsection{Adapting to Low-dimensional Structures}
In this section, we establish the explicit $L_2$ estimation error on $\|F_{\hat{g}} - F^\star\|_2$ when the conditional quantile function $Q^\star(\alpha|x)$ defined in \eqref{eq:conditional-quantile} possesses low-dimensional structures. To emphasize that $Q^\star(\alpha|x)$ is an $(p+1)$-dimensional function that takes $x \in \mathbb{R}^p$ and $\alpha \in \mathbb{R}$ as input, we also write it as $Q^\star(x, \alpha)$ in this section. 

\begin{definition}[$(\beta,C)$-smooth Function]
Let $\beta=r+s$ for some nonnegative integer $r$ and $0<s\le 1$, and $C>0$. A $d$-variate function is $(\beta,C)$-smooth if for every $\balpha\in \NN^d$ such that $\sum_{i=1}^d \alpha_i=r$, the partial derivative $\partial^r f/\partial x_1^{\alpha_1}\cdots x_d^{\alpha_d}$ exists and satisfies 
\begin{align*}
\Big|\frac{\partial^r f}{\partial x_1^{\alpha_1}\cdots x_d^{\alpha_d}}(\bx)-\frac{\partial^r f}{\partial x_1^{\alpha_1}\cdots x_d^{\alpha_d}}(\bz)\Big|\le C\|\bx-\bz\|^s_2, \text{\quad for  every\ } \bx,\bz
\end{align*}
\end{definition}

Neural networks are well-known for their ability to be adaptive to the low-dimensional compositional structures in nonparametric regression \citep{kohler2021rate,fan2022factor} without the supervision of the precise composition structure. We introduce the idea of the hierarchical composition model, which is the composition of $t$-variate $(\beta, C)$ smooth functions with $(\beta, t) \in \mathcal{P}$.

\begin{definition}[Hierarchical composition model]
\label{hcm}
    We define function class of hierarchical composition model $\mathcal{H}(d, l, \mathcal{P})$ \citep{kohler2021rate} with $l, d \in \mathbb{N}^+$, $C\in \mathbb{R}^+$, and $\mathcal{P}$, a subset of $[1,\infty) \times \mathbb{N}^+$, in a recursive way as follows~\footnote{We omit the dependency on $C$ for simplicity.}. Let $\mathcal{H}(d, 0,\mathcal{P})=\{h(x)=x_j, j\in [d]\}$, and for each $l\ge 1$, \vspace{-1em}
    \begin{align*}
    \mathcal{H}(d, l,\mathcal{P}) = \big\{&h: \mathbb{R}^d \to \mathbb{R}: h(x) = g(f_1(x),...,f_t(x))\text{, where} \\
    &~~~~~ g \text{ is } (\beta,C) \text{ smooth} \text{ with } (\beta, t)\in \mathcal{P} \text{ and } f_i \in \mathcal{H}(d, l-1,\mathcal{P})\big\}.
\end{align*} 
For a hierarchical composition model, we define the following dimension-adjusted degree of smoothness \citep{fan2022factor} that measures the hardest component in the composition:\vspace{-1em}
\[\gamma^\star_{\calH(d,l,\calP)}=\min_{(\beta,t)\in \calP}\frac{\beta}{t},\]
when clear from the context, we abbreviate it as $\gamma^\star$.
\end{definition}

\begin{condition}[Function Complexity]
\label{asp:conditional-quantile-HCM}
$Q^\star(x,\alpha)\in \calH(p+1,l,\calP)$\footnote{Here we view $Q^\star$ as a joint function of $(x,\alpha)$.}. Moreover, there exists some constant $c_1>0$ such that $|F^\star(x,y_1)-F^\star(x,y_2)|\le c_1 |y_1-y_2|$ for every $x \in [0,1]^p$ and $y_1,y_2 \in [-1, 1]$.
\end{condition}

Under Condition~\ref{asp:conditional-quantile-HCM}, we have the following corollary. Notably, no assumption is imposed on the distribution of covariates $x$.
\begin{corollary}
\label{cor:final-rate}
Let $\tilde d \ge 1$. Let $\mu_u$ be a distribution with bounded support whose c.d.f. {$F_0$} is $(\gamma^{\star},C)$ smooth. Suppose Conditions \ref{cond:bounded} and \ref{asp:conditional-quantile-HCM} hold. Also take $NL\asymp n^{\frac{1}{4\gamma^{\star}+2}}(\log n)^{\frac{4\gamma^{\star}-1}{2\gamma^{\star}+1}}$, $m_1 m_2\gtrsim n^{\frac{2\gamma^{\star}}{2\gamma^{\star}+1}}$. Then there exists some universal constant $C$ such that with probability at least $1-Ce^{-t}$,
\begin{align*}
{\|F_{\hat{g}} - F^\star\|^2_2}/{C}\le n^{-\frac{2\gamma^{\star}}{2\gamma^{\star}+1}}(\log n)^{\frac{12\gamma^{\star}}{2\gamma^{\star}+1}}+\delta_{\mathtt{opt}}+\frac{t}{n}.
\end{align*}
\end{corollary}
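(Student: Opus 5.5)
The corollary follows from Theorem~\ref{thm:oracle-n-infty} once the approximation error $\delta_{\mathtt{a}}$ is controlled for the stated network size; the remaining terms are then balanced by the choices of $NL$ and $m_1m_2$.

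\textbf{Step 1: reduce the CDF distance to quantile approximation.} Take $\tilde d=1$; for $\tilde d>1$, simply ignore the extra noise coordinates. Since $F_0$ is continuous, $V=F_0(U)$ is uniform on $[0,1]$. Then $Q^\star(x,F_0(U))$ has conditional law $F^\star(x,\cdot)$. For any $g$, couple $g(x,U)$ with $Q^\star(x,F_0(U))$ through the same $U$. In one dimension,
\[
\int\bigl(F_g(x,t)-F^\star(x,t)\bigr)^2dt\le \sup_t|F_g-F^\star|\cdot\int|F_g-F^\star|\,dt .
\]
The factor $\int|F_g-F^\star|\,dt=W_1$ is at most $\mathbb{E}_U|g(x,U)-Q^\star(x,F_0(U))|$. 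The sup factor is bounded by $1$. Alternatively, the Lipschitz condition on $F^\star$ in Condition~\ref{asp:conditional-quantile-HCM} gives $\sup_t|F_g-F^\star|\lesssim \|g-Q^\star\circ F_0\|_\infty$. Either way,
\[
\delta_{\mathtt{a}}\lesssim \|g(\cdot,\cdot)-Q^\star(\cdot,F_0(\cdot))\|_\infty^{2}.
\]
This is the sharp version, using $W_1$ times Kolmogorov distance. I would use it because bounding by $W_1$ alone would lose a square and give the wrong rate.

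\textbf{Step 2: approximate the target by a deep ReLU network.} The target $(x,u)\mapsto Q^\star(x,F_0(u))$ is again a hierarchical composition model. It adds one layer, the univariate map $F_0$, which is $(\gamma^\star,C)$-smooth with $t=1$. This is exactly why $F_0$ is assumed $\gamma^\star$-smooth: the composite keeps dimension-adjusted smoothness $\gamma^\star$ on the compact domain $[-c_0,c_0]^p\times\operatorname{supp}\mu_u$.

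I would then invoke the known ReLU approximation results for hierarchical composition models (Kohler--Langer; Fan--Gu type). These give sup-norm error
\[
(NL)^{-2\gamma^\star}(\log)^{c}
\]
with width $N$, depth $L$, and weights bounded polynomially, consistent with Condition~\ref{cond:bounded}(c). Applying the truncation $\mathrm{Tc}_M$ with $M=c_0$ only reduces the error, since $|Q^\star|\le c_0$. Hence $\delta_{\mathtt{a}}\lesssim (NL)^{-4\gamma^\star}\mathrm{polylog}(n)$.

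\textbf{Step 3: balance the terms.} With $NL\asymp n^{1/(4\gamma^\star+2)}(\log n)^{a}$ we get
\[
\delta_{\mathtt{s}}\asymp n^{-2\gamma^\star/(2\gamma^\star+1)}(\log n)^{2a+1}+\frac{t}{n},
\]
and $\delta_{\mathtt{a}}$ has matching order in $n$. The cross term satisfies $\sqrt{\delta_{\mathtt{s}}}\sqrt{1/(m_1m_2)}\le \delta_{\mathtt{s}}$ whenever $m_1m_2\gtrsim 1/\delta_{\mathtt{s}}$, which is implied by the stated $m_1m_2\gtrsim n^{2\gamma^\star/(2\gamma^\star+1)}$. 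Tracking the log exponents through the approximation theorem should yield the stated power $12\gamma^\star/(2\gamma^\star+1)$.

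\textbf{Main obstacle.} The hard part is Step 2's exponent $-2\gamma^\star$ in $NL$. It requires the "deep" approximation rate, which is in terms of width times depth. It also requires verifying that composing with $F_0$, and feeding the noise in the first layer, fits the hierarchical composition model framework while keeping the weight bound $\log B\lesssim\log n$. I would handle this by first approximating $F_0$ with a narrow subnetwork and then concatenating it with the approximant of $Q^\star$.
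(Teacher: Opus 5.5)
Your proposal is correct and follows the paper's proof. The paper likewise sets $g^\star(x,u)=Q^\star(x,F_0(u))$, observes that it is a hierarchical composition model with the same $\gamma^\star$, and invokes a deep ReLU approximation theorem giving $\|g-g^\star\|_\infty\le (NL/\log(NL))^{-2\gamma^\star}$. It converts this, via $\|Q-Q^\star\|_\infty\le\|g-g^\star\|_\infty$ and the Lipschitz property of $F^\star$, into $\|F_g-F^\star\|_\infty\lesssim\|g-g^\star\|_\infty$ (equivalent to your sandwich/Kolmogorov bound), and plugs $\delta_{\mathtt a}\lesssim (NL/\log(NL))^{-4\gamma^\star}$ into Theorem~\ref{thm:oracle-n-infty}. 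One small slip: bounding the sup factor by $1$ only gives $\delta_{\mathtt a}\lesssim\|g-g^\star\|_\infty$, not its square, so ``either way'' should read ``using the Lipschitz bound,'' as you note yourself right afterward.
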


As a specific case, consider the location-scale model as follows
\vspace{-1em}
\begin{align*}
    Y = f(X) + s(X) \cdot \varepsilon ~~\text{with}~~ \varepsilon \sim F_\varepsilon,
\end{align*} here $F_\varepsilon$ is the CDF for the noise $\varepsilon$. It is easy to see that the conditional CDF is a composition of $f(\cdot), s(\cdot)$ and $F_\varepsilon$, that is, $F^\star(x,y)=F_\varepsilon(\frac{y-f(x)}{s(x)})$. Define $Q_\varepsilon(\cdot)$ to be the quantile function of $\varepsilon$. We have $Q^\star(x,\alpha)=f(x)+s(x)Q_\varepsilon(\alpha)$.

\begin{assumption}
\label{asp:HCM-location-scale}
Suppose $f(x), s(x)\in \calH(p,l_1,\calP_1)$ with $\gamma^\star_{\calH(p,l_1,\calP_1)}=\gamma^\star_1$. And $Q_\varepsilon$ is $(\beta_1,C_1)$-smooth.
\end{assumption}

\begin{corollary}
Under the setting of Corollary \ref{cor:final-rate}, if Assumption~\ref{asp:HCM-location-scale} further holds, the conclusion of Corollary \ref{cor:final-rate} holds with $\gamma^{\star}=\min\{\gamma^{\star}_1,\beta_1\}$.
\end{corollary}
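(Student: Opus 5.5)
The plan is to reduce this corollary to Corollary \ref{cor:final-rate}. Its only structural hypothesis is Condition \ref{asp:conditional-quantile-HCM}, so I need to show that $Q^\star(x,\alpha)=f(x)+s(x)Q_\varepsilon(\alpha)$ lies in a hierarchical composition class $\calH(p+1,l,\calP)$ whose dimension-adjusted smoothness satisfies $\gamma^\star_{\calH(p+1,l,\calP)}\ge \min\{\gamma^\star_1,\beta_1\}$. The Lipschitz part of Condition \ref{asp:conditional-quantile-HCM} and the other hypotheses are inherited from ``the setting of Corollary \ref{cor:final-rate}''. Once the membership is shown, the rate follows by replacing $\gamma^\star$ in that corollary by $\min\{\gamma^\star_1,\beta_1\}$. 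This is legitimate because the bound is monotone in $\gamma^\star$, and the required smoothness of $F_0$ and the choice of $NL$ and $m_1m_2$ are stated in terms of this same $\gamma^\star$.

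\textbf{Step 1: lift each piece to $p+1$ variables.} The components $f$ and $s$ belong to $\calH(p,l_1,\calP_1)$. I will view them as functions of $(x,\alpha)\in\mathbb{R}^{p+1}$ that ignore $\alpha$. This embeds them in $\calH(p+1,l_1,\calP_1)$, since the level-$0$ coordinate projections $x_j$ are still available and nothing else changes. Similarly, $(x,\alpha)\mapsto Q_\varepsilon(\alpha)$ is a level-$1$ element of $\calH(p+1,1,\{(\beta_1,1)\})$, obtained by composing the $(\beta_1,C_1)$-smooth univariate $Q_\varepsilon$ with the projection onto $\alpha$.

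\textbf{Step 2: equalize depths.} The levels $l_1$ and $1$ differ. I will pad $Q_\varepsilon(\alpha)$ to level $l_1$ by composing with identity maps $h\mapsto h$. The identity is $(\beta,C)$-smooth for every $\beta$, so these padding steps can be assigned the pair $(\beta,1)$ with $\beta$ large, for instance $\beta\ge\gamma^\star_1$. Such pairs do not lower the minimum of $\beta/t$. Only bounded domains matter here, since $x$ and $\alpha$ are bounded, so smoothness constants remain finite.

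\textbf{Step 3: add the outer layer.} Take the trivariate function $g(a,b,c)=a+bc$. It is a polynomial, hence $(\beta,C)$-smooth on bounded sets for any $\beta$. Choosing the pair $(3\beta',3)$ with $\beta'\ge \max\{\gamma^\star_1,\beta_1\}$ gives ratio $\beta'\ge\min\{\gamma^\star_1,\beta_1\}$. Then
\[
Q^\star = g\bigl(f,\,s,\,Q_\varepsilon\circ\pi_\alpha\bigr)\in\calH\bigl(p+1,\,l_1+1,\,\calP\bigr),
\]
where $\calP=\calP_1\cup\{(\beta_1,1)\}\cup\{\text{padding pairs},(3\beta',3)\}$. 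This gives $\gamma^\star=\min\{\gamma^\star_1,\beta_1\}$.

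\textbf{Main obstacle.} The main technical point is bookkeeping. Definition \ref{hcm} allows a single $\calP$ shared across all levels, and compositions must have equal depth. The padding and embedding arguments handle this, but I must check that the inserted smooth maps carry pairs with $\beta/t$ no smaller than the target, and that bounded-range issues do not spoil the Hölder constants. A second point to check is whether Corollary \ref{cor:final-rate}'s requirement that $F_0$ be $(\gamma^\star,C)$-smooth is satisfied with the new, possibly smaller, $\gamma^\star$. It is, because a $(\gamma,C)$-smooth function on a bounded set is also $(\gamma',C')$-smooth for $\gamma'\le\gamma$ and $\gamma'$ in the same integer band. If $\gamma'$ falls in a lower band, I instead read the corollary's hypothesis as being stated at the new $\gamma^\star$ directly.
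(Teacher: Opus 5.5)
Your proposal is correct and follows the paper's proof. The paper likewise writes $Q^\star(x,\alpha)=h(f(x),s(x),Q_\varepsilon(\alpha))$ with $h(f_1,f_2,f_3)=f_1+f_2f_3$, gives $h$ a smoothness $\beta$ large enough that $\beta/3>\min\{\gamma^\star_1,\beta_1\}$, adds the pair for $Q_\varepsilon$ to $\calP_1$, and invokes Corollary \ref{cor:final-rate}; your lifting of $f,s$ to $p+1$ variables and your identity-padding to equalize depths are bookkeeping the paper silently skips, so they only make the same argument more complete.
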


\begin{proof}
Define $h(f_1,f_2,f_3)=f_1+f_2f_3$. It is straightforward to show that $h$ is $(\beta,1)$-smooth for any $\beta\ge 4$. Note that $Q^{\star}(x,\alpha)=h(f(x),s(x),Q_\epsilon(\alpha))$, hence for $\calP_2=\calP_1\cup\{(\beta,C_1),(\beta,1)\}$, $Q^{\star}\in \calH(p+1,l_1+1,\calP_2)$. Pick $\frac{\beta}{3}>\min\{\gamma^{\star}_1,\beta_1\}$, we have $\gamma^{\star}(\calH(p+1,l_1+1,\calP_2))=\min\{\gamma^{\star}_1,\beta_1\}$.  
It then follows immediately from Corollary \ref{cor:final-rate}.
\end{proof}

\subsection{Applications to Downstream Estimation}

\subsubsection{Conditional Moments}

Estimating conditional moments—such as the conditional mean or variance—is a central problem in statistics. In particular, estimating the conditional mean reduces to a standard regression task. Under suitable smoothness assumptions, the minimax rate~\citep{stone1982optimal} can be attained by various nonparametric methods, including kernel estimators~\citep{nadaraya1964estimating} and local polynomial regression~\citep{fan1996local}. 
Our approach also achieves the optimal rate, provided that the Monte Carlo sample size $k$ is sufficiently large.

\begin{condition}
\label{cond:conditional-moments}
The function $w$ is differentiable and satisfies $\sup_{y\in [-c_0, c_0]} w(y) \le c_1$ and $\int |w'(y)|^2 dy \le c_1$ for some constant $c_1$. 
\end{condition}

\begin{proposition}
\label{prop:conditional-moments}
Under the setting of Theorem \ref{thm:oracle-n-infty}, assume further that Condition \ref{cond:conditional-moments} holds. There exists some constant $C$ dependent on $(c_0, c_1)$ such that under the event in \eqref{eq:est-error-ndre}, the conditional moment estimator constructed in \eqref{eq:cm-est} satisfies, for any $k\ge 1$, 
\begin{align*}
    \mathbb{E}_{X\sim \mu_0}\left[|\hat{m}_w(X) - m^\star_w(X)|^2\Big|\hat g\right] \le C\left(\frac{1}{k} + \delta_{\mathtt{NDRE}}\right).
\end{align*} Suppose we further have $n'$ i.i.d. observations $X'_1, \ldots, X'_{n'}$ drawn from $\mu_0$.  Then, for any $n'\ge 3$ and $t'>0$, the following event happens with probability at least $1-e^{-t'}$.
\begin{align}
\label{eq:pred-error}
    \frac{1}{n'} \sum_{i=1}^{n'} \left(\hat{m}_w(X'_i) - m^\star_w(X'_i)\right)^2 \le C\left(\delta_{\mathtt{NDRE}} +\frac{1}{k}+ \frac{t'}{n'}\right)
\end{align} 
\end{proposition}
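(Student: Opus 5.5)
The plan is to split the error into a Monte Carlo part and a bias part. Define the population-level estimator $\bar m_w(x) = \mathbb{E}_U[w(\hat g(x,U))] = \int w(y)\, dF_{\hat g}(x,y)$. Then
\[
|\hat m_w(x) - m^\star_w(x)|^2 \le 2|\hat m_w(x)-\bar m_w(x)|^2 + 2|\bar m_w(x) - m^\star_w(x)|^2 .
\]
Conditionally on $\hat g$ and $x$, the quantity $\hat m_w(x)$ is an average of $k$ i.i.d. terms $w(\hat g(x,U_l))$. Since $\hat g$ is truncated at $M=c_0$, each term is bounded by $c_1$. Hence the first term has conditional expectation at most $2c_1^2/k$, and integrating over $X\sim\mu_0$ gives the $1/k$ contribution.

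For the bias term, I will integrate by parts in $y$. Both $F_{\hat g}(x,\cdot)$ and $F^\star(x,\cdot)$ are CDFs supported in $[-c_0,c_0]$; this uses $|Y|\le c_0$ and the truncation of $\hat g$. So the boundary terms agree, since both CDFs equal $0$ below $-c_0$ and $1$ at $c_0$, and
\[
\bar m_w(x) - m^\star_w(x) = -\int_{-c_0}^{c_0} w'(y)\{F_{\hat g}(x,y) - F^\star(x,y)\}\,dy .
\]
Cauchy--Schwarz together with $\int |w'|^2\le c_1$ then gives $|\bar m_w(x)-m^\star_w(x)|^2 \le c_1 \int (F_{\hat g}-F^\star)^2(x,y)\,dy$. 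Integrating over $\mu_{0,x}$ bounds this by $c_1\|F_{\hat g}-F^\star\|_2^2$, which on the event of Theorem \ref{thm:oracle-n-infty} is at most $C c_1\delta_{\mathtt{NDRE}}$. This proves the first display.

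The integration by parts needs some care, since $F_{\hat g}$ may have atoms (for instance at $\pm c_0$ from truncation). I will handle it by the Stieltjes form $\int w\,dF = w(c_0) - \int w' F\,dy$. This is valid for right-continuous monotone $F$ with $F(c_0)=1$ and $F(-c_0^-)=0$, using absolute continuity of $w$. It is the main technical point but routine.

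For the second statement, the main obstacle is concentration. The noises $U_l$ are resampled for each $X'_i$, so conditionally on $\hat g$ the summands $Z_i = (\hat m_w(X'_i)-m^\star_w(X'_i))^2$ are i.i.d. They are bounded by $4c_1^2$, because $|m^\star_w|\le c_1$ holds as well. Their mean is at most $C(1/k+\delta_{\mathtt{NDRE}})$ by the first part, with expectation now also over the fresh $U$'s, which is what that bound computes.

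Bernstein's inequality for nonnegative bounded variables (variance $\le 4c_1^2\,\mathbb{E}Z$) gives, with probability $1-e^{-t'}$,
\[
\frac1{n'}\sum Z_i \le \mathbb{E}Z + \sqrt{2\cdot 4c_1^2\,\mathbb{E}Z\, t'/n'} + C t'/n' \le 2\,\mathbb{E}Z + C' t'/n'
\]
via AM--GM. This yields \eqref{eq:pred-error}.
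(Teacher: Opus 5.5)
Your proposal is correct and follows the same route as the paper's proof. Both split the error into a Monte Carlo variance term of order $c_1^2/k$ and a bias term. The bias term is handled by integrating by parts against $F_{\hat g}-F^\star$ and applying Cauchy--Schwarz with $\int |w'|^2\le c_1$, and the second statement follows from Bernstein's inequality using a variance bound proportional to the mean, so that the square-root term can be absorbed.

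Your proposal also makes explicit some steps the paper leaves implicit: the Stieltjes integration-by-parts justification with the boundary terms matching on $[-c_0,c_0]$, and the observation that $|m^\star_w|\le c_1$, which makes the summands $Z_i$ bounded.
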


The proof is relegated to Appendix \ref{sec:proof-cond-moments}.
Proposition \ref{prop:conditional-moments} indicates that when the number of auxiliary samples satisfies $k \ge \delta_{\mathtt{NDRE}}^{-1} $, one can expect to estimate $m^\star_w(X)$ well on average when the event that $\hat{g}$ is estimated well, i.e., \eqref{eq:est-error-ndre}, occurs. Furthermore, if $n'$, the number of test data, is also large in that $n'\ge (\delta_{\mathtt{NDRE}})^{-1} \log(n)$, then the prediction error in the test data, which is defined in the L.H.S. of \eqref{eq:pred-error}, is of order $O(\delta_{\mathtt{NDRE}})$ with probability at least $1-n^{-10}$.
It is easy to see that Condition \ref{cond:conditional-moments} is satisfied with moment functions $w(y)=y^m$ for $m>0$.

\subsubsection{Conditional Quantile}
Conditional quantile estimation is another important problem~\citep{koenker1978regression}. The quantile estimator based on local polynomial~\citep{chaudhuri1991nonparametric} and neural networks~\citep{shen2021deep} both achieve minimax optimal rate. 
Below, we impose an assumption that the conditional density is bounded away from zero, similar to~\citet{belloni2011l1} and \citet{he1994convergence}. Note that our result is uniform over the quantile level.

\begin{condition}
\label{cond:conditional-quantile}
The conditional density of $Y=y$ given $X=x$, denoted as $p^\star(y|x)$, exists everywhere and $p^\star(y|x)\ge c_d$ for every $x,y$.
\end{condition}

\begin{proposition}
\label{prop:condition-quantile}
Under the setting of Theorem \ref{thm:oracle-n-infty}, further shown that Condition \ref{cond:conditional-quantile} holds. There exists some constant $C$ dependent on $(c_0, c_d)$ such that under the event in \eqref{eq:est-error-ndre}, for a fixed $0<\alpha<1$, the conditional quantile estimator constructed in \eqref{eq:cq-est} satisfies, for any $k\ge 1$, 
\begin{align*}
\EE\sup_{0<\alpha<1}\Big[(\hat Q(X,\alpha)-Q^{\star}(X,\alpha))^2\Big|\hat g\Big]\le C\left[(\delta_{\mathtt{NDRE}})^{2/3}+\frac{\log k}{k}\right].
\end{align*}
Suppose we further have $n'$ i.i.d. observations $X'_1, \ldots, X'_{n'}$ drawn from $\mu_0$, then for any $n'\ge 3$ and $t'>0$, the following event holds with probability at least $1-e^{-t'}$.
\begin{align}
\label{equ:result-conditional-quantile}
\frac{1}{n'} \sum_{i=1}^{n'} \sup_{0<\alpha<1}(\hat Q(X'_i,\alpha)-Q^{\star}(X'_i,\alpha))^2 \le C\left[(\delta_{\mathtt{NDRE}})^{2/3}+\frac{\log k}{k}+\frac{t'}{n'}\right]
\end{align} 
\end{proposition}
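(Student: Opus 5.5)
We decompose the error into a Monte Carlo part and an approximation part. Fix $x$ and write $\hat F(x,\cdot)=F_{\hat g}(x,\cdot)$, with $\hat Q_\infty(x,\alpha)=\inf\{t:\hat F(x,t)\ge\alpha\}$ its population quantile. Let $\hat F_k(x,t)=\frac1k\sum_{l}\indicator{\hat g(x,U_l)\le t}$ be the empirical CDF of the $k$ generated samples, so that $\hat Q(x,\alpha)$ is the generalized inverse of $\hat F_k$. We compare quantiles through CDFs, using the lower bound $p^\star\ge c_d$ from Condition \ref{cond:conditional-quantile}. Suppose $\sup_\alpha|\hat Q(x,\alpha)-Q^\star(x,\alpha)|=\Delta$, attained (up to $\epsilon$) at some $\alpha$ with, say, $\hat Q> Q^\star+\Delta$. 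Then for $t\in[Q^\star(x,\alpha),Q^\star(x,\alpha)+\Delta]$ we have $F^\star(x,t)\ge\alpha+c_d(t-Q^\star)$ while $\hat F_k(x,t)<\alpha$. Hence $F^\star-\hat F_k\ge c_d(t-Q^\star)$ on an interval of length $\Delta$, and
\[
\int(\hat F_k(x,t)-F^\star(x,t))^2dt\ \ge\ c_d^2\int_0^{\Delta}s^2ds=\tfrac{c_d^2}{3}\Delta^3 .
\]
The case $\hat Q<Q^\star$ is symmetric. This is uniform in $\alpha$, so
\[
\sup_\alpha|\hat Q-Q^\star|^2\le C\Big(\int(\hat F_k-F^\star)^2dt\Big)^{2/3}.
\]
Because $|Y|\le c_0$ and $|\hat g|\le M=c_0$, the integral runs only over $[-c_0,c_0]$ and $\Delta\le 2c_0$.

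Next we split $\int(\hat F_k-F^\star)^2\le 2\int(\hat F_k-\hat F)^2+2\int(\hat F-F^\star)^2$. Taking expectation over $X$ and applying Jensen to the concave map $z\mapsto z^{2/3}$ gives
\[
\EE\sup_\alpha(\hat Q-Q^\star)^2\lesssim \big(\EE\textstyle\int(\hat F_k-\hat F)^2dt\big)^{2/3}+\|F_{\hat g}-F^\star\|_2^{4/3}.
\]
The second term is $\lesssim\delta_{\mathtt{NDRE}}^{2/3}$ on the event \eqref{eq:est-error-ndre} of Theorem \ref{thm:oracle-n-infty}. For the first term, conditionally on $\hat g$ and $x$, $\EE(\hat F_k-\hat F)^2(t)=\hat F(1-\hat F)/k\le 1/(4k)$. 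Integrating over $[-c_0,c_0]$ gives $O(1/k)$, and hence a contribution of $O(k^{-2/3})$.

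\textbf{Main obstacle.} The $k^{-2/3}$ rate is worse than the claimed $\log k/k$, so the Monte Carlo part must avoid the cubic conversion. I would instead compare $\hat Q$ with $\hat Q_\infty$ and $\hat Q_\infty$ with $Q^\star$ separately. For $\hat Q_\infty$ vs.\ $Q^\star$, the argument above with $\hat F$ in place of $\hat F_k$ gives $\delta_{\mathtt{NDRE}}^{2/3}$. For $\hat Q$ vs.\ $\hat Q_\infty$, write $\hat g(x,U_l)=\hat Q_\infty(x,V_l)$ with $V_l$ i.i.d.\ uniform (the quantile transform). Then $\hat Q(x,\alpha)=\hat Q_\infty(x,V_{(\lceil\alpha k\rceil)})$, so the question is how the sup-deviation of uniform order statistics propagates through $\hat Q_\infty$. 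Here $\hat Q_\infty$ need not be Lipschitz, since $\hat F$ may have flat parts or atoms, which is the genuine difficulty. I would handle this by sandwiching: by DKW, $\sup_t|\hat F_k-\hat F|\le\sqrt{\log k/k}$ with probability $1-1/k$, and the bad event costs $4c_0^2/k$. On the good event, $\hat Q(x,\alpha)\in[\hat Q_\infty(x,\alpha-\eta),\hat Q_\infty(x,\alpha+\eta)]$ with $\eta=\sqrt{\log k/k}$. Then transfer to $Q^\star$, which is $1/c_d$-Lipschitz in $\alpha$: $|\hat Q-Q^\star|(\alpha)\le\sup_{|\alpha'-\alpha|\le\eta}|\hat Q_\infty-Q^\star|(\alpha')+\eta/c_d$. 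Squaring gives $\eta^2=\log k/k$ plus the $\delta_{\mathtt{NDRE}}^{2/3}$ term, matching the bound.

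The high-probability statement \eqref{equ:result-conditional-quantile} then follows by applying Bernstein/Hoeffding to the i.i.d.\ bounded variables $\sup_\alpha(\hat Q(X_i',\alpha)-Q^\star(X_i',\alpha))^2\le 4c_0^2$, conditionally on $\hat g$ (with fresh noise drawn for each $X'_i$). The variance is bounded by $4c_0^2$ times the mean, so we get mean$\,+\,Ct'/n'$, as in Proposition \ref{prop:conditional-moments}.
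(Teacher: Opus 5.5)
Your argument is correct, but it removes the $k^{-2/3}$ loss by a different mechanism than the paper.

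\textbf{The paper's route.} The paper never passes through $\hat Q_\infty$. It keeps the idea from your first attempt, integrating $(\hat F_k-F^\star)^2$ over the interval between $\hat Q(x,\alpha)$ and $Q^\star(x,\alpha)$. It bounds that integral by $2\delta_x+2\Delta\,(8\sqrt{\log(k+1)/k}+\delta)^2$. In other words, it uses the sup-norm (DKW-type) bound on $\hat F_k-F_{\hat g}$ only over an interval of length $\Delta$, not the $L_2$ error over all of $[-c_0,c_0]$. With the same lower bound $\frac{c_d^2}{3}\Delta^3$ this gives $\Delta^3\lesssim \delta_x+\Delta\eta^2$. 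Hence $\Delta^2\lesssim\max\{\delta_x^{2/3},\eta^2\}$ follows in one inequality, uniformly in $\alpha$.

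\textbf{Your route.} You triangulate at the quantile level:
\begin{itemize}
\item the DKW sandwich $\hat Q_\infty(x,\alpha-\eta)\le\hat Q(x,\alpha)\le\hat Q_\infty(x,\alpha+\eta)$;
\item the $1/c_d$-Lipschitzness of $Q^\star$ in $\alpha$, which turns the level shift into an additive $\eta/c_d$;
\item the cubic conversion, applied only to the population pair $(F_{\hat g},F^\star)$.
\end{itemize}

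\textbf{What each buys.} Your version is more modular. It makes explicit that only the estimation error suffers the $L_2$-to-$L_\infty$ loss, while the Monte Carlo error enters linearly through $1/c_d$. The price is boundary bookkeeping when $\alpha\pm\eta\notin(0,1)$, where $\hat Q_\infty(x,\alpha\pm\eta)$ is infinite. There you should use $\hat Q(x,\alpha)\le\hat Q_\infty(x,1)=\lim_{\alpha'\uparrow1}\hat Q_\infty(x,\alpha')$ a.s.\ (by left-continuity and boundedness of $\hat g$), and the analogue at $0$. The paper's single inequality avoids this because it works directly on the interval between the two quantiles.

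The remaining steps coincide with the paper's: Jensen for $z\mapsto z^{2/3}$, the bad DKW event costing $O(c_0^2/k)$, and the conditional Bernstein argument for the $n'$-sample bound.
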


The proof is provided in Appendix \ref{sec:proof-conditional-quantile}. The bound in~\eqref{equ:result-conditional-quantile} comprises three components. The first term, $(\delta_{\mathtt{NDRE}})^{2/3}$, corresponds to the statistical rate incurred when translating an $L_2$-type CDF estimation error into an $L_\infty$-type quantile estimation error. This rate can be improved if the conditional density satisfies higher-order smoothness assumptions, under which an additional smoothing layer can be applied to exploit the extra regularity. The second term arises from Monte Carlo sampling variability, which becomes negligible when $k\gg(\delta_{\mathtt{NDRE}})^{-2/3}$. The third term accounts for the contribution from the estimation tail.

\subsubsection{Prediction Band}
Prediction bands are a central object in conformal prediction, where the goal is to guarantee valid coverage while keeping the band width as small as possible. Conformalized Quantile Regression~\citep{romano2019conformalized}) has been particularly successful in this regard, as it combines the distribution-free validity of conformal prediction with the local adaptivity of quantile regression, leading to sharper intervals in heteroskedastic settings. More recently, LinCDE~\citep{gao2022lincde} views prediction band construction as a natural byproduct of conditional density estimation. While attractive for its unified treatment of density and interval estimation, theoretical results on the efficiency or optimal width remain underdeveloped.

\begin{condition}
\label{cond:prediction-band}
The conditional density of $Y=y$ given $X=x$, denoted as $p(y|x)$, exists everywhere and $p(y|x)\le C_d$ for every $x,y$.
\end{condition}
\begin{proposition}
\label{prop:prediction-band}
Under the setting of Theorem \ref{thm:oracle-n-infty}, assume further that Condition \ref{cond:prediction-band} holds. There exists some constant $C$ dependent on $(c_0, C_d)$ such that under the event in \eqref{eq:est-error-ndre},  the prediction band constructed in \eqref{eq:pb-est} satisfies, for any $k\ge 1$, 
\begin{align}
\label{eq:prop-prediction-band-1}
\EE\Big[\sup_{0<\alpha<1}\big|\PP(Y\in \hat{\mathrm{PI}}_{\alpha}(X)|X,\hat g)-\alpha\big|^2\Big|\hat g\Big]\le C\left[(\delta_{\mathtt{NDRE}})^{2/3}+\frac{\log k}{k}\right]
\end{align}
Suppose we further have $n'$ i.i.d. observations $(X'_1,Y'_1), \ldots, (X'_{n'},Y'_{n'})$ drawn from $\mu_0$.  Then for any $n'\ge 3$ and $t'>0$, the following event holds with probability at least $1-e^{-t'}$.
\begin{align}
\label{eq:prop-prediction-band-2}
    \frac{1}{n'} \sum_{i=1}^{n'} \sup_{0<\alpha<1}\big|\PP(Y'_i\in \hat{\mathrm{PI}}_{\alpha}(X'_i)|\hat g)-\alpha\big|^2 \le C\left[(\delta_{\mathtt{NDRE}})^{2/3}+\frac{\log k}{k}+\frac{t'}{n'}\right]
\end{align}

Furthermore, if Condition \ref{cond:conditional-quantile} holds, we have 
\begin{align}
\label{eq:prop-prediction-band-3}
\EE\sup_{0< \alpha<1}\left[|\hat{\mathrm{PI}}_{\alpha}(X)|-\inf_{(l,u):\PP(Y\in [l,u]|X)\ge \alpha}(u-l)\right]\le C\left[(\delta_{\mathtt{NDRE}})^{2/3}+\frac{\log k}{k}\right].
\end{align}

\end{proposition}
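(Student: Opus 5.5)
The plan is to reduce everything to a pointwise-in-$x$ bound on the sup-norm distance between the empirical CDF of the generated samples and the true conditional CDF, and then average over $x$.

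\textbf{Pointwise CDF bound.} Fix $x$ and write $\hat F_k(x,\cdot)$ for the empirical CDF of $\{\hat g(x,U_l)\}_{l=1}^k$. The triangle inequality gives
\[
\sup_y|\hat F_k(x,y)-F^\star(x,y)|\le \sup_y|\hat F_k(x,y)-F_{\hat g}(x,y)|+\sup_y|F_{\hat g}(x,y)-F^\star(x,y)|.
\]
The first term is handled by the DKW inequality conditional on $\hat g$. Its squared expectation is $\lesssim 1/k$, and the $\log k$ factor leaves slack.

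For the second term, write $\Delta(x,y)=F_{\hat g}(x,y)-F^\star(x,y)$. Since $F_{\hat g}(x,\cdot)$ is nondecreasing and $F^\star(x,\cdot)$ is $C_d$-Lipschitz by Condition \ref{cond:prediction-band}, the following holds: if $\Delta(x,y_0)=\varepsilon>0$, then $\Delta(x,y)\ge \varepsilon-C_d(y-y_0)$ for $y\ge y_0$, and symmetrically for a negative deviation. Integrating over this triangle gives
\[
\int\Delta(x,y)^2\,dy\ \gtrsim\ \varepsilon^3/C_d .
\]
Hence $\sup_y|\Delta(x,y)|\lesssim D(x)^{1/3}$, where $D(x)=\int\Delta(x,y)^2\,dy$. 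Squaring, taking expectation in $X$, and applying Jensen (concavity of $s\mapsto s^{2/3}$) yields
\[
\mathbb{E}\sup_y|\Delta|^2\lesssim(\mathbb{E}D(X))^{2/3}=\|F_{\hat g}-F^\star\|_2^{4/3}\lesssim\delta_{\mathtt{NDRE}}^{2/3}
\]
on the event of Theorem \ref{thm:oracle-n-infty}.

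\textbf{Coverage.} The interval $\hat{\mathrm{PI}}_\alpha(x)=[a,b]$ depends only on the generated samples. Its true coverage is $F^\star(x,b)-F^\star(x,a^-)$ (there are no atoms, since a density exists). Its empirical coverage under $\hat F_k$ equals $\alpha$ up to $O(1/k)$, because it contains exactly $\lceil\alpha k\rceil+1$ order statistics. Therefore
\[
\sup_\alpha\bigl|\mathbb{P}(Y\in\hat{\mathrm{PI}}_\alpha|X=x,\hat g)-\alpha\bigr|\le 2\sup_y|\hat F_k-F^\star|+O(1/k),
\]
uniformly in $\alpha$. Squaring and combining with the pointwise CDF bound gives \eqref{eq:prop-prediction-band-1}.

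For \eqref{eq:prop-prediction-band-2}, the summands $Z_i=\sup_\alpha|\cdot|^2\in[0,1]$ are i.i.d. given $\hat g$ and the Monte Carlo noises, which are resampled per $x$. A Bernstein inequality for nonnegative bounded variables gives $\frac1{n'}\sum Z_i\le 2\mathbb{E}Z+Ct'/n'$ with probability $1-e^{-t'}$.

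\textbf{Width.} Now assume in addition Condition \ref{cond:conditional-quantile}, so $c_d\le p^\star\le C_d$ on the support and $F^\star(x,\cdot)$ has a Lipschitz inverse $Q^\star$ with constant $1/c_d$. Let $\eta(x)=\sup_y|\hat F_k-F^\star|+1/k$. I would compare lengths in two directions.
\begin{itemize}
\item \emph{Lower side.} $\hat{\mathrm{PI}}_\alpha$ has true coverage at least $\alpha-2\eta$. Enlarging it by $O(\eta/c_d)$ brings the coverage to $\alpha$, so $|\hat{\mathrm{PI}}_\alpha|\ge \mathrm{opt}(\alpha)-O(\eta/c_d)$.
\item \emph{Upper side.} Take the optimal interval $[l^\ast,u^\ast]$ at level $\alpha$ and inflate it by $O(\eta/c_d)$ so its empirical coverage is at least $\alpha+1/k$. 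Then it contains $\lceil\alpha k\rceil+1$ generated points. Because $\hat l$ is chosen to minimize length, $|\hat{\mathrm{PI}}_\alpha|\le \mathrm{opt}(\alpha)+O(\eta/c_d)$.
\end{itemize}
So the width excess is $O(\eta)$ uniformly in $\alpha$. Taking expectation and using $\mathbb{E}\eta\le(\mathbb{E}\eta^2)^{1/2}$ gives a bound of $\delta^{1/3}$. That is weaker than the stated $(\delta_{\mathtt{NDRE}})^{2/3}+\log k/k$, so this step is where I expect the main obstacle.

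Under the stated density bounds the linear-in-$\eta$ control appears unavoidable, so to reach the claimed rate I would first try to exploit near-optimality of the shortest interval, that is, a first-order cancellation. The optimal interval satisfies the equal-density condition $p(l^\ast)=p(u^\ast)$. Perturbing its endpoints along the level constraint then changes the length only to second order, giving an excess of $O(\eta^2)$, hence $\delta^{2/3}+\log k/k$ after expectation (the DKW term squared gives $1/k$ up to logs). This requires a local-smoothness argument for $p^\star$ near the optimal endpoints. If it fails, I would settle for the square-root version of \eqref{eq:prop-prediction-band-3}, stated for the squared excess.
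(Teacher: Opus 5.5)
\textbf{Parts \eqref{eq:prop-prediction-band-1} and \eqref{eq:prop-prediction-band-2}.} Your argument here is the paper's:
\begin{enumerate}
\item the cube-root conversion of the $L_2$ CDF error into a sup-norm error (one CDF monotone, $F^\star$ $C_d$-Lipschitz, integrate over the triangle);
\item DKW for the Monte Carlo error;
\item coverage read off the empirical CDF at the two endpoints;
\item Jensen for $s\mapsto s^{2/3}$;
\item Bernstein for the test average.
\end{enumerate}
You split off the DKW term by the triangle inequality, while the paper folds it into the integral of $(\hat F_k-F^\star)^2$. That difference is cosmetic.

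One repair is needed, and it applies to the paper too. $\hat g(x,\cdot)$ can have atoms, e.g.\ from $\mathrm{Tc}_M$ or from flat ReLU pieces. So ties among generated points are possible, and the paper's appeal to Condition \ref{cond:prediction-band} does not exclude them, since that condition concerns $F^\star$, not $F_{\hat g}$. The fix is easy: $F^\star$ is continuous, so each jump of $\hat F_k(x,\cdot)$ is at most $2\sup_y|\hat F_k(x,y)-F^\star(x,y)|$. Your coverage bound therefore survives with a larger constant.

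\textbf{Part \eqref{eq:prop-prediction-band-3}.} Your interval-inflation argument gives $\mathbb{E}\sup_\alpha[\cdot]\lesssim(\delta_{\mathtt{NDRE}})^{1/3}+\sqrt{\log k/k}$. This is exactly what the paper's proof delivers. The paper compares $\hat{\mathrm{PI}}_\alpha(x)$ with $[\hat Q(x,F^\star(x,l^\ast_x)),\hat Q(x,F^\star(x,u^\ast_x))]$. It then bounds the excess by $2\sup_\alpha|\hat Q(x,\alpha)-Q^\star(x,\alpha)|$ on the event $\mathcal A(x)$ of Proposition \ref{prop:condition-quantile}. Its final display is $(\delta_{\mathtt{NDRE}})^{1/3}+\sqrt{\log k/k}$, not the displayed rate. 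Your inflation step and the paper's quantile step are interchangeable, since when the density is $\ge c_d$ a quantile error is at most a CDF error divided by $c_d$. So the discrepancy lies in the statement, not in your plan.

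The first-order cancellation you propose cannot close this gap. The envelope argument removes only the cost of mislocating an interval at a \emph{fixed} true coverage. But $\hat{\mathrm{PI}}_\alpha$ has true coverage $\alpha+\tau$ with $|\tau|\lesssim\eta$, and $\mathrm{opt}(\alpha+\tau)-\mathrm{opt}(\alpha)\approx\tau/p^\star(l^\ast)$ is first order and can be positive. Concretely, take $Y\mid X\sim\mathrm{Unif}[-1,1]$ with $c_0\ge 2$, and $\hat g=(1+\epsilon)g^\star$ where $g^\star(x,U)\sim F^\star(x,\cdot)$. Then $\delta:=\|F_{\hat g}-F^\star\|_2^2\asymp\epsilon^2$. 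As $k\to\infty$, the width excess at level $\alpha$ tends to $2\alpha\epsilon\asymp\delta^{1/2}\gg\delta^{2/3}$. So the displayed rate cannot follow from $L_2$ control. Your fallback is the correct form: rate $\delta^{1/3}+\sqrt{\log k/k}$, or equivalently $\delta^{2/3}+\log k/k$ for the squared positive part of the excess.

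\textbf{A further caveat, shared with the paper.} Inflation must keep adding true coverage at rate $c_d$. This fails once the optimal interval fills the support, and Condition \ref{cond:conditional-quantile} can only hold on the support since $|Y|\le c_0$. Suppose $F_{\hat g}$ puts mass $\eta$ at a point an order-one distance outside the support; this costs only $\delta\asymp\eta^2$. Then for large $k$ the excess is of order one at levels $\alpha$ within about $\eta$ of $1$. The same failure hits the paper's derivation of $\mathcal A(x)$ at such $\alpha$. Hence the supremum should be restricted to $\alpha$ bounded away from $1$.
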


The proof can be found in Appendix \ref{sec:proof-prediction-band}. The right-hand side of~\eqref{eq:prop-prediction-band-2} takes the same form as~\eqref{equ:result-conditional-quantile}, since prediction intervals are conceptually analogous to quantiles. In~\eqref{eq:prop-prediction-band-3}, we further quantify the optimality of the proposed prediction band. As both the sample size and the number of Monte Carlo simulations increase, the proposed band approaches the performance of the oracle counterpart.

\subsubsection{Conditional Density Estimation}
Conditional density estimation is a core problem in statistics and machine learning, with applications in prediction, anomaly detection, and generative modeling. Classical methods include local likelihood and bandwidth selection techniques \citep{fan1996estimation, hall2004cross, hyndman1996estimating}. 

\begin{condition}
\label{cond:conditional-density-estimation}
For every fixed $x$, $F^{\star}(x,y)\in \mathbb{C}^{m}$ is $m$-times differentiable, and $\partial^{m}F^{\star}(x,y)/\partial y^{m}$ is $l_f$-Lipschitz for some constant $l_f$.
\end{condition}
\begin{proposition}
\label{prop:conditional-density-estimation}
Under the setting of Theorem \ref{thm:oracle-n-infty}, assume further that Condition \ref{cond:conditional-density-estimation} holds. There exists some constant $C$ dependent on $(c_0, l_f)$ such that under the event in \eqref{eq:est-error-ndre}, the conditional density estimator constructed in \eqref{eq:cd-est} satisfies
\begin{align*}
\EE\Big[\int(\hat p(y|X)-p^{\star}(y|X))^2dy\Big|\hat g\Big]\le  C\max\{k^{-\frac{1}{2m+1}},(\delta_{\mathtt{NDRE}})^{\frac{1}{2m+3}}\}^{2m}.
\end{align*}
Suppose we further have $n'$ i.i.d. observations $X'_1, \ldots, X'_{n'}$ drawn from $\mu_0$, then for any $n'\ge 3$ and $t'>0$, the following event happens with probability at least $1-e^{-t'}$.
\begin{align}
\label{equ:result-conditional-density}
    \frac{1}{n'} \sum_{i=1}^{n'} \int(\hat p(y|X'_i)-p^{\star}(y|X'_i))^2dy\le C\left(\max\{k^{-\frac{1}{2m+1}},(\delta_{\mathtt{NDRE}})^{\frac{1}{2m+3}}\}^{2m}+\frac{t'}{n'}\right)
\end{align} 
\end{proposition}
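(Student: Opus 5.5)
Fix $x$ and write $\hat p(y|x)-p^\star(y|x)$ as a sum of three terms:
\begin{align*}
\hat p(y|x)-p^\star(y|x) = \underbrace{\hat p(y|x)-\tilde p_h(y|x)}_{\text{Monte Carlo}} + \underbrace{\tilde p_h(y|x)-p_h^\star(y|x)}_{\text{estimation}} + \underbrace{p_h^\star(y|x)-p^\star(y|x)}_{\text{bias}}.
\end{align*}
Here $\tilde p_h(y|x)=\mathbb{E}_U[h^{-1}K((y-\hat g(x,U))/h)]=\int h^{-1}K((y-s)/h)\,dF_{\hat g}(x,s)$ and $p_h^\star(y|x)=\int h^{-1}K((y-s)/h)\,dF^\star(x,s)$. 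We bound the integrated squared error of each piece, then integrate over $X\sim\mu_{0,x}$.

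\textbf{Monte Carlo term.} Conditional on $\hat g$, this is a mean of $k$ i.i.d. centered terms. Its integrated variance is at most $\frac{1}{kh^2}\int\mathbb{E}K^2((y-\hat g)/h)\,dy=\frac{\|K\|_2^2}{kh}$, so the term is $O(1/(kh))$.

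\textbf{Bias term.} Use Condition~\ref{cond:conditional-density-estimation}: $p^\star=\partial_yF^\star$ has $m-1$ derivatives with the last one Lipschitz. A Taylor expansion together with the vanishing moments of $K$ gives $|p_h^\star-p^\star|\lesssim h^m\int|s^mK(s)|\,ds$. Bounded support, $|Y|\le c_0$, lets us integrate over a bounded $y$-range of width $O(1+h)$, so the term is $O(h^{2m})$.

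\textbf{Estimation term.} Integrate by parts in $s$ to get
\[
\tilde p_h-p_h^\star=h^{-2}\int K'\big((y-s)/h\big)\,\{F_{\hat g}(x,s)-F^\star(x,s)\}\,ds = h^{-1}(K'_h * D_x)(y),
\]
where $D_x=F_{\hat g}(x,\cdot)-F^\star(x,\cdot)$ and $K'_h(u)=h^{-1}K'(u/h)$. Young's inequality gives $\|\cdot\|_{L_2(dy)}\le h^{-1}\|K'\|_1\|D_x\|_{L_2}$, so the term is $O(h^{-2}\|D_x\|_2^2)$. Averaging over $X$ and invoking Theorem~\ref{thm:oracle-n-infty} turns this into $O(h^{-2}\delta_{\mathtt{NDRE}})$. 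This needs $K$ differentiable with $K'\in L_1$, which I would add as a mild kernel requirement. If $K$ is not differentiable, I would instead interpolate between $\|D_x\|_2$ and the trivial bound $|D_x|\le 1$ on a bounded range; the worse exponent $2m+3$ in the stated rate suggests the authors did something like this. Using $|D_x|\le1$ on $[-c_0-1,c_0+1]$ we have $\|D_x\|_1\lesssim\|D_x\|_2$, and then $\|K'_h*D_x\|_2\le\|K'_h\|_2\|D_x\|_1$ with $\|K'_h\|_2\asymp h^{-1/2}$. This gives the estimation term as $O(h^{-3}\delta_{\mathtt{NDRE}})$.

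\textbf{Choosing $h$.} Summing, the error is $\lesssim 1/(kh)+h^{2m}+h^{-3}\delta_{\mathtt{NDRE}}$. Take
\[
h=\max\{k^{-1/(2m+1)},\,\delta_{\mathtt{NDRE}}^{1/(2m+3)}\}.
\]
Then $1/(kh)\le h^{2m}$ and $h^{-3}\delta\le h^{2m}$, so the total is $Ch^{2m}$, which is exactly the stated bound.

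\textbf{Main obstacle.} The delicate step is the estimation term: converting the weak $L_2$-CDF error into a density error. Integration by parts has to be justified when $F_{\hat g}$ has atoms, since $\hat g$ is a ReLU network with possible flat pieces. This is fine because the boundary terms vanish for a bounded-support $K$ and bounded $Y$ and $\hat g$.

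\textbf{High-probability version.} The terms $Z_i=\int(\hat p(y|X_i')-p^\star(y|X_i'))^2dy$ are i.i.d. given $\hat g$ and $\{U_l\}$, and bounded by a constant since $h$ is bounded below for fixed $k$. Alternatively, truncate and argue via Bernstein's inequality, or via Markov on $Z_i\ge0$ combined with a Bernstein bound using $\mathrm{Var}(Z)\lesssim \mathbb{E}Z$. This gives the $t'/n'$ term in \eqref{equ:result-conditional-density}, mirroring the argument behind \eqref{eq:pred-error} in Proposition~\ref{prop:conditional-moments}.
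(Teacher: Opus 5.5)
Your proof is correct and follows essentially the same route as the paper. The paper also integrates by parts to move the CDF error onto $K'$, and applies Cauchy--Schwarz pointwise in $y$ to get the $h^{-3}\delta_{\mathtt{NDRE}}$ term; your $\|K'_h\|_2\|D_x\|_1$ version of Young's inequality reproduces exactly this. It uses the $m$-order kernel with the Lipschitz $m$-th derivative of $F^\star$ for the $h^{2m}$ bias, and it takes the same bandwidth.

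Two of your choices are sharper than the paper's.
\begin{itemize}
\item For the Monte Carlo term, you integrate the variance over $y$ by Fubini and get $\|K\|_2^2/(kh)$ directly. The paper instead bounds the variance pointwise by rewriting $\int K^2\,dF_{\hat g}$ through $F^\star$. That costs it a bounded-density step and an extra $\sqrt{\delta_{\mathtt{NDRE}}}/(kh^{5/2})$ term.
\item Your primary bound $\|\tilde p_h-p_h^\star\|_{L_2(dy)}\le h^{-1}\|K'\|_1\|D_x\|_2$ gives $h^{-2}\delta_{\mathtt{NDRE}}$ instead of $h^{-3}\delta_{\mathtt{NDRE}}$. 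For any compactly supported or Gaussian kernel, $K'\in L_1$ holds whenever the paper's $\int K'^2<\infty$ does. So this improves the rate to $\max\{k^{-1/(2m+1)},\delta_{\mathtt{NDRE}}^{1/(2m+2)}\}^{2m}$. The exponent $2m+3$ is just an artifact of the paper's pointwise Cauchy--Schwarz.
\end{itemize}
Your fallback ``if $K$ is not differentiable'' still uses $K'_h$. It therefore does not remove the differentiability requirement; it is simply the weaker bound, and you do not need it.

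Two points in your high-probability part need fixing.
\begin{itemize}
\item The paper's convention is that fresh noises are drawn for each test point. Then the $Z_i$ are i.i.d.\ conditional on $\hat g$ alone, with conditional mean equal to the quantity bounded in the first part. Conditioning on a shared $\{U_l\}$, as you wrote, would center the concentration at $\mathbb{E}_X[Z\mid \hat g,\{U_l\}]$, which the first part does not control.
\item $Z_i$ is bounded only through $\int \hat p^2\,dy\le \|K\|_2^2/h$, which is attained when all $\hat g(x,U_l)$ coincide. It is not bounded by a $k$-free constant. Bernstein with $\mathrm{Var}(Z)\le \|Z\|_\infty\mathbb{E}Z$ then yields $h^{2m}+t'/(n'h)$, not $h^{2m}+t'/n'$ with $C$ depending only on $(c_0,l_f)$.
\end{itemize}
The paper's one-line reduction to Proposition~\ref{prop:conditional-moments} glosses over the same boundedness issue. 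This is a caveat on the stated bound rather than a defect specific to your argument.
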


The proof is provided in Appendix~\ref{sec:proof-conditional-density-estimation}.
The right-hand side of~\eqref{equ:result-conditional-density} consists of three components.
The first arises from Monte Carlo sampling variability, the second reflects the intrinsic statistical rate for estimating the conditional distribution, and the third captures the contribution from the tail behavior.
The degradation in rate is again attributable to transforming an $L_2$-type CDF estimation error into a kernel density estimation error.
As the number of Monte Carlo samples $k$ increases and the smoothness parameter $m$ becomes larger, the overall rate approaches $\delta_{\mathtt{NDRE}}$.

\subsubsection{Conditional Score Function Estimation}
The score function is a fundamental building block of efficient estimation and inference \citep{van2000asymptotic}. Estimating conditional scores provides access to gradients of log-densities, which are crucial for tasks such as semiparametric inference~\citep{chernozhukov2018double} and score-based generative modeling~\citep{song2020score}.

\begin{condition}
\label{cond:conditional-score-function-estimation}
For every $x,y$, $\frac{d}{dy}[p^{\star}(y|x)]$ exists and $|\frac{d}{dy}[p^{\star}(y|x)]|\le C_q$.~\footnote{This condition can be implied by Condition~\ref{cond:conditional-density-estimation} with $m\ge 2$; it is restated here for clarity.}
\end{condition}

\begin{proposition}
\label{prop:conditional-score-estimation}
Under the setting of Theorem \ref{thm:oracle-n-infty}, assume further that Condition \ref{cond:conditional-quantile}, \ref{cond:conditional-density-estimation} and \ref{cond:conditional-score-function-estimation} holds. There exists some constant $C$ dependent on $(c_0, l_f,c_d)$ such that under the event in \eqref{eq:est-error-ndre}, the conditional score function estimator constructed in \eqref{eq:cs-est} satisfies, for any $k\ge 1$,
\begin{align*}
\EE\Big[\int(\hat s(y|X)-s^{\star}(y|X))^2dy\Big|\hat g\Big]\le  C\max\{k^{-\frac{1}{2m+1}},(\delta_{\mathtt{NDRE}})^{\frac{1}{2m+3}}\}^{2(m-1)}
\end{align*}
Suppose we further have $n'$ i.i.d. observations $X'_1, \ldots, X'_{n'}$ drawn from $\mu_0$, then for any $n'\ge 3$ and $t'>0$, the following event happens with probability at least $1-2e^{-t'}$.
\begin{align}
\label{equ:result-conditional-score}
    \frac{1}{n'} \sum_{i=1}^{n'} \int(\hat s(y|X'_i)-s^{\star}(y|X'_i))^2dy\le C\left(\max\{k^{-\frac{1}{2m+1}},(\delta_{\mathtt{NDRE}})^{\frac{1}{2m+3}}\}^{2(m-1)}+\frac{t'}{n'}\right)
\end{align} 
\end{proposition}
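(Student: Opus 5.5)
We reduce the score bound to the density bound of Proposition~\ref{prop:conditional-density-estimation}, applied to both the numerator and the denominator of \eqref{eq:cs-est}. Write $\hat p(y|x)$ for the kernel density estimator in \eqref{eq:cd-est}. Write $\hat q(y|x)=\frac{1}{kh^2}\sum_{l}\tilde K'((y-\hat g(x,U_l))/h)$ for the numerator, which estimates $q^\star(y|x)=\frac{d}{dy}p^\star(y|x)$. The map $(a,b)\mapsto a/\max\{b,c_d\}$ is Lipschitz on bounded sets with constant of order $c_d^{-2}$. By Condition~\ref{cond:conditional-quantile} we have $p^\star\ge c_d$, so $s^\star=q^\star/\max\{p^\star,c_d\}$. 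Moreover $|q^\star|\le C_q$ by Condition~\ref{cond:conditional-score-function-estimation}. Hence pointwise
\begin{align*}
|\hat s-s^\star|\le c_d^{-1}|\hat q-q^\star|+C_q c_d^{-2}|\hat p-p^\star|.
\end{align*}
Squaring and integrating over $y$, which ranges over the bounded set $[-c_0-h,c_0+h]$ and contains the support of both quantities when $K$ is compactly supported (otherwise one handles the tails separately), reduces the problem to bounding $\EE\int(\hat q-q^\star)^2dy$ and $\EE\int(\hat p-p^\star)^2dy$ given $\hat g$. The latter is already controlled by Proposition~\ref{prop:conditional-density-estimation} at the better exponent $2m$.

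For the derivative term we mirror the density proof with the standard bias/variance/estimation split. Let $\bar q(y|x)=h^{-2}\int \tilde K'((y-z)/h)\,dF_{\hat g}(x,z)$ and $q_h(y|x)=h^{-2}\int\tilde K'((y-z)/h)\,dF^\star(x,z)$.

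\textbf{Monte Carlo term.} The variance of $\hat q-\bar q$ integrated in $y$ is of order $1/(kh^3)$.

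\textbf{Bias term.} Integrating by parts twice shows $q_h(y|x)=\int \tilde K(s)\,\partial_y^2F^\star(x,y-hs)\,ds$. Since $\tilde K$ is an $(m-1)$-order kernel and Condition~\ref{cond:conditional-density-estimation} gives $m+1$ derivatives of $F^\star$ with Lipschitz top derivative, a Taylor expansion yields $|q_h-q^\star|\lesssim h^{m-1}$, so the squared bias is $h^{2(m-1)}$.

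\textbf{Estimation term.} Integrating by parts once gives $\bar q-q_h=-h^{-3}\int \tilde K''((y-z)/h)\{F_{\hat g}(x,z)-F^\star(x,z)\}dz$. Young's inequality then bounds $\int(\bar q-q_h)^2dy$ by $h^{-6}\|\tilde K''(\cdot/h)\|_{L_1}^2\int(F_{\hat g}-F^\star)^2dz\lesssim h^{-4}\int (F_{\hat g}-F^\star)^2dz$. Averaging over $X$ and using Theorem~\ref{thm:oracle-n-infty} gives $h^{-4}\delta_{\mathtt{NDRE}}$.

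Compared with the density case, each of the three terms loses a factor $h^{2}$. Choosing the same bandwidth as in Proposition~\ref{prop:conditional-density-estimation}, $h\asymp\max\{k^{-1/(2m+1)},\delta_{\mathtt{NDRE}}^{1/(2m+3)}\}$, balances the terms. The total is then $h^{2(m-1)}$ up to constants, which is the claimed rate. The high-probability statement \eqref{equ:result-conditional-score} follows as in the earlier propositions. The summands $\int(\hat s-s^\star)^2dy$ are i.i.d.\ given $\hat g$ and bounded, because the denominator is at least $c_d$ and $\hat q$ is bounded by $O(h^{-2})$ on a bounded range. Bernstein's inequality for these summands, combined with the corresponding deviation bound for the density part (which accounts for the $2e^{-t'}$), gives the result.

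\textbf{Main obstacle.} The hardest step is the bookkeeping that makes the Monte Carlo and estimation terms for the derivative match the stated exponent, while keeping the boundedness needed for concentration. The summands are only bounded by $O(h^{-4})$ a priori, which would inflate the $t'/n'$ term. I would first try clipping $\hat q$ at a constant multiple of $C_q$. Clipping can only decrease the error because $|q^\star|\le C_q$, and it restores $O(1)$-bounded summands.
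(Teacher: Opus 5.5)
Your bound on $\mathbb{E}[\int(\hat s-s^\star)^2dy\mid\hat g]$ is correct and follows the paper's route. You use the same pointwise reduction $|\hat s-s^\star|\le c_d^{-1}|\hat q-q^\star|+C_qc_d^{-2}|\hat p-p^\star|$, Proposition~\ref{prop:conditional-density-estimation} for the $\hat p$ part, a Monte Carlo/estimation/bias split for $\hat q$, and the same bandwidth. Your estimates are slightly sharper than the paper's pointwise Cauchy--Schwarz bounds. Fubini gives integrated variance $\|\tilde K'\|_{L_2}^2/(kh^3)$ without the paper's extra $\sqrt{\delta_{\mathtt{NDRE}}}/(kh^{9/2})$ term. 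Young's inequality gives $h^{-4}\delta_{\mathtt{NDRE}}$ instead of $h^{-5}\delta_{\mathtt{NDRE}}$, so with the stated $h$ that term is dominated ($\lesssim h^{2m-1}$) rather than balanced. One small correction: Condition~\ref{cond:conditional-density-estimation} gives $\partial_y^mF^\star$ Lipschitz, not $m+1$ derivatives. That is exactly what the $h^{m-1}$ bias for $\partial_y^2F^\star$ with an $(m-1)$-order kernel needs.

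The gap is in the deviation bound \eqref{equ:result-conditional-score}. Clipping $\hat q$ at a multiple of $C_q$ does give $O(1)$ summands and keeps the expectation bound, but it changes the estimator. What you would prove is a statement about a clipped version of \eqref{eq:cs-est}, not about \eqref{eq:cs-est} itself.

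For the estimator as defined, the obstacle you flagged is genuine and cannot be argued away. The event \eqref{eq:est-error-ndre} only controls an average over $\mu_{0,x}$, so it does not exclude $\hat g(x,\cdot)\equiv z_0$ for $x$ in a set of mass of order $\delta_{\mathtt{NDRE}}$. For such $x$, $\hat s(y|x)=h^{-2}\tilde K'((y-z_0)/h)/\max\{h^{-1}K((y-z_0)/h),c_d\}$ is of order $h^{-1}$ on an interval of length of order $h$, hence $\int(\hat s-s^\star)^2dy\gtrsim h^{-1}$. Take $n'=3$ and $k$ large, so that $h\asymp\delta_{\mathtt{NDRE}}^{1/(2m+3)}$. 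Take $t'\asymp\log(1/\delta_{\mathtt{NDRE}})$, chosen so that $2e^{-t'}$ is below the probability that some $X'_i$ lands in that set. Then with probability exceeding $2e^{-t'}$ the left side of \eqref{equ:result-conditional-score} is $\gtrsim\delta_{\mathtt{NDRE}}^{-1/(2m+3)}$. The right side is only $O(\log(1/\delta_{\mathtt{NDRE}}))$, so the stated bound fails once $\delta_{\mathtt{NDRE}}$ is small.

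The paper does not confront this. It only says the argument is similar to Proposition~\ref{prop:conditional-moments}, whose Bernstein step relies on summands bounded by a constant. You therefore have two options:
\begin{itemize}
\item State the result for the clipped estimator, applying Bernstein directly to the bounded variables $\int(\hat s-s^\star)^2dy$. Do not route through the density deviation bound, whose summands $\int(\hat p-p^\star)^2dy$ can also be of order $h^{-1}$.
\item Keep \eqref{eq:cs-est} and accept a deviation term $h^{-3}t'/n'$. This is what Bernstein yields from the a priori bound $\|\hat q\|_{L_2}^2\le h^{-3}\|\tilde K'\|_{L_2}^2$.
\end{itemize}
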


The proof is provided in Appendix~\ref{sec:proof-conditional-score-estimation}.
In~\eqref{equ:result-conditional-score}, the only difference from~\eqref{equ:result-conditional-density} lies in the exponent of the maximum term, which is 
$2(m-1)$ instead of $2m$.
This adjustment reflects the fact that the score function corresponds to a first-order derivative in the nonparametric setting, thereby reducing the effective smoothness by one order.
\section{Numerical Studies}
\label{sec:numerical}

In this section, we present both simulation and real-data experiments to validate the theoretical results and demonstrate the empirical performance of the proposed method.

\subsection{Simulation Models}

We begin by examining several synthetic data-generating mechanisms designed to evaluate different aspects of the proposed estimator. 
Throughout, the covariate \( X \) is drawn from the uniform distribution on \([-1,1]\), that is, \( X \sim \mathcal{U}[-1,1] \). We investigate five models representing distinct conditional structures for 
\( Y \mid X \):
Model~1a is a univariate location–scale model, and Model~1b provides its multivariate extensions.
Models~2a and~2b specify explicit and implicit forms of the conditional distribution, respectively.
Finally, Model~3 is a Gaussian mixture model designed to test the estimator’s ability to capture multimodality.

\noindent\textbf{\underline{Model 1a (Univariate location–scale model).}}
We begin with a heteroskedastic location–scale model of the form \vspace{-1em}
\[
Y = \mu(X) + \sigma(X)U, \quad 
\mu(X) = \sin(3X), \quad 
\sigma(X) = 1 - 0.7\cos(4X),
\]
where \( U \sim \mathcal{U}[-1,1] \). This model features nonlinear mean and variance structures.

\noindent\textbf{\underline{Model 1b (Multivariate location–scale model).}}
We next extend Model~1a to a five-dimensional covariate \( X = (x_1,\ldots,x_5) \). 
Two variants are examined:
\begin{enumerate}
    \item \emph{Additive model:}\vspace{-1em}
    \[\mu_{\mathsf{add}}(X) = \sin(3x_1) + 0.5\sin(3x_2), \quad 
    \sigma_{\mathsf{add}}(X) = 1 - 0.3\cos(x_1) - 0.2\cos(2x_3).
    \]
    \item \emph{Interactive model:}
$\qquad
    \mu_{\mathsf{int}}(X) = x_1x_2, \quad 
    \sigma_{\mathsf{int}}(X) = 1 - \frac{1}{2}x_3x_4.
$
\end{enumerate}
These settings allow us to examine both additive and interaction effects in higher dimensions.

\noindent\textbf{\underline{Model 2a (Explicit conditional density).}}
We then consider a model with an analytically tractable conditional probability density function (PDF):
\[
p_Y(t \mid X) = \left( \frac{1}{2} - \frac{v(X)t}{16} \right)\mathbb{I}\{|t| \le 1\}, \quad v(X) = 1 + X + 4X^2.
\]
The corresponding conditional cumulative distribution function (CDF), quantile function, and conditional mean are
\vspace{-1em}
\begin{align*}
F_Y(t \mid X) &= (t+1)\Big(\tfrac{1}{2} + \tfrac{(1-t)v(X)}{32}\Big),\\ \quad
Q(\alpha \mid X) &= \frac{8 - \sqrt{64 + v(X)\big(16 + v(X) - 32\alpha\big)}}{v(X)}, \quad
\mathbb{E}[Y \mid X] = -\frac{v(X)}{24}.
\end{align*}

\noindent\textbf{\underline{Model 2b (Implicitly defined conditional distribution).}}
Here we specify \( Y \) through a transformation of a latent variable \( U \):\vspace{-1em}
\[
Y = g(X,U) = \exp\!\left(-\frac{v(X)U + U^2}{4}\right), \quad U \sim \mathcal{U}[0,1],
\]
with \( v(X) = 1 + X + 4X^2 \). 
The corresponding conditional CDF admits the closed-form expression 
\[
F_Y(t \mid X) = \frac{-v(X) + \sqrt{v(X)^2 - 16\log t}}{2}, \quad t \in [0, e^{-(v(X)+1)/4}],
\]
and the conditional mean is given by \vspace{-1em}
\[
\mathbb{E}[Y \mid X] = 2\sqrt{\pi} \Big[ \Phi\!\Big(\frac{v(X)}{2\sqrt{2}} + \frac{1}{\sqrt{2}}\Big) - \Phi\!\Big(\frac{v(X)}{2\sqrt{2}}\Big) \Big] e^{v(X)^2/16},
\]
where \( \Phi \) denotes the standard normal CDF.

\noindent\textbf{\underline{Model 3 (Gaussian mixture model).}}
Finally, we include a conditional Gaussian mixture model to capture multimodality:\vspace{-1em}
\[
Y \sim \frac{1}{2}\mathcal{N}\!\left(\tfrac{1}{2} + \cos(4\pi X), \tfrac{1}{2}\right) 
+ \frac{1}{2}\mathcal{N}\!\left(\tfrac{1}{2} + \cos^2(4\pi X), \tfrac{1}{2}\right).
\]
This design allows us to assess the estimator’s robustness under non-Gaussian and mixture structures.

\vspace{-2em}
\subsection{Hyperparameter Grid Search}
We begin by examining the influence of the key hyperparameters \( (m_1, m_2, \tilde d) \) on estimation accuracy. 
Here, \( m_1 \) and \( m_2 \) denote the numbers of samples used for training and evaluation, respectively, while \( \tilde d \) corresponds to the noise dimension parameter in the model architecture. 
Performance is assessed using the $L_2$ error of the estimated conditional CDF.

\noindent\textbf{Metric computation.}
To evaluate the CDF estimation error, we generate 2000 independent samples of \( X \) for computing the expectation with respect to $X$, 2000 evenly spaced grids for computing integration with respect to $Y$,  and compute the $L_2$ error of the estimated CDF via midpoint method for numerical integration. 

\noindent\textbf{Results.}
We conduct experiments under three representative models: Model~1b (Multivariate Additive Location–Scale), Model~2a, and Model~3. 
For all experiments, the sample size is $10000$, and the number of maximum training epochs is fixed at 2000 with early stopping. The batch size and learning rate are set to be 5000 and 0.01, respectively. Each configuration is independently replicated 50 times to account for stochastic variation. 
The detailed results are reported in Tables~\ref{table:model1b}–\ref{table:model3}.

\noindent\textbf{Discussion.}
Several consistent patterns emerge from the results. First, increasing \(m_1\) and \(m_2\) generally improves estimation accuracy. Notably, \(m_2\) has the greater effect, although its quadratic computational cost warrants practical consideration. Second, for relatively simple, low-dimensional problems, a smaller noise dimension (\(\tilde d = 1\)) is sufficient, whereas in higher-dimensional settings, increasing the noise dimension to \(\tilde d = 3\) yields clear benefits.

\begin{table}[ht]
\centering
\footnotesize

\begin{minipage}{0.48\textwidth}
\centering
\caption{Model 1b (Mean $\pm$ Std of CDF Error)}
\label{table:model1b}
\begin{tabular}{c|c|cc}
\toprule
$m_1$ & $m_2$ & $\tilde d=1$ & $\tilde d=3$ \\
\midrule
20   & 2  & $0.0843 \pm 0.0100$ & $0.0827 \pm 0.0077$ \\
20   & 10 & $0.0726 \pm 0.0081$ & $0.0684 \pm 0.0063$ \\
20   & 50 & $\mathbf{0.0675 \pm 0.0055}$ & $\mathbf{0.0629 \pm 0.0048}$ \\
\midrule
200  & 2  & $0.0806 \pm 0.0081$ & $0.0791 \pm 0.0073$ \\
200  & 10 & $0.0727 \pm 0.0083$ & $0.0702 \pm 0.0070$ \\
200  & 50 & $\mathbf{0.0683 \pm 0.0055}$ & $\mathbf{0.0653 \pm 0.0057}$ \\
\midrule
2000 & 2  & $0.0815 \pm 0.0101$ & $0.0796 \pm 0.0080$ \\
2000 & 10 & $0.0735 \pm 0.0085$ & $0.0693 \pm 0.0060$ \\
2000 & 50 & $\mathbf{0.0661 \pm 0.0049}$ & $\mathbf{0.0639 \pm 0.0051}$ \\
\bottomrule
\end{tabular}
\end{minipage}
\hfill
\begin{minipage}{0.48\textwidth}
\centering
\caption{Model 2a (Mean $\pm$ Std of CDF Error)}
\label{table:model2a}
\begin{tabular}{c|c|cc}
\toprule
$m_1$ & $m_2$ & $\tilde d=1$ & $\tilde d=3$ \\
\midrule
20   & 2  & $0.0267 \pm 0.0053$ & $0.0303 \pm 0.0060$ \\
20   & 10 & $0.0212 \pm 0.0031$ & $0.0226 \pm 0.0041$ \\
20   & 50 & $\mathbf{0.0207 \pm 0.0025}$ & $\mathbf{0.0204 \pm 0.0031}$ \\
\midrule
200  & 2  & $0.0250 \pm 0.0042$ & $0.0265 \pm 0.0052$ \\
200  & 10 & $0.0217 \pm 0.0035$ & $0.0221 \pm 0.0032$ \\
200  & 50 & $\mathbf{0.0206 \pm 0.0026}$ & $\mathbf{0.0208 \pm 0.0033}$ \\
\midrule
2000 & 2  & $0.0260 \pm 0.0062$ & $0.0277 \pm 0.0068$ \\
2000 & 10 & $0.0229 \pm 0.0039$ & $0.0216 \pm 0.0034$ \\
2000 & 50 & $\mathbf{0.0219 \pm 0.0030}$ & $\mathbf{0.0202 \pm 0.0032}$ \\
\bottomrule
\end{tabular}
\end{minipage}

\end{table}

\begin{table}[ht]
\centering
\footnotesize
\caption{Model 3 (Mean $\pm$ Std of CDF Error)}
\label{table:model3}
\begin{tabular}{c|c|cc}
\toprule
$m_1$ & $m_2$ & $\tilde d=1$ & $\tilde d=3$ \\
\midrule
20   & 2  & $0.0771 \pm 0.0100$ & $0.0998 \pm 0.0096$ \\
20   & 10 & $0.0545 \pm 0.0060$ & $0.0592 \pm 0.0089$ \\
20   & 50 & $\mathbf{0.0502 \pm 0.0051}$ & $\mathbf{0.0516 \pm 0.0071}$ \\
\midrule
200  & 2  & $0.0664 \pm 0.0116$ & $0.0863 \pm 0.0109$ \\
200  & 10 & $0.0535 \pm 0.0066$ & $0.0565 \pm 0.0112$ \\
200  & 50 & $\mathbf{0.0501 \pm 0.0054}$ & $\mathbf{0.0473 \pm 0.0032}$ \\
\midrule
2000 & 2  & $0.0684 \pm 0.0087$ & $0.0935 \pm 0.0121$ \\
2000 & 10 & $0.0539 \pm 0.0079$ & $0.0543 \pm 0.0074$ \\
2000 & 50 & $\mathbf{0.0492 \pm 0.0049}$ & $\mathbf{0.0485 \pm 0.0049}$ \\
\bottomrule
\end{tabular}
\end{table}

\subsection{Comparison with Standard Regression}
We next compare the proposed Neural Distributional Regression (NDR) method with conventional regression approaches, including least squares regression and quantile regression. 
To ensure a fair comparison, we employ identical neural networks and training configurations across all methods. 
Specifically, the networks are implemented with depth 3 and width 100, trained for 500 epochs with a batch size of 5000, learning rate of 0.01, and weight decay set to zero, thereby aligning the hyperparameter settings across different estimators.

The quantitative results are summarized in Table~\ref{table:comparison-standard-regression}. We replicate the experiments for $50$ times. Each entry is presented in the form of $\text{mean}\pm \text{std}$.
For relatively simple settings, NDR achieves performance comparable to that of standard regression methods, despite addressing a more general and challenging task of learning the entire conditional distribution rather than a single functional moment. 
For more complex scenarios, such as Model~1b, NDR demonstrates superior performance, highlighting its advantage in capturing nonlinear and heteroskedastic structures that standard regression methods fail to fully model.

\begin{table}[ht]
\centering
\scriptsize
\caption{Comparison of error metrics using standard regression and NDR. Mean errors are measured with $L_2$ loss; quantile errors are measured with $L_1$ loss.}
\label{table:comparison-standard-regression}
\begin{tabular}{c|ccc|ccc}
\toprule
\multirow{2}{*}{Model} 
& \multicolumn{3}{c|}{Regression} 
& \multicolumn{3}{c}{NDR} \\
& Mean ($L_2$) & $Q_{0.5}\ (L_1)$ & $Q_{0.25}\ (L_1)$ 
& Mean ($L_2$) & $Q_{0.5} (L_1)$ & $Q_{0.25}\ (L_1)$ \\
\midrule
Model 1a 
& 0.0323 $\pm$ 0.0083 
& 0.0363 $\pm$ 0.0095 
& 0.0319 $\pm$ 0.0075 
& 0.0411 $\pm$ 0.0108 
& 0.0476 $\pm$ 0.0078 
& 0.0424 $\pm$ 0.0105 \\
Model 1b 
& 0.0731 $\pm$ 0.0058 
& 0.0799 $\pm$ 0.0073 
& 0.0721 $\pm$ 0.0064 
& 0.0699 $\pm$ 0.0073 
& 0.0668 $\pm$ 0.0062 
& 0.0638 $\pm$ 0.0061 \\
Model 2a 
& 0.0169 $\pm$ 0.0090 
& 0.0241 $\pm$ 0.0142 
& 0.0167 $\pm$ 0.0086 
& 0.0242 $\pm$ 0.0039 
& 0.0308 $\pm$ 0.0037 
& 0.0246 $\pm$ 0.0031 \\
Model 2b 
& 0.0056 $\pm$ 0.0015 
& 0.0077 $\pm$ 0.0018 
& 0.2759 $\pm$ 0.0057 
& 0.0083 $\pm$ 0.0021 
& 0.0102 $\pm$ 0.0018 
& 0.2761 $\pm$ 0.0049 \\
Model 3 
& 0.0510 $\pm$ 0.0063 
& 0.0549 $\pm$ 0.0075 
& 0.0532 $\pm$ 0.0390 
& 0.0816 $\pm$ 0.0137 
& 0.0644 $\pm$ 0.0081 
& 0.0569 $\pm$ 0.0079 \\
\bottomrule
\end{tabular}
\end{table}

\subsection{Real Data Experiments}
We further evaluate the proposed NDR method on two benchmark real-world datasets. 
For both datasets, we preprocess the data by removing outliers using the interquartile range (IQR) rule and standardizing all features to have zero mean and unit variance.

\noindent\textbf{California Housing.}
The first dataset is the well-known \emph{California Housing} dataset, originally compiled from the 1990 U.S. Census by \citet{pace1997sparse}. 
It contains a total of 20{,}640 observations, each representing a California district, with eight predictor variables such as median income, average number of rooms, average occupancy, and geographic coordinates. 
The response variable is the median house value in each district (measured in hundreds of thousands of U.S. dollars). 
We use all eight predictors as covariates.

\noindent\textbf{Protein Tertiary Structure (CASP).}
The second dataset is the \emph{Physicochemical Properties of Protein Tertiary Structure} dataset, available from the UCI Machine Learning Repository.\footnote{\url{https://archive.ics.uci.edu/dataset/265/physicochemical+properties+of+protein+tertiary+structure}} 
It consists of 45{,}730 samples, each characterized by nine continuous physicochemical descriptors (e.g., hydrophobicity, polarity, and secondary structure score). 
The target variable is the root mean square deviation (RMSD) between the predicted and actual protein structures, which is a continuous measure of structural similarity.

\noindent\textbf{Experimental setup.}
We compare NDR with two state-of-the-art conditional distribution estimation methods implemented in \texttt{R}: \emph{LinCDE} \citep{gao2022lincde} and \emph{Distribution Boosting} \citep{friedman2020contrast}. 
For each dataset, we perform 100 random permutations and split the data into training (70\%), validation (15\%), and test (15\%) sets.
All three methods are trained and evaluated on identical splits to ensure comparability. 
Performance is assessed based on (i) the average width and empirical coverage of the 95\%-targeted prediction interval, and (ii) the average negative log-likelihood (NLL) loss computed on the test set.

\noindent\textbf{Results.}
The results, summarized in Table~\ref{table:real-data}, demonstrate that NDR consistently outperforms both LinCDE and Distribution Boosting across the two datasets. 
In particular, NDR achieves narrower prediction intervals while maintaining accurate coverage, reflecting its ability to learn sharper conditional distributions. 
The competing methods tend to produce overly conservative prediction bands, resulting in excessive interval width. 
Moreover, despite its improved accuracy, NDR attains comparable computational efficiency to the baseline methods, highlighting its scalability to moderately large datasets.

\begin{table}[ht]
\centering
\small
\caption{Comparison of NDR, LinCDE, and Distribution Boosting (DB) on Housing and Protein datasets (100 seeds). Reported are the mean $\pm$ standard deviation.}
\label{table:real-data}
\begin{tabular}{lcccc}
\toprule
Dataset & Method & NLL & Coverage ($\alpha=0.95$) & Width ($\alpha=0.95$) \\
\midrule
\multirow{3}{*}{Housing} 
  & NDR & $\mathbf{0.569 \pm 0.051}$ & 0.947 $\pm$ 0.010 & 1.739 $\pm$ 0.108 \\
  & LinCDE       & 0.596 $\pm$ 0.014 & 0.977 $\pm$ 0.003 & 2.379 $\pm$ 0.019 \\
  & DB (conTree) & 0.820 $\pm$ 0.059 & 0.970 $\pm$ 0.004 & 2.246 $\pm$ 0.029 \\
\midrule
\multirow{3}{*}{Protein} 
  & NDR & $\mathbf{0.604 \pm 0.030}$ & 0.950 $\pm$ 0.005 & 2.205 $\pm$ 0.114 \\
  & LinCDE       & 0.616 $\pm$ 0.010 & 0.975 $\pm$ 0.002 & 2.799 $\pm$ 0.006 \\
  & DB (conTree) & 0.752 $\pm$ 0.027 & 0.966 $\pm$ 0.002 & 2.739 $\pm$ 0.016 \\
\bottomrule
\end{tabular}
\end{table}

\setstretch{1} 
\bibliographystyle{apalike2}
\bibliography{main.bib}
\newpage

\setstretch{1.5} 
\appendixpage
\appendix

\section{Proof of Theorem \ref{thm:oracle-n-infty}}

We write the population-level counterpart of \eqref{eq:loss-n} as
\begin{align*}
    \mathsf{R}_{\infty, \infty}(g) = \mathbb{E} \left[ 2 |g(X, U) - Y| - |g(X, U) - g(X, U')|\right].
\end{align*}
For any $g, \tilde{g} \in \mathcal{H}_{\mathtt{nn}}(d+1, N, L, B, M)$, define the following two key quantities.
\begin{align*}
    \Delta_n(g, \tilde{g}) &= \mathsf{R}_{n,\infty}(g) - \mathsf{R}_{n, \infty}(\tilde{g}) - \left(\mathsf{R}_{\infty,\infty}(g) - \mathsf{R}_{\infty, \infty}(\tilde{g})\right)\\
    \chi_n(g)&=\mathsf{R}_{n,m}(g) - \mathsf{R}_{n, \infty}({g}).
\end{align*}

These two quantities separate the two sources of randomness coming from original samples $\{(x_i,y_i)\}_{i=1}^n$ and the generated random variables $\{U_{b,i,k}\}$.
The following two propositions are the key to our main results, where the first uses localization to derive an instance-dependent error bound, and the second unravels the effects of randomly generated random variables.
\begin{proposition}
\label{prop:instant-dependent-error-bound}
    Under Condition \ref{cond:bounded}, the following event
    \begin{align*}
        |\Delta_n(g, \tilde{g})| \le C\left(\delta_{n, t}^2 + \delta_{n, t} \|F_g - F_{\tilde{g}}\|_2\right)\ ,\forall g,\tilde g\qquad \text{with} \qquad \delta_{n, t} = NL \sqrt{\frac{\log n}{n}} + \sqrt{\frac{t}{n}}
    \end{align*} occurs with probability at least $1-2e^{-t}$.
\end{proposition}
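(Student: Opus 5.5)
The plan is to rewrite the per-sample loss in terms of induced CDFs, so that the variance of a loss difference is controlled by $\|F_g-F_{\tilde g}\|_2^2$. A localized (peeled) Bernstein bound over an $\epsilon$-net of the network class then gives the stated inequality. Throughout, write $\mathsf{R}_{n,\infty}(g)=\frac1n\sum_{i=1}^n \ell_g(X_i,Y_i)$ with
\begin{align*}
\ell_g(x,y)=\mathbb{E}_{U,U'}\big[2|g(x,U)-y|-|g(x,U)-g(x,U')|\big].
\end{align*}

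\textbf{Step 1: CDF representation of the loss.} Using $\mathbb{E}|Z-y|=\int\{F_Z(t)(1-\mathbbm{1}\{y\le t\})+(1-F_Z(t))\mathbbm{1}\{y\le t\}\}dt$ and $\mathbb{E}|Z-Z'|=2\int F_Z(1-F_Z)dt$, I will verify the pointwise identity
\begin{align*}
\ell_g(x,y)=2\int_{-c_0}^{c_0}\big(F_g(x,t)-\mathbbm{1}\{y\le t\}\big)^2dt+\text{const}.
\end{align*}
The integral lives on $[-c_0,c_0]$ because $|Y|\le c_0$ and $g$ is truncated at $M=c_0$. This is the pointwise version of \eqref{eq:identity-ccdf}. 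It yields
\begin{align*}
\ell_g-\ell_{\tilde g}=2\int_{-c_0}^{c_0}(F_g-F_{\tilde g})(F_g+F_{\tilde g}-2\mathbbm{1}\{y\le t\})\,dt .
\end{align*}
Two consequences follow. First, $|\ell_g-\ell_{\tilde g}|\le 4\int|F_g-F_{\tilde g}|dt\le C$. Second, by Cauchy--Schwarz on the bounded interval, $|\ell_g-\ell_{\tilde g}|(x,y)\le C\big(\int(F_g-F_{\tilde g})^2(x,t)dt\big)^{1/2}$. Hence $\mathrm{Var}(\ell_g-\ell_{\tilde g})\le C\|F_g-F_{\tilde g}\|_2^2$, which is exactly the variance--mean relation needed for localization.

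\textbf{Step 2: complexity.} The $\epsilon$-net is taken in the network weights. ReLU networks with $\|W_l\|_{\max}\vee\|b_l\|_{\max}\le B$ on bounded inputs are Lipschitz in their parameters with constant at most $(NB)^{O(L)}$, uniformly in $(x,u)$, and truncation preserves this. With $O(N^2L)$ parameters, a sup-norm $\epsilon$-cover of $\mathcal H_{\mathtt{nn}}$ therefore has log-cardinality at most $CN^2L\cdot L\log(nNB/\epsilon)\lesssim (NL)^2\log n$ for $\epsilon=n^{-1}$, using Condition \ref{cond:bounded}(c). Two facts make this cover usable for both the losses and the localization radius. First, $|\ell_g-\ell_{g'}|\le 4\sup|g-g'|$. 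Second, by the $W_1$ bound $\int|F_g-F_{g'}|dt\le\mathbb{E}_U|g-g'|$, we get $\|F_g-F_{g'}\|_2^2\le 2c_0\sup|g-g'|$. So replacing $(g,\tilde g)$ by net points changes $\Delta_n$ by $O(\epsilon)$ and $\|F_g-F_{\tilde g}\|_2$ by $O(\sqrt\epsilon)$. Both are absorbed into $\delta_{n,t}^2+\delta_{n,t}\|F_g-F_{\tilde g}\|_2$ when $\epsilon=n^{-1}$, since $\delta_{n,t}\sqrt\epsilon\le\delta_{n,t}^2$.

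\textbf{Step 3: localized concentration and peeling.} For each fixed net pair, Bernstein's inequality with variance $\le Cr^2$, where $r=\|F_g-F_{\tilde g}\|_2$, and envelope $C$ gives $|\Delta_n|\le C(r\sqrt{s/n}+s/n)$ with probability $1-2e^{-s}$. I will take $s=t+C(NL)^2\log n+\log(\text{number of shells})$ and peel over dyadic shells $r\in(2^{-j-1},2^{-j}]$, $j\le \log_2 n$, with a union bound over net pairs (squared cardinality, which only doubles the entropy). Tiny radii $r\le n^{-1/2}$ are handled by the $\delta_{n,t}^2$ term. This gives $|\Delta_n|\le C(\delta_{n,t}^2+\delta_{n,t}r)$ uniformly, with $\delta_{n,t}=NL\sqrt{\log n/n}+\sqrt{t/n}$.

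The main obstacle I expect is Step 2: making the parametric Lipschitz covering bound match $(NL)^2\log n$ rather than picking up an extra factor of $L$. If the weight-Lipschitz route loses that factor, I would fall back to the pseudo-dimension bound $O(N^2L^2\log(NL))$ for ReLU networks (Bartlett et al.) applied to the class $\{\ell_g-\ell_{\tilde g}\}$, with chaining, which yields the same $\delta_{n,t}$.
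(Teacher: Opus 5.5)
Your proof is correct, and it takes a more elementary route than the paper's.

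The paper splits $\tfrac12\Delta_n=\mathsf T_1+\mathsf T_2$ and covers the derived classes $\mathcal V(\mathcal G)$ and $\mathcal U(\mathcal G)$ through Lemmas~\ref{lemma:log-cover-number-conversion}--\ref{lemma:log-cover-number-conversion-2}. It then runs full localization on each class: Theorem 19.3 of \citet{gyorfi2002distribution} to compare $\|\cdot\|_n$ with $\|\cdot\|_2$, symmetrization with Dudley chaining, Talagrand's inequality on $L_2$-balls, and peeling.

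You stay at the level of $g$ throughout. Your identity for $\ell_g-\ell_{\tilde g}$ gives the variance bound in one line. The deterministic facts $|\ell_g-\ell_{g'}|\le 4\|g-g'\|_\infty$ and $\|F_g-F_{g'}\|_2^2\le\|g-g'\|_\infty$ then reduce everything to a fixed net at scale $1/n$, where Bernstein plus a union bound finishes. The second fact is your analogue of Lemma~\ref{lemma:g-to-G}; its constant is $1$ rather than $2c_0$, since $(F_g-F_{g'})^2\le|F_g-F_{g'}|$. This single-scale argument loses nothing here, because the entropy is logarithmic and $\delta_{n,t}$ already carries the $\sqrt{\log n}$ the union bound costs. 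The paper's chaining and Talagrand machinery would only pay off for classes with polynomial entropy.

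Two parts of your plan can be dropped. First, the peeling in Step 3 is redundant. The net is deterministic, so each pair's variance proxy $\|F_{\pi g}-F_{\pi\tilde g}\|_2^2$ is a fixed number, and pairwise Bernstein with that variance already gives the instance-dependent bound. Second, the extra-$L$ obstacle you anticipate in Step 2 does not arise. $O(N^2L)$ parameters times a log-Lipschitz constant of order $L\log(BN)$ gives exactly $(NL)^2\log n$. Lemma~\ref{lemma:covering-number-G} proves this by the same parameter-perturbation argument, so no pseudo-dimension fallback is needed.
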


\begin{proposition}
\label{prop:chainig-bernstein}
Under Condition \ref{cond:bounded}, the following event 
\begin{align*}
|\chi_n(g)|\le C\left(\frac{
\delta_{n,t}}{\sqrt{m_1m_2}}+\delta^2_{n,t}\right)
\end{align*}
occurs with probability at least $1-2e^{-t}$.

\end{proposition}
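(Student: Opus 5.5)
The plan is to condition on the data $\mathcal{D}=\{(X_i,Y_i)\}_{i=1}^n$ and treat $\chi_n(g)$ as a centered average of independent, bounded blocks indexed by $(b,i)\in[m_1]\times[n]$. For each $g$, set
\begin{align*}
\xi_{b,i}(g) = \frac{2}{m_2}\sum_{k=1}^{m_2}|Y_i-g(X_i,U_{b,i,k})| - \frac{1}{m_2(m_2-1)}\sum_{k\neq k'}|g(X_i,U_{b,i,k})-g(X_i,U_{b,i,k'})| .
\end{align*}
Then $\chi_n(g)=\frac{1}{nm_1}\sum_{b,i}\{\xi_{b,i}(g)-\mathbb{E}_U\xi_{b,i}(g)\}$. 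Given $\mathcal{D}$, the $nm_1$ blocks are independent because the $U_{b,i,k}$ are i.i.d. and independent of $\mathcal{D}$. By Condition \ref{cond:bounded}(b)–(c), $|Y_i|\le c_0$ and $|g|\le M=c_0$, so every block satisfies $|\xi_{b,i}(g)|\le C$.

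\textbf{Variance step.} The key point is that each block has conditional variance $\lesssim 1/m_2$, not $\lesssim 1$. The first part of $\xi_{b,i}$ is an average of $m_2$ i.i.d. bounded terms, so its variance is $O(1/m_2)$. The second part is a degree-two $U$-statistic with bounded kernel $|g(X_i,u)-g(X_i,u')|$. By the Hoeffding decomposition its variance is $\frac{4}{m_2}\mathrm{Var}(h_1)+O(1/m_2^2)\lesssim 1/m_2$, uniformly in $g$; this holds also for $m_2=2$. Hence $\mathrm{Var}(\chi_n(g)\mid\mathcal{D})\lesssim 1/(nm_1m_2)$. Bernstein's inequality for a fixed $g$ then gives, with probability at least $1-2e^{-s}$,
\begin{align*}
|\chi_n(g)|\lesssim \sqrt{\frac{s}{nm_1m_2}}+\frac{s}{nm_1}.
\end{align*}

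\textbf{Uniformity step.} I would make the bound uniform over $\mathcal{H}_{\mathtt{nn}}$ by discretization plus a union bound. The map $g\mapsto\xi_{b,i}(g)$ is $C$-Lipschitz in $\|g\|_\infty$, since $|\cdot|$ is $1$-Lipschitz and the second sum has normalized weight one. So an $\epsilon$-net in sup norm over $[-c_0,c_0]^p\times\mathrm{supp}(\mu_u)$ costs only an additive $C\epsilon$ in $\chi_n$; truncation $\mathrm{Tc}_M$ is $1$-Lipschitz and does not hurt. Counting $\asymp N^2L$ weights, each bounded by $B$, and using the standard parameter-to-output Lipschitz bound $(BN)^{L}$ for ReLU networks on bounded inputs, the net at scale $\epsilon=1/n$ has
\begin{align*}
\log\mathcal{N}\lesssim N^2L\cdot L\log(nBN)\lesssim (NL)^2\log n,
\end{align*}
using $\log B,\log(NL)\lesssim\log n$. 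A pseudo-dimension bound of order $WL\log W$ would serve equally well. Take the union bound with $s=\log\mathcal{N}+t$. Then $\sqrt{s/(nm_1m_2)}\lesssim \delta_{n,t}/\sqrt{m_1m_2}$. Also $s/(nm_1)\le s/n\lesssim\delta_{n,t}^2$, and the discretization error $1/n\le\delta_{n,t}^2$. Integrating out the conditioning on $\mathcal{D}$ preserves the probability $1-2e^{-t}$.

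\textbf{Main obstacle.} I expect the main difficulty to be the covering step: one needs a sup-norm cover that is valid for the random inputs $U$, and the Lipschitz constant in parameters grows like $(BN)^L$. This only contributes logarithmically, but it requires care to land exactly on $(NL)^2\log n$. The $U$-statistic variance bound with a constant independent of $g$ is the second point to verify carefully, since it is what produces the $1/\sqrt{m_1m_2}$ factor rather than $1/\sqrt{m_1}$. If the crude union bound turned out to be lossy, I would fall back on generic chaining with Bernstein increments. The quantity to control there is $\mathrm{Var}(\xi(g)-\xi(\tilde g))\lesssim\|g-\tilde g\|_\infty^2/m_2$, and this leads to the same rate.
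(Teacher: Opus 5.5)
Your proof is correct and has the same skeleton as the paper's. Both use a Bernstein bound that exploits the $O(1/(m_1m_2))$ variance of the Monte Carlo average; this is what yields $\delta_{n,t}/\sqrt{m_1m_2}$ even for $m_2=2$. Both then take a union bound over a sup-norm cover of the truncated ReLU class with $\log\mathcal N\lesssim (NL)^2\log n$.

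The differences are in execution.
\begin{itemize}
\item \emph{Where Bernstein is applied.} You condition on $\mathcal D$ and apply Bernstein to the $nm_1$ conditionally independent blocks, bounding each block's variance by $O(1/m_2)$ via the Hoeffding decomposition of the $U$-statistic. The paper applies Bernstein unconditionally to the $n$ summands $Z_i(g)$, each already averaged over $b$. It bounds $\mathrm{Var}(Z_i)\le 56/(m_1m_2)$ by writing the $U$-statistic as an average over permutations of sums over $\lfloor m_2/2\rfloor$ disjoint pairs. Your blocking gives the slightly sharper linear term $s/(nm_1)$ instead of $s/n$, which is immaterial for the stated bound.
\item \emph{The uniformity step.} The paper runs a multiscale chain starting at resolution $2^{-M}\approx n^{-1/2}$, with Hoeffding bounds on the increments. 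You discretize once at scale $1/n$. This is legitimate because $|\chi_n(g)-\chi_n(\tilde g)|\le 8\|g-\tilde g\|_\infty$ holds deterministically; the paper derives this same Lipschitz bound inside its chaining argument. Combined with $1/n\le\delta_{n,t}^2$, this means the chaining buys nothing here, and your one-shot union bound is the simpler route.
\end{itemize}

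The covering issue you flagged is shared by both arguments. A uniform sup-norm cover needs $(x,u)$ to range over a bounded set, i.e.\ $\mu_u$ with bounded support. The paper assumes this implicitly: its covering lemma is stated on $[0,1]^d$, and Corollary~\ref{cor:final-rate} requires bounded support for $\mu_u$.
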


We are now ready to prove Theorem \ref{thm:oracle-n-infty}. It follows from the definition of $\mathsf{R}_{\infty, \infty}$ that
\begin{align*}
    \mathsf{R}_{\infty, \infty}(g) - \mathsf{R}_{\infty, \infty}(\tilde{g}) 
    &= \int (F_g(x, t) - F_{\tilde{g}}(x, t))^2 dt\mu_x(dx) \\
    &~~~~~~~~ + 2 \int (F_g(x, t) - F_{\tilde{g}}(x, t))(F_{\tilde{g}}(x, t) - F^\star(x, t)) dt\mu_x(dx) \\
    &\ge 0.5 \int (F_g(x, t) - F_{\tilde{g}}(x, t))^2 dt\mu_x(dx) - 2 \int (F^\star(x, t) - F_{\tilde{g}}(x, t))^2 dt\mu_x(dx) \\
    &= 0.5 \|F_g - F_{\tilde{g}}\|_2^2 - 2 \|F^\star - F_{\tilde{g}}\|_2^2.
\end{align*}
At the same time, it follows from the optimality condition and Proposition \ref{prop:instant-dependent-error-bound} and \ref{prop:chainig-bernstein} that for any $\tilde{g}\in\mathcal{G}$,
\begin{align*}
    \mathsf{R}_{\infty, \infty}(\hat{g}) - \mathsf{R}_{\infty, \infty}(\tilde{g}) &= \mathsf{R}_{\infty, \infty}(\hat{g}) - \mathsf{R}_{n,\infty}(\hat{g}) + \mathsf{R}_{n,\infty}(\hat{g}) - \mathsf{R}_{n,\infty}(\tilde{g}) + \mathsf{R}_{n,\infty}(\tilde{g}) - \mathsf{R}_{\infty, \infty}(\tilde{g}) \\
    &\le |\Delta_n(\hat{g}, \tilde{g})| + \mathsf{R}_{n,\infty}(\hat{g}) - \mathsf{R}_{n,\infty}(\tilde{g}) \\
    &\le |\Delta_n(\hat{g}, \tilde{g})| + \mathsf{R}_{n,m}(\hat{g}) - \mathsf{R}_{n,m}(\tilde{g}) +|\chi_n(\hat g)|+|\chi_n(\tilde g)|\\
    &\le 3C \delta_{n, t}^2 + C^2 \delta_{n, t}^2 + \frac{1}{4} \|F_{\hat{g}} - F_{\tilde{g}}\|_2^2 + \delta_{\mathtt{opt}}+2C\frac{
\delta_{n,t}}{\sqrt{m_1m_2}}.
\end{align*} 
where in the last inequality we used $\delta_{n,t}\|F_{\hat g}-F_{\tilde g}\|_2\le \frac{1}{4C}\|F_{\hat g}-F_{\tilde g}\|_2^2+C\delta_{n,t}^2$. Combining the two inequalities together, we can obtain
\begin{align*}
    0.25 \|F_{\hat{g}} - F_{\tilde{g}}\|_2^2 \le (C^2 + 3C) \delta_{n, t}^2 + 2C\frac{
\delta_{n,t}}{\sqrt{m_1m_2}}+\delta_{\mathtt{opt}} + 2\|F^\star - F_{\tilde{g}}\|_2^2.
\end{align*} We can then conclude the proof using triangle inequality and taking $\tilde{g}$ by minimizing the approximation error.

\subsection{Proof of Proposition \ref{prop:instant-dependent-error-bound}}

Denote $\mathcal{G} = \{g: [0,1]^{d} \to \mathbb{R}\} = \mathcal{H}_{\mathtt{nn}}(d, N, L, \tilde d, M, B)$, and $\log \mathcal{N}(\delta, \mathcal{G}, \|\cdot \|_{\infty})$ to be the logarithmic covering number of $\mathcal{G}$ with respect to $\|\cdot \|_{\infty, [0,1]^{d}}$ norm. 
We first introduce a lemma characterizing the log-covering number of the function class. Recall $M=c_0$ in Condition \ref{cond:bounded}.
\begin{lemma}
\label{lemma:covering-number-G}
There exists some constant $C_1$ such that for $\delta<1$,
\begin{align*}
\log \mathcal{N}(\delta,\mathcal{G},\|\cdot\|_\infty)\le C_1 N^2L^2\log(\frac{(BN+2)L}{\delta}).
\end{align*}    
\end{lemma}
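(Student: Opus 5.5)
We prove Lemma~\ref{lemma:covering-number-G} by a parameter-perturbation argument: show that the map from weights to the function $\mathrm{Tc}_M(g(\cdot,\cdot))$ is Lipschitz in sup norm, then cover the parameter box $[-B,B]^{P}$ by a grid. Here $P$ is the total number of weights and biases. With $d_0=d$, intermediate widths $N$, output width $1$, and $N\ge d+2$ by Condition~\ref{cond:bounded}(c), we have $P\le dN+N+(L-1)(N^2+N)+N+1\lesssim N^2L$. Note that $N^2L\le N^2L^2$, so the stated bound with $N^2L^2$ is implied and has slack.

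\textbf{Step 1: layerwise growth.} Inputs lie in a bounded box, $\|[x,u]\|_\infty\le c$ (the noise $\mu_u$ has bounded support). For each layer the row sums satisfy $\|W_l\|_{\infty\to\infty}\le NB$ (or $dB$ in the first layer) and $\|b_l\|_\infty\le B$. ReLU is $1$-Lipschitz and satisfies $|\sigma(z)|\le|z|$. Hence the hidden activations obey
\[
\|h_l\|_\infty\le (NB+1)\|h_{l-1}\|_\infty + B,
\]
which gives $\|h_l\|_\infty\le c'(NB+2)^l$ for all $l$.

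\textbf{Step 2: perturbation bound.} Take two parameter vectors $\theta,\theta'$ with $\|\theta-\theta'\|_\infty\le\epsilon$. A telescoping argument, swapping one layer at a time, gives
\[
\sup_{x,u}|g_\theta-g_{\theta'}|\le \epsilon\, L\, c'' (NB+2)^{L+1}.
\]
Each swap contributes at most $\epsilon N\|h_{l-1}\|_\infty+\epsilon$ at layer $l$, and this error is then amplified by at most $(NB)^{L-l+1}$ through the later layers. The truncation $\mathrm{Tc}_M$ is $1$-Lipschitz, so the same bound holds for the class $\mathcal{G}$.

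\textbf{Step 3: grid and count.} Choose grid spacing $\epsilon=\delta/(c''L(NB+2)^{L+1})$ on $[-B,B]^P$. Every parameter then has a grid point within $\epsilon$, so the grid induces a $\delta$-cover of $\mathcal{G}$. The log-cardinality is
\[
P\log(2B/\epsilon+1)\lesssim N^2L\bigl[L\log(NB+2)+\log(L/\delta)+\log B\bigr]\lesssim N^2L^2\log\frac{(BN+2)L}{\delta},
\]
using $\delta<1$.

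The only delicate point is Step 2: the error at layer $l$ must be tracked with the correct amplification, so that the exponent of $(NB+2)$ is linear in $L$. Once that holds, the extra factor $L$ it produces inside the logarithm is exactly what upgrades $N^2L$ to $N^2L^2$. Everything else is routine.
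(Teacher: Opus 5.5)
Your proposal is correct and follows essentially the same route as the paper: a layer-by-layer telescoping perturbation bound that makes the network sup-norm Lipschitz in the parameters with constant of order $N(L+1)(BN+2)^{L+1}$ (Claim~\ref{claim:continuity-of-parameters}), then a grid on $[-B,B]^{(N^2+N)(L+1)}$, where the $L$ in the exponent of $(BN+2)$ is what produces $N^2L^2$. One small slip: your constant $c''L(NB+2)^{L+1}$ silently drops the factor $N$ coming from $\epsilon N\|h_{l-1}\|_\infty$, which is only valid when $N\lesssim NB+2$; keeping it, as the paper does, is harmless, since it pairs with $B$ in $2B/\epsilon$ to give $BN\le BN+2$.
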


\begin{proof}
We first claim a continuity-type result, similar to Lemma 8 in \cite{fan2022factor}.
\begin{claim}
\label{claim:continuity-of-parameters}
Let $\theta(g)=\{(W_l,b_l)_{l=1}^{L+1}\}$ as the set of parameters for $g\in \mathcal{G}$, $\theta(\breve g)=\{(\breve W_l,\breve b_l)_{l=1}^{L+1}\}$ as the set of parameters for $\breve g\in \mathcal{G}$. Define 
\begin{align*}
\|\theta(g)-\theta(\breve g)\|_{\infty}=\max_{1\le l\le L+1}\left(\|b_l-\breve b_l\|_{\max}\vee\|W_l-\breve W_l\|_{\max}\right)
\end{align*}
Then we have the following holds
\begin{align*}
\|g(x,u)-\breve g(x,u)\|_{\infty}\le 2N(L+1)(BN+2)^{L+1}\|\theta(g)-\theta(\breve g)\|_{\infty}
\end{align*}
\end{claim}
\begin{proof}[Proof of Claim \ref{claim:continuity-of-parameters}]
To streamline the proof, define $T^{\star}_l(x,u_l,\cdots,u_L)=(T_l(x,u_l),u_{l+1},\cdots,u_L)$ for $1\le l\le L$, and $T^{\star}_L(x,u_L)=T_L(x,u_L)$. Also define $\bar\sigma^{\star}(x,u_{l+1},\cdots,u_L)=(\bar \sigma(x),u_{l+1},\cdots,u_L)$. we have \begin{align*}
g(x,u)=T^{\star}_{L+1}\circ\bar \sigma^{\star}_L\circ T^{\star}_{L}\circ\cdots\circ \bar \sigma^{\star}_1\circ T^{\star}_{1}(x,u). 
\end{align*}
Using a similar definition, it holds that 
\begin{align*}
\breve g(x,u)=\breve T^{\star}_{L+1}\circ\bar \sigma^{\star}_L\circ \breve T^{\star}_{L}\circ\cdots\circ \bar \sigma^{\star}_1\circ \breve T^{\star}_{1}(x,u). 
\end{align*}
Now define 
\begin{align*}
g^{(l)}_-&=T^{\star}_{L+1}\circ\bar \sigma^{\star}_L\circ T^{\star}_{L}\circ\cdots\circ \bar \sigma^{\star}_l\circ T^{\star}_{l}\\
g^{(l)}_+&=\bar \sigma^{\star}_{l-1}\circ T^{\star}_{l-1}\circ\cdots\circ \bar \sigma^{\star}_1\circ T^{\star}_{1}.
\end{align*}
We have $g=g^{(l)}_-\circ g^{(l)}_+$ and the following decomposition
\begin{align}
\label{equ:covering-number-decomposition}
|g(x,u)-\breve g(x,u)|\le |(T^{\star}_{L+1}-\breve T^{\star}_{L+1})\circ g^{(L+1)}_+(x,u)|+\sum_{l=1}^L|\breve g^{(l+1)}_-\circ \bar \sigma^{\star}_l\circ(T_l^{\star}-\breve T_l^{\star})\circ g^{(l)}_+(x,u)|.
\end{align}
Next, we bound $\|g^{(l)}_+\|_\infty$. Recursively, it is straightforward to show that
\begin{align*}
\|g^{(l)}_+\|_\infty&\le \| T^{\star}_{l-1}\circ\cdots\circ \bar \sigma^{\star}_1\circ T^{\star}_{1}\|_\infty\\&\le \|g^{(l-1)}_+\|_\infty+\|T_{l-1}I(g^{(l-1)}_+)\|_\infty\\&\le \|g_+^{(l-1)}\|_\infty+B+BN\|I(g_+^{(l-1)})\|_\infty\\&\le \|g_+^{(l-1)}\|_\infty+B+BN\|g_+^{(l-1)}\|_\infty\\&=B+(BN+1)\|g_+^{(l-1)}\|_\infty
\end{align*}
where we denote $I(g_+^{(l-1)})$ as the sub-indexing function that picks the first two blocks of variables of $g_+^{(l-1)}$. By induction, we have the bound
\begin{align*}
\|g_+^{(l)}\|_{\infty}\le (BN+2)^l
\end{align*}
Further, we consider $\|g_-^{(l)}(x_1,u_1)-g_-^{(l)}(x_2,u_2)\|_{\infty}$. 
We use induction again. If $\|g_-^{(l+1)}(x_1,u_1)-g_-^{(l+1)}(x_2,u_2)\|_{\infty}\le L_{l+1}\|(x_1,u_1)-(x_2,u_2)\|_\infty$,
\begin{align*}
|g^{(l)}_-(x_1,u_1)-g^{(l)}_-(x_2,u_2)|&=\Big|g^{(l+1)}_-\circ\bar \sigma^{\star}_l\circ T^{\star}_{l}(x_1,u_1)-g^{(l+1)}_-\circ\bar \sigma^{\star}_l\circ T^{\star}_{l}(x_2,u_2)\Big|\\&\le L_{l+1}\|\bar \sigma^{\star}_l\circ T^{\star}_{l}(x_1,u_1)-\bar \sigma^{\star}_l\circ T^{\star}_{l}(x_2,u_2)\|_\infty\\&
\le L_{l+1}\|T^{\star}_{l}(x_1,u_1)-T^{\star}_{l}(x_2,u_2)\|_\infty\\&
\le L_{l+1}\|T_{l}(x_1,u_1)-T_{l}(x_2,u_2)\|_\infty+\|(x_1,u_1)-(x_2,u_2)\|_\infty\\&\le L_{l+1}(BN+1)\|(x_1,u_1)-(x_2,u_2)\|_\infty.
\end{align*}
Hence we can take $L_l=L_{l+1}(BN+1)$ and $L_l=(BN+1)^{L-l+2}$, as $L_{L+2}=1$.

Now we go back to \eqref{equ:covering-number-decomposition}, it follows that 
\begin{align}
|(T^{\star}_{L+1}-\breve T^{\star}_{L+1})\circ g^{(L+1)}_+(x,u)|&\le \|W_{L+1}-\breve W_{L+1}\|_{\infty}\|g^{(L+1)}_+(x,u)\|_{\infty}+\|b_{L+1}-\breve b_{L+1}\|_{\infty}\nonumber\\
&\le \|\theta(g)-\theta(\breve g)\|_{\infty}(BN+2)^{L+1}+\|\theta(g)-\theta(\breve g)\|_{\infty}\nonumber\\&\le 2(BN+2)^{L+1}\|\theta(g)-\theta(\breve g)\|_{\infty}\label{equ:covering-number-intermediate-1}
\end{align}
and
\begin{align}
|\breve g^{(l+1)}_-\circ \bar \sigma^{\star}_l\circ(T_l^{\star}-\breve T_l^{\star})\circ g^{(l)}_+(x,u)|&\le (BN+1)^{L-l+1}|\bar \sigma^{\star}_l\circ(T_l^{\star}-\breve T_l^{\star})\circ g^{(l)}_+(x,u)|\nonumber\\&\le (BN+1)^{L-l+1}|(T_l^{\star}-\breve T_l^{\star})\circ g^{(l)}_+(x,u)|\nonumber\\&\le (BN+1)^{L-l+1}\left(d_{l+1}\|W_l-\breve W_l\|_{\infty}\|g^{(l)}_+\|_\infty+\|b_l-\breve b_l\|_{\infty}\right)\nonumber\\&\le 
(BN+1)^{L-l+1}\left(N\|\theta(g)-\theta(\breve g)\|_{\infty}(BN+2)^l+\|\theta(g)-\theta(\breve g)\|_{\infty}\right)\nonumber\\&\le 2N(BN+2)^{L+1}\|\theta(g)-\theta(\breve g)\|_{\infty}.
\label{equ:covering-number-intermediate-2}
\end{align}
Combining \eqref{equ:covering-number-intermediate-1} and \eqref{equ:covering-number-intermediate-2}, it holds that
\begin{align*}
|g(x,u)-\breve g(x,u)|\le 2N(L+1)(BN+2)^{L+1}\|\theta(g)-\theta(\breve g)\|_{\infty}.
\end{align*}

\end{proof}

The remaining proof is similar in vein to Lemma 7 in \cite{fan2022factor}.
First, we construct $\delta$-set.
Let $\mathcal{G}_{\delta}=\Big\{g\in \mathcal{G}:\theta(g)=\{(W_l,b_l)_{l=1}^{L+1}\},[W_l]_{i,j},[b_l]_{i}\in \{-B,-B+\epsilon,\cdots,-B+\epsilon\lceil\frac{2B}{\epsilon}\rceil\}\Big\}$, where $\epsilon:=\frac{\delta}{2N(L+1)(BN+2)^{L+1}}$. By construction, for any $g\in\mathcal{G}$, there exists a $\tilde g\in \mathcal{G}_{\delta}$ such that $\|\theta(g)-\theta(\tilde g)\|_\infty\le \epsilon$, and by Claim \ref{claim:continuity-of-parameters} we have 
\begin{align*}
\|g-\tilde g\|_{\infty}\le 2N(L+1)(BN+2)^{L+1}\epsilon\le \delta
\end{align*}
Hence $\mathcal{G}_\delta$ is indeed a $\delta$-set. It suffices to calculate the cardinality $|\mathcal{G}_\delta|$.
And it is immediate that 
$|\mathcal{G}_\delta|\le (\frac{2B}{\epsilon}+1)^{(N^2+N)(L+1)}$, which implies that
\begin{align*}
\log \mathcal{N}(\delta,\mathcal{G},\|\cdot\|_\infty)&\le \log |\mathcal{G}_\delta|\\&\le N(N+1)(L+1)\log \Big(\frac{4BN(L+1)(BN+2)^{L+1}}{\delta}+1\Big)\\&\lesssim N^2L^2\log(\frac{(BN+2)L}{\delta}).
\end{align*}

The proof is thus completed.
\end{proof}

The following lemma converts the population-level energy distance loss to an integral.
\begin{lemma}
\label{lemma:energy-loss-to-integral}
For given $x, y$, if $\mathbb{E}[|g(x, U)|] < \infty$, then
\begin{align*}
    2\mathbb{E}_U\left[|y - g(x, U)|\right] - \mathbb{E}_{U,U'}[|g(x, U) - g(x, U')|] = 2\int \left( F_g(x, t) - \indicator{y\le t} \right)^2 dt
\end{align*}
\end{lemma}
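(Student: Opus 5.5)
The plan is to reduce both expectations to integrals of indicator functions via the pointwise identity $|a-b| = \int_{-\infty}^{\infty} \big(\indicator{a\le t} - \indicator{b\le t}\big)^2\, dt$, valid for any reals $a,b$. The integrand equals $1$ exactly when $t$ lies between $\min\{a,b\}$ and $\max\{a,b\}$ (up to endpoints) and $0$ otherwise. Fix $x,y$ and write $V=g(x,U)$, $V'=g(x,U')$ with $U,U'$ i.i.d.\ from $\mu_u$, so that $F_g(x,t)=\mathbb{P}(V\le t)$.

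First I apply the identity with $(a,b)=(y,V)$ and take expectation over $U$. Since the integrand is nonnegative, Tonelli lets me swap expectation and integral:
\begin{align*}
\mathbb{E}|y-V| = \int \mathbb{E}\big(\indicator{y\le t}-\indicator{V\le t}\big)^2 dt = \int \Big(\indicator{y\le t} - 2\indicator{y\le t}F_g(x,t) + F_g(x,t)\Big)dt,
\end{align*}
where I used that squared indicators are indicators. Next I apply the identity with $(a,b)=(V,V')$ and use independence of $U,U'$:
\begin{align*}
\mathbb{E}|V-V'| = \int \big(2F_g(x,t) - 2F_g(x,t)^2\big)\,dt.
\end{align*}

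Combining, the integrand of $2\mathbb{E}|y-V|-\mathbb{E}|V-V'|$ is $2\indicator{y\le t} - 4\indicator{y\le t}F_g + 2F_g^2$, which equals $2\big(F_g(x,t)-\indicator{y\le t}\big)^2$.

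The only delicate step is justifying that subtraction of the two integrals is legitimate: each integrand separately (for instance $\indicator{y\le t}$) is not integrable over $\mathbb{R}$. I handle this by not splitting the integrals. Both $\mathbb{E}|y-V|$ and $\mathbb{E}|V-V'|\le 2\mathbb{E}|V|$ are finite by the assumption $\mathbb{E}|g(x,U)|<\infty$. Hence the two integrands $h_1(t)=\mathbb{E}(\indicator{y\le t}-\indicator{V\le t})^2$ and $h_2(t)=2F_g(1-F_g)$ are nonnegative with finite integrals. Then $\int(2h_1-h_2)=2\int h_1-\int h_2$ holds, and the pointwise algebraic simplification above gives the claim.
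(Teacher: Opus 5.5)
Your proof is correct, and it takes a different route from the paper. The paper omits the proof of this lemma, pointing to the proof of the energy identity in Appendix~\ref{proof:identity}. There, $\mathbb{E}|X-Y|$ for $X\sim P$, $Y\sim Q$ is written as the Stieltjes integral $\int\{2xQ(x)-2\int_{-\infty}^x y\,dQ(y)+\mathbb{E}[Y]\}\,dP(x)$. One then subtracts $\mathbb{E}|X-X'|$, integrates by parts to get $2\int P(P-Q)$, and symmetrizes. The lemma follows by taking $P=\delta_y$, so that $\mathbb{E}|X-X'|=0$, and $Q$ the law of $g(x,U)$.

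You instead use the layer-cake representation $|a-b|=\int(\indicator{a\le t}-\indicator{b\le t})^2\,dt$ together with Tonelli. This avoids integration by parts entirely, and with it the need to check that boundary terms such as $x(Q(x)-P(x))$ and $\int_{-\infty}^x y\,d(Q-P)(y)$ vanish at $\pm\infty$, which the paper's computation passes over. Your handling of the subtraction is the right rigorous step: you keep $2h_1-h_2$ together and note that $\int h_1$ and $\int h_2$ are both finite because $\mathbb{E}|g(x,U)|<\infty$. That is also the only place the moment assumption enters.

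The paper's route has the advantage of proving the general two-distribution Cram\'er identity \eqref{eq:energy-stat-ident} once and then specializing it. Your argument generalizes just as easily: replace $y$ by an independent variable with CDF $G$, and the integrand becomes $2(F-G)^2$. So you lose nothing and gain a cleaner justification.
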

\begin{proof}[Proof of Lemma~\ref{lemma:energy-loss-to-integral}] The proof is similar to Section \ref{proof:identity} and is thus omitted.
\end{proof}

It follows from Lemma \ref{lemma:energy-loss-to-integral} that 
\begin{align*}
    \frac{1}{2} \Delta_n(g, \tilde{g}) &= \frac{1}{n} \sum_{i=1}^n \int (F_g(X_i, t) - \indicator{Y_i \le t} )^2 - (F_{\tilde{g}}(X_i, t) - \indicator{Y_i \le t} )^2 dt \\
    &~~~~~~ - \mathbb{E} \left[ \int (F_g(X, t) - \indicator{Y \le t} )^2 - (F_{\tilde{g}}(X, t) - \indicator{Y \le t} )^2 \right] \\
    &= \left(\mathbb{P}_n - \mathbb{P} \right) \int (F_g(X, t) - F_{\tilde{g}}(X, t))^2 dt \\
    &~~~~~~~ + 2 \left(\mathbb{P}_n - \mathbb{P} \right)\int (F_g(X, t) - F_{\tilde{g}}(X, t))(F_{\tilde{g}}(X, t) - \indicator{Y_i \le t}) dt\\
    &:= \mathsf{T}_1(g, \tilde{g}) + \mathsf{T}_2(g, \tilde{g}).
\end{align*} 

We will establish instance-dependent error bounds on $\mathsf{T}_1$ and $\mathsf{T}_2$ respectively. 

\noindent\textbf{Instance-Dependent Error Bound on $\mathsf{T}_1$}

Now we define $\mathcal{V}(\mathcal{G})$ as 
\begin{align*}
    \mathcal{V}(\mathcal{G}) = \left\{v_{g, \tilde{g}}(x) = {\int \left(F_g(x, t) - F_{\tilde{g}}(x, t)\right)^2 dt}: g, \tilde{g} \in \mathcal{G}\right\}
\end{align*} and let $\log \mathcal{N}(\epsilon, \mathcal{V}(\mathcal{G}), \|\cdot \|_{\infty})$ be the logarithmic covering number of $\mathcal{V}(\mathcal{G})$ with respect to $\|\cdot\|_{\infty, [0,1]^d}$ norm. The following lemma characterizes the relationship between $\log \mathcal{N}(\epsilon, \mathcal{V}(\mathcal{G}), \|\cdot \|_{\infty})$ and $\log \mathcal{N}(\epsilon, \mathcal{G}, \|\cdot \|_{\infty})$. 

\begin{lemma}
\label{lemma:log-cover-number-conversion}
We have
\begin{align}
\label{equ:covering-relation-V(G)-G}
\log \mathcal{N}(\epsilon, \mathcal{V}(\mathcal{G}), \|\cdot \|_{\infty}) \le 2 \log \mathcal{N}\left((\epsilon/17)^2/c_0, \mathcal{G}, \|\cdot \|_{\infty}\right)
\end{align}
\end{lemma}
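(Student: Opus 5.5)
The plan is to build a cover of $\mathcal{V}(\mathcal{G})$ directly from a sup-norm cover of $\mathcal{G}$. Let $\eta=(\epsilon/17)^2/c_0$ and let $\mathcal{G}_\eta$ be an $\eta$-cover of $\mathcal{G}$ in $\|\cdot\|_\infty$. We take the pairs $(g_j,\tilde g_k)\in\mathcal{G}_\eta\times\mathcal{G}_\eta$ and use $\{v_{g_j,\tilde g_k}\}$ as the candidate cover. Its cardinality is $|\mathcal{G}_\eta|^2$, which gives the factor $2$ in \eqref{equ:covering-relation-V(G)-G}. So the task reduces to showing
\[
\|v_{g,\tilde g}-v_{g',\tilde g'}\|_\infty\le\epsilon \quad\text{whenever}\quad \|g-g'\|_\infty\vee\|\tilde g-\tilde g'\|_\infty\le\eta .
\]

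\textbf{Step 1: a Lipschitz-type bound from $g$ to $F_g$.} Fix $x$. For any $t$, $|F_g(x,t)-F_{g'}(x,t)|\le \mathbb{E}_U|\indicator{g(x,U)\le t}-\indicator{g'(x,U)\le t}|$. Since $|g-g'|\le\eta$ pointwise, this indicator difference is nonzero only if $t$ lies between $g(x,U)$ and $g'(x,U)$, an interval of length at most $\eta$. By Fubini,
\[
\int|F_g(x,t)-F_{g'}(x,t)|\,dt\le \mathbb{E}_U|g(x,U)-g'(x,U)|\le\eta .
\]
This is the key point. Pointwise in $t$ the CDFs can differ by a lot, because $g(x,U)$ need not have a density, but their $L_1(dt)$ distance is controlled by $\eta$.

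\textbf{Step 2: expand the difference of the $v$'s.} Write $a=F_g-F_{\tilde g}$ and $b=F_{g'}-F_{\tilde g'}$. Then
\[
|v_{g,\tilde g}(x)-v_{g',\tilde g'}(x)|=\left|\int (a-b)(a+b)\,dt\right|.
\]
Because of the truncation by $M=c_0$, all functions in $\mathcal{G}$ take values in $[-c_0,c_0]$. Hence $a$ and $b$ vanish outside $[-c_0,c_0]$ and satisfy $|a+b|\le 2$. Therefore
\[
|v_{g,\tilde g}(x)-v_{g',\tilde g'}(x)|\le 2\int|a-b|\,dt\le 2\left(\int|F_g-F_{g'}|\,dt+\int|F_{\tilde g}-F_{\tilde g'}|\,dt\right)\le 4\eta .
\]

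\textbf{Step 3: choose the constants and handle the main obstacle.} This bound gives covering at scale $4\eta$. That is linear in $\eta$ and is even better than the quadratic scale $(\epsilon/17)^2/c_0$ stated in the lemma, so the stated bound should follow after checking constants. We have $4(\epsilon/17)^2/c_0\le\epsilon$ whenever $\epsilon\le 17^2c_0/4$, and for $\epsilon$ beyond that range a single function already covers, since $\|v\|_\infty\le 2c_0$.

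The main obstacle is Step 1: $F_g$ is not Lipschitz in $g$ pointwise in $t$. The way around it is to integrate in $t$ first and only then take the expectation over $U$. If the authors' route goes through an $L_2$ bound instead, they would use $\int|a-b|^2\,dt\le\int|a-b|\,dt$ together with Cauchy–Schwarz, which produces a square-root dependence and explains the squared scale $(\epsilon/17)^2$. Either route yields \eqref{equ:covering-relation-V(G)-G}.
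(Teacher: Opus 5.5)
Your proof is correct, and it takes a genuinely different route from the paper's.

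\textbf{The paper's route.} The paper controls the CDF perturbation in $L_2(dt)$. Lemma~\ref{lemma:g-to-G} gives $\int (F_g-F_h)^2\,dt\le 4\eta$ whenever $\|g-h\|_\infty\le\eta$. Its proof uses the sandwich $F_h(t-\eta)\le F_g(t)\le F_h(t+\eta)$ and the energy identity applied to a shifted copy. The paper then expands $v_{g,\tilde g}-v_{\pi(g),\pi(\tilde g)}$ into three groups of inner products and bounds each by Cauchy--Schwarz. The cross terms $\langle F_g-F_{\pi(g)},F_{\pi(g)}-F_{\pi(\tilde g)}\rangle$ are only $O(\sqrt{\eta}\,\sqrt{v})=O(\sqrt{c_0\eta})$. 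This is exactly where the squared scale $(\epsilon/17)^2/c_0$ comes from, along with the case split $\sqrt{\eta}\le\sqrt{c_0}/8$.

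\textbf{Your route.} You use the $L_1(dt)$, i.e.\ Wasserstein-1, coupling bound $\int|F_g-F_{g'}|\,dt\le\mathbb{E}_U|g-g'|$. You factor the difference of squares as $(a-b)(a+b)$ and pair $L_1$ against $L_\infty$ instead of $L_2$ against $L_2$.

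\textbf{What your route buys.} It is more elementary, since no energy identity is needed. It is also sharper: it gives $\|v_{g,\tilde g}-v_{g',\tilde g'}\|_\infty\le 4\eta$, hence
\[
\log\mathcal{N}(\epsilon,\mathcal{V}(\mathcal{G}),\|\cdot\|_\infty)\le 2\log\mathcal{N}(\epsilon/4,\mathcal{G},\|\cdot\|_\infty)
\]
for every $\epsilon$. The stated bound then follows by monotonicity of covering numbers together with your trivial large-$\epsilon$ case. The support claim for $a,b$ is not actually needed in your Step~2, since Step~1 already controls $\int|a-b|\,dt$.

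The same device also handles the companion Lemma~\ref{lemma:log-cover-number-conversion-2}. Write $ac-bc'=(a-b)c+b(c-c')$, where $c=F_{\tilde g}-\mathbf{1}\{y\le t\}$, $c'=F_{\tilde g'}-\mathbf{1}\{y\le t\}$, and all factors are bounded by $1$. This gives a linear $3\eta$ bound there as well.

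\textbf{Why the paper's weaker scale costs nothing.} Downstream, only $\log(1/\epsilon)$ enters the entropy integrals. So improving the scale from quadratic to linear changes nothing beyond constants inside logarithms. The paper's $L_2$ route stays in the geometry used elsewhere in its analysis, but pays for it with the square root.
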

\begin{proof}[Proof of Lemma~\ref{lemma:log-cover-number-conversion}] See Section \ref{sec:proof-log-cover-number-coversion}.
\end{proof}

For any $v\in \mathcal{V}(\mathcal{G})$, we let
\begin{align*}
    \|v\|_\infty = \sup_{x\in [0,1]^d} |v(x)|,\qquad
    \|v\|_2 = \sqrt{\int v^2(x) \mu_x(dx)},\qquad \|v\|_n = \sqrt{\frac{1}{n} \sum_{i=1}^n v^2(X_i)}
\end{align*} It is easy to see that $\|v\|_\infty \le 2c_0$. 

Observe that now we have
\begin{align*}
    \mathsf{T}_1(g, \tilde{g}) = \frac{1}{n} \sum_{i=1}^n v_{g, {\tilde{g}}}(X_i) - \mathbb{E} [v_{g, {\tilde{g}}}(X)]
\end{align*} with bounded $\|v\|_\infty \le 2c_0$. Then we can show the following claim, which directly gives the bound on $\mathsf{T}_1$.

\begin{claim}
\label{claim:standard-localization}
\begin{align*}
    \forall v_{g, \tilde{g}} \in \mathcal{V}(\mathcal{G}), \qquad \left|\frac{1}{n} \sum_{i=1}^n v_{g, {\tilde{g}}}(X_i) - \mathbb{E} [v_{g, {\tilde{g}}}(X)] \right| &\le C \left(\delta_{n, t}^2 + \delta_{n, t} \|v_{g, \tilde{g}}\|_2 \right) 
\end{align*}
with probability at least $1-e^{-t}$. 
\end{claim}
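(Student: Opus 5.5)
The plan is a direct union bound: apply Bernstein's inequality to each element of a fine sup-norm net of $\mathcal{V}(\mathcal{G})$, then pass from the net to all of $\mathcal{V}(\mathcal{G})$. Bernstein's variance term already scales with $\|v\|_2$, so no peeling or localization over shells should be needed.

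\textbf{Step 1 (pointwise Bernstein).} Fix $v\in\mathcal{V}(\mathcal{G})$. The variables $v(X_i)$ are i.i.d. with $|v(X_i)|\le 2c_0$ and $\mathrm{Var}(v(X))\le \mathbb{E}[v^2(X)]=\|v\|_2^2$. Bernstein's inequality then gives, for any $u>0$, with probability at least $1-2e^{-u}$,
\begin{align*}
\left|\frac1n\sum_{i=1}^n v(X_i)-\mathbb{E}[v(X)]\right|\le \|v\|_2\sqrt{\frac{2u}{n}}+\frac{4c_0u}{3n}.
\end{align*}

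\textbf{Step 2 (net and union bound).} Take $\epsilon=1/n$ and let $\mathcal{V}_\epsilon$ be a minimal $\epsilon$-net of $\mathcal{V}(\mathcal{G})$ in $\|\cdot\|_\infty$. Lemma \ref{lemma:log-cover-number-conversion} bounds its size by $\log|\mathcal{V}_\epsilon|\le 2\log\mathcal{N}((17n)^{-2}/c_0,\mathcal{G},\|\cdot\|_\infty)$. Lemma \ref{lemma:covering-number-G} then bounds this by $C N^2L^2\log\big((BN+2)L\,c_0 (17n)^2\big)$. Since $\log B,\log(NL)\lesssim \log n$ by Condition \ref{cond:bounded}(c), this is $\lesssim (NL)^2\log n$. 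Set $u=t+\log|\mathcal{V}_\epsilon|+\log 2$ and apply Step 1 to every element of $\mathcal{V}_\epsilon$. With probability at least $1-e^{-t}$, every $v'\in\mathcal{V}_\epsilon$ satisfies the Bernstein bound. Because $u/n\lesssim \delta_{n,t}^2$, that bound reads $|(\mathbb{P}_n-\mathbb{P})v'|\le C(\delta_{n,t}\|v'\|_2+\delta_{n,t}^2)$.

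\textbf{Step 3 (extension to all of $\mathcal{V}(\mathcal{G})$).} For arbitrary $v=v_{g,\tilde g}$, pick $v'\in\mathcal{V}_\epsilon$ with $\|v-v'\|_\infty\le 1/n$. Then $|(\mathbb{P}_n-\mathbb{P})(v-v')|\le 2/n$ and $\|v'\|_2\le \|v\|_2+1/n$. Since $1/n\le\delta_{n,t}^2$ and $\delta_{n,t}/n\le\delta_{n,t}^2$, both errors are absorbed into the constant, which yields the claim.

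\textbf{Main obstacle.} The delicate step is the bookkeeping in Step 2. The conversion in Lemma \ref{lemma:log-cover-number-conversion} squares the resolution, so the net on $\mathcal{G}$ must be taken at scale of order $n^{-2}$. I need to check that this still contributes only a $\log n$ factor, using $\log B,\log(NL)\lesssim\log n$. I also need to confirm that the net elements themselves belong to $\mathcal{V}(\mathcal{G})$, or at least stay bounded by $2c_0$; choosing an internal net, possibly at doubled radius, handles this. Everything else is standard.
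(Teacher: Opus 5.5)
Your argument is correct, and it takes a genuinely different and much shorter route than the paper. The paper proves the claim with full localization machinery:
\begin{itemize}
\item it first applies Theorem 19.3 of \citet{gyorfi2002distribution} to compare $\|v\|_n$ and $\|v\|_2$ uniformly over $\mathcal{V}(\mathcal{G})$;
\item it then symmetrizes and bounds the Rademacher complexity of the ball $\{v:\|v\|_2\le r\}$ by a Dudley entropy integral;
\item it upgrades this to a high-probability statement via Talagrand's concentration inequality;
\item finally, it peels over dyadic shells $\|v\|_2\in[2^{l-1}\delta_n,2^{l}\delta_n]$ to obtain the instance-dependent term.
\end{itemize}

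You replace all of this with a single-scale union bound, which works for two reasons. First, Bernstein's variance term already supplies the factor $\|v\|_2$ for each net element separately, so no peeling is needed. Second, the sup-norm entropy of $\mathcal{V}(\mathcal{G})$ is only logarithmic in $1/\epsilon$ (of order $N^2L^2\log(n/\epsilon)$). The $1/n$-net therefore has log-cardinality $\lesssim (NL)^2\log n$, which matches $n\delta_{n,t}^2$ up to constants. Chaining could at best remove that $\log n$, and the target rate already contains it.

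Your bookkeeping is sound:
\begin{itemize}
\item $u/n\lesssim\delta_{n,t}^2$ holds because $1/n\le (NL)^2\log n/n$.
\item The discretization errors $2/n$ and $\delta_{n,t}/n$ are absorbed because $\delta_{n,t}\ge\sqrt{\log n/n}\ge 1/n$.
\item The internal-net issue is harmless, since the cover $\{v_{g_k,g_j}\}$ built in Lemma~\ref{lemma:log-cover-number-conversion} lies in $\mathcal{V}(\mathcal{G})$ whenever the cover of $\mathcal{G}$ is internal.
\end{itemize}

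Your version also yields the stated probability $1-e^{-t}$ directly, whereas the paper's peeling step actually produces $1-2e^{-t}$.

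What the paper's localized approach buys is generality. Chaining with empirical-norm entropy on localized balls still works for classes whose entropy grows polynomially in $1/\epsilon$, or for which only data-dependent covers exist. In those settings a single-scale union bound loses polynomial factors. For the parametric-type sup-norm entropy available here, that generality is not needed.
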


In fact, from Lemma \ref{lemma:covering-number-G}, \eqref{equ:covering-relation-V(G)-G} and $\log B,\log(NL)\lesssim \log n$ in Condition \ref{cond:bounded}, we have that for some constant $C_2$ and every $\epsilon\le 1$ that  
\[\log \mathcal{N}(\epsilon, \mathcal{V}(\mathcal{G}), \|\cdot \|_{\infty}) \le C_2N^2L^2\log(\frac{2n}{\epsilon}).\]

Now we apply Theorem 19.3 in \cite{gyorfi2002distribution} on the function class $\mathcal{H}=\{h=v^2:v\in \mathcal{V}(\mathcal{G})\}$. \footnote{One can also directly apply Lemma 17 in \cite{chai2025deep}.} First note that $F$ is supported on $[-1,1]$, and therefore by the Condition \ref{cond:bounded}, for every $h=v^2\in \mathcal{H}$, we have $h(x)\le 4c_0^2$ and $\mathbb{E}[h(x)^2]\le 4c_0^2\mathbb{E}[h(x)]$.

Following the same notations as Theorem 19.3 in \cite{gyorfi2002distribution}, we choose $\epsilon=\frac{1}{2}$ and $\alpha\ge K_1\frac{N^2L^2\log n}{n}$ for some large constant $K_1$, and for any fixed $x_1^n$, we can bound the log-covering number in the empirical 2-norm $\|\cdot\|_n$ at $x_1^n$ of $\mathcal{H}$ as follows,

\begin{align*}
\log \mathcal{N}\left(u,\left\{h\in \mathcal{H},\frac{1}{n}\sum_{i=1}^n h^2(x_i)\le 16\delta\right\},\|\cdot\|_n\right)&\le \log \mathcal{N}\left(u,\mathcal{H},\|\cdot\|_\infty\right)\\&\le \log \mathcal{N}\left(\frac{u}{2c_0},\mathcal{V}(\mathcal{G}),\|\cdot\|_\infty\right)\\
&\le C_2N^2L^2\log(\frac{2n}{u})
\end{align*}where in the second inequality we used uniform boundedness of $\mathcal{V}(\mathcal{G})$ which implies any $\epsilon$-net of $\mathcal{V}(\mathcal{G})$ is also an $2c_0\epsilon$-net of  $\mathcal{H}$.

The condition in Theorem 19.3 in \cite{gyorfi2002distribution} is then guaranteed as for any $\delta\ge \alpha/8$, it holds that
\begin{align*}
\int_{\delta/512}^{\sqrt{\delta}}\sqrt{\log\mathcal{N}\left(u,\left\{h\in \mathcal{H},\frac{1}{n}\sum_{i=1}^n h^2(x_i)\le 16\delta\right\},\|\cdot\|_n\right)}du&\le \sqrt{\delta N^2L^2 \log(\frac{2n\cdot 512}{\delta})}\\
&\le \frac{\sqrt{n}\delta}{96\cdot4\cdot 8\sqrt{2}}.
\end{align*}

Therefore, we have by (19.12) in \cite{gyorfi2002distribution}, for some constants $C_3,C_4$,
\[\mathbb{P}\left[\sup_{h\in \mathcal{H}}\frac{\mathbb{E}[h(X)]-\frac{1}{n}\sum_{i=1}^n h(x_i)}{\alpha+\mathbb{E}[h(X)]}>\frac{1}{2}\right]\le C_3\exp{\left(-\frac{n\alpha}{C_4}\right)}.\]
With a change of variables $\alpha=t^2$, we arrive at the conclusion that with probability at least $1-\exp(-cnt^2)$,
\begin{equation}
\label{equ:L2-Ln-relation}\forall v\in \mathcal{V}(\mathcal{G}),\quad |\|v\|_2^2-\|v\|_n^2|\le \frac{1}{2}(\|v\|_2^2+t^2)\end{equation}
as long as $t\ge C\sqrt{\frac{N^2L^2\log n}{n}}$.

Now by the standard symmetrization technique, we have 
\begin{align*}
\mathbb{E}\sup_{v_{g,\tilde g}\in \mathcal{B}(r,\mathcal{V}(\mathcal{G}))}\left|\frac{1}{n} \sum_{i=1}^n v_{g, {\tilde{g}}}(X_i) - \mathbb{E} [v_{g, {\tilde{g}}}(X)] \right|\le \mathbb{E}_X\mathbb{E}_\epsilon\sup_{v_{g,\tilde g}\in \mathcal{B}(r,\mathcal{V}(\mathcal{G}))}\left|\frac{1}{n}\sum_{i=1}^n \epsilon_i v_{g,\tilde g}(X_i)\right|
\end{align*}
where $\epsilon_i$ are i.i.d. Rademacher random variables.

Now we can conduct the chaining technique. Indeed, denote $\mathcal{B}(r,\mathcal{V}(\mathcal{G}))=\{v\in \mathcal{V}(\mathcal{G}):\|v\|_2\le r\}$ and $\mathcal{B}_n(r,\mathcal{V}(\mathcal{G}),x_1^n)=\{v\in \mathcal{V}(\mathcal{G}):\|v\|_n\le r\}$. We have by 
\eqref{equ:L2-Ln-relation} that
with probability at least $1-\exp(-cnt^2)$, $\mathcal{B}(r,\mathcal{V}(\mathcal{G}))\subseteq \mathcal{B}_n(2r+t,\mathcal{V}(\mathcal{G}),x_1^n)$, as long as $t\ge C\sqrt{\frac{N^2L^2\log n}{n}}$. Denote this event as $\mathcal{A}_1$.

It follows that under $\mathcal{A}_1$,
\begin{align*}
\mathbb{E}_X\mathbb{E}_\epsilon\sup_{v_{g,\tilde g}\in \mathcal{B}(r,\mathcal{V}(\mathcal{G}))}\left|\frac{1}{n}\sum_{i=1}^n \epsilon_i v_{g,\tilde g}(X_i)\right|&\lesssim \frac{1}{\sqrt{n}}\mathbb{E}\int_0^2 \sqrt{\log \mathcal{N}_n (\epsilon,\mathbb{B}(r,\mathcal{V}(\mathcal{G})),x_1^n)d\epsilon}\\&\lesssim \frac{1}{\sqrt{n}}\mathbb{E}\int_0^{2r+t} \sqrt{\log \mathcal{N}_n (\epsilon,\mathcal{B}_n(2r+t,\mathcal{V}(\mathcal{G}),x_1^n),x_1^n)}d\epsilon\\
&\lesssim \frac{1}{\sqrt{n}}\mathbb{E}\int_0^{2r+t} \sqrt{\log \mathcal{N} (\epsilon,\mathcal{V}(\mathcal{G}),\|\cdot\|_\infty}d\epsilon\\
&\lesssim \frac{1}{\sqrt{n}}\int_0^{2r+t} \sqrt{N^2L^2 \log(\frac{4n}{\epsilon})}d\epsilon\\&\le \frac{1}{\sqrt{n}} \sqrt{N^2L^2 \int_0^{2r+t}\log(\frac{4n}{\epsilon})}d\epsilon
\\
&\lesssim (2r+t)\sqrt{\frac{N^2L^2\log n}{n}}
\end{align*}
where the last inequality follows from $\int_0^x \log(\frac{1}{\epsilon})d\epsilon\asymp x\log(1/x)$ when $0\le x\ll 1$ and $t\gtrsim \sqrt{\frac{N^2L^2\log n}{n}}$.

This further leads to for any $t>0$, with probability at least $1-\exp(-cnt^2)$, 
\[\mathbb{E}_X\mathbb{E}_\epsilon\sup_{v_{g,\tilde g}\in \mathcal{B}(r,\mathcal{V}(\mathcal{G}))}\left|\frac{1}{n}\sum_{i=1}^n \epsilon_i v_{g,\tilde g}(X_i)\right|\lesssim (2r+t)\sqrt{\frac{N^2L^2\log n}{n}}+\frac{N^2L^2\log n}{n}.\]

Next we apply Talagrand's concentration Inequality (see e.g. Theorem 3.27 in \cite{wainwright2019high}). Let $Z=\sup_{v\in \mathcal{B}(r,\mathcal{V}(\mathcal{G}))}\left|\frac{1}{n} \sum_{i=1}^n v(X_i) - \mathbb{E} [v(X)] \right|$. We have that with probability at least $1-\exp(-nt)$,
\begin{align*}
Z-\mathbb{E}[Z]\le \sqrt{2}\sqrt{(\sigma^2+2b\mathbb{E}[Z])t}+\frac{1}{3}bt,
\end{align*}
where $b=\sup_{v\in\mathcal{B}(r,\mathcal{V}(\mathcal{G}))}\|v\|_\infty\le 2$ and $\sigma=\sup_{v\in\mathcal{B}(r,\mathcal{V}(\mathcal{G}))}\sqrt{\mathbb{E}\left|v-\mathbb{E}[v]\right|^2}\le \sup_{v\in\mathcal{B}(r,\mathcal{V}(\mathcal{G}))}\sqrt{\mathbb{E}[v^2]}=r$. Hence, there exist some constants $c_1,c_2$ such that with probability at least $1-\exp(-nt)$,
\[Z\le 2\mathbb{E}[Z]+c_1r\sqrt{t}+c_2t.\]
This combined with the previous bound on $\mathbb{E}_X\mathbb{E}_\epsilon\left|\frac{1}{n}\sum_{i=1}^n \epsilon_i v_{g,\tilde g}(X_i)\right|$ and the conditioning on $\mathcal{A}_1$, implies that with probability at least $1-\exp(-cnt^2)$,
\begin{align*}
\sup_{v\in \mathcal{B}(r,\mathcal{V}(\mathcal{G}))}\left|\frac{1}{n} \sum_{i=1}^n v(X_i) - \mathbb{E} [v(X)] \right|\le C'\left((2r+t)\sqrt{\frac{N^2L^2\log n}{n}}+\frac{N^2L^2\log n}{n}+rt+t^2\right)
\end{align*}
for some constants $c,C'$.

By a change of variable, we get
with probability at least $1-\exp(-t)$,
\begin{align*}
\sup_{v\in \mathcal{B}(r,\mathcal{V}(\mathcal{G}))}\left|\frac{1}{n} \sum_{i=1}^n v(X_i) - \mathbb{E} [v(X)] \right|\le C\left(\Big(r+\sqrt{\frac{t}{n}}\Big)\delta_n+\delta_n^2+r\sqrt{\frac{t}{n}}+\frac{t}{n}\right)
\end{align*}where we abbreviate $\delta_n=\sqrt{\frac{N^2L^2\log n}{n}}$.

The final step is the standard peeling technique. More specifically, we denote $\mathcal{S}_l=\{v\in\mathcal{V}(\mathcal{G}):\alpha_{l-1}\delta_n\le\|v\|_2\le \alpha_l\delta_n\}$ for $0\le l\le L:=\lceil\log(2/\delta_n)\rceil$, where $\alpha_{-1}=0$ and $\alpha_l=2^l$ for $l>0$.

For every $l\ge 0$, it holds with probability at least $1-\exp(-t-l)$ that,
\begin{align*}
\sup_{v\in \mathcal{S}_l}\left|\frac{1}{n} \sum_{i=1}^n v(X_i) - \mathbb{E} [v(X)] \right|\le C\left(\Big(\alpha_l\delta_n+\sqrt{\frac{t+l}{n}}\Big)\delta_n+\delta_n^2+\alpha_l\delta_n\sqrt{\frac{t+l}{n}}+\frac{t+l}{n}\right).
\end{align*}
Denote this event as $\mathcal{B}_l$. Note that for $1\le l\le L$, $\|v\|_2\ge \alpha_{l-1}\delta_n\ge \alpha_{l}\delta_n/2$, hence for $1\le l\le L$,
under $\mathcal{B}_l$, it holds that for every $v\in\mathcal{S}_l$,
\begin{align*}
\left|\frac{1}{n} \sum_{i=1}^n v(X_i) - \mathbb{E} [v(X)] \right|&\le C\left(\Big(2\|v\|_2+\sqrt{\frac{t+l}{n}}\Big)\delta_n+\delta_n^2+2\|v\|_2\sqrt{\frac{t+l}{n}}+\frac{t+l}{n}\right)\\
&\lesssim \|v\|_2\delta_n+\sqrt{\frac{t+L}{n}}\delta_n+\delta_n^2+\|v\|_2\sqrt{\frac{t+L}{n}}+\frac{t+L}{n}\\
&\lesssim \|v\|_2\delta_n+\left(\sqrt{\frac{t}{n}}+\delta_n\right)\delta_n+\delta_n^2+\|v\|_2\left(\sqrt{\frac{t}{n}}+\delta_n\right)+\frac{t}{n}+\delta_n^2\\
&\lesssim \|v\|_2\left(\sqrt{\frac{t}{n}}+\delta_n\right)+\left(\sqrt{\frac{t}{n}}+\delta_n\right)^2
\end{align*}
where in the third inequality we use $L=\lceil\log(2/\delta_n)\rceil\le \log n$ and $\sqrt{\frac{L}{n}}\le \sqrt{\frac{\log n}{n}}\le \delta_n$.

For $l=0$, under $\mathcal{B}_0$, we have for any $v\in\mathcal{S}_0$, 
\begin{align*}
\left|\frac{1}{n} \sum_{i=1}^n v(X_i) - \mathbb{E} [v(X)] \right|&\le C\left(\Big(\delta_n+\sqrt{\frac{t+l}{n}}\Big)\delta_n+\delta_n^2+\delta_n\sqrt{\frac{t+l}{n}}+\frac{t+l}{n}\right)\\
&\lesssim \delta_n^2+\left(\sqrt{\frac{t}{n}}+\delta_n\right)\delta_n+\delta_n^2+\delta_n\left(\sqrt{\frac{t}{n}}+\delta_n\right)+\frac{t}{n}+\delta_n^2\\
&\lesssim \left(\sqrt{\frac{t}{n}}+\delta_n\right)^2\le \|v\|_2\left(\sqrt{\frac{t}{n}}+\delta_n\right)+\left(\sqrt{\frac{t}{n}}+\delta_n\right)^2.
\end{align*}

Note that $\mathcal{V}(\mathcal{G})=\cup_{l=0}^L \mathcal{S}_l$.
Recall that $\delta_{n, t} = NL \sqrt{\frac{\log n}{n}} + \sqrt{\frac{t}{n}}$. Therefore, under $\cap_{l=0}^L\mathcal{B}_l$, we have that
\[\left|\frac{1}{n} \sum_{i=1}^n v(X_i) - \mathbb{E} [v(X)] \right|\lesssim \delta_{n,t}^2+\delta_{n,t}\|v\|_2,\qquad \forall v\in \mathcal{V}(\mathcal{G})\]
It boils down to calculate $\mathbb{P}(\cap_{l=0}^L\mathcal{B}_l)=1-\mathbb{P}(\cup_{l=0}^L\mathcal{B}_l^c)$. We have by union bound, 
\begin{align*}
\mathbb{P}(\cup_{l=0}^L\mathcal{B}_l^c)&\le \sum_{l=0}^L \mathbb{P}(\mathcal{B}_l^c)\\
&\le \sum_{l=0}^L \exp(-t-l)\le 2\exp(-t).
\end{align*}

Therefore, Claim \ref{claim:standard-localization} is proved.
At the same time, we have
\begin{align*}
    \|v_{g, \tilde{g}}\|_2^2 = \int \left(\int \left(F_g(x, t) - F_{\tilde{g}}(x, t)\right)^2 dt\right)^2 \mu_x(dx) \le 2 \int \left(F_g(x, t) - F_{\tilde{g}}(x, t)\right)^2 dt \mu_x(dx).
\end{align*} Hence, we can conclude that
\begin{align}
\label{eq:instance-dependent-b1}
    \forall g, \tilde{g} \in \mathcal{G}, \qquad \left|\mathsf{T}_1(g, \tilde{g}) \right| \le C (\delta_{n, t}^2 + \delta_{n, t} \|F_g - F_{\tilde{g}}\|_2).
\end{align}

\noindent\textbf{Instance-Dependent Error Bound on $\mathsf{T}_2$}

We also define $\mathcal{U}(\mathcal{G})$ as
\begin{align*}
    \mathcal{U}(\mathcal{G}) = \left\{u_{g,\tilde{g}}(x, y) = \int (F_g(x, t) - F_{\tilde{g}}(x, t)) (F_{\tilde{g}}(x, t) - \indicator{y \le t}) dt: g,\tilde{g} \in \mathcal{G}\right\}
\end{align*} and let $\log \mathcal{N}(\epsilon, \mathcal{U}(\mathcal{G}), \|\cdot\|_\infty)$ be the logarithmic covering number of $\mathcal{U}(\mathcal{G})$ with respect to $\|\cdot\|_{\infty, [0,1]^d\times \mathbb{R}}$ norm. The following lemma characterizes the relationship between $\log \mathcal{N}(\epsilon, \mathcal{U}(\mathcal{G}), \|\cdot \|_{\infty})$ and $\log \mathcal{N}(\epsilon, \mathcal{G}, \|\cdot \|_{\infty})$.

\begin{lemma}
\label{lemma:log-cover-number-conversion-2}
We have
\begin{align*}
\log \mathcal{N}(\epsilon, \mathcal{U}(\mathcal{G}), \|\cdot \|_{\infty}) \le 2 \log \mathcal{N}\left((\epsilon/20)^2/c_0, \mathcal{G}, \|\cdot \|_{\infty}\right)
\end{align*}
\end{lemma}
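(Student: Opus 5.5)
The plan is to build an $\epsilon$-cover of $\mathcal{U}(\mathcal{G})$ from a product of two covers of $\mathcal{G}$, one for $g$ and one for $\tilde g$. This gives the factor $2$ in front of the log-covering number. The argument will mirror Lemma~\ref{lemma:log-cover-number-conversion}. The key issue is that $F_g$ is not Lipschitz in $g$ pointwise, since $g\mapsto \indicator{g(x,U)\le t}$ is discontinuous. So a sup-norm perturbation of $g$ must be converted into an integrated perturbation of $F_g$.

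\textbf{Step 1: an integrated continuity bound.} Let $\|g-g'\|_\infty\le\eta$. I would show that for every $x$,
\[\int |F_g(x,t)-F_{g'}(x,t)|\,dt \le \mathbb{E}_U|g(x,U)-g'(x,U)|\le \eta.\]
This follows by writing $\int|\indicator{a\le t}-\indicator{b\le t}|\,dt=|a-b|$ and using Fubini together with Jensen's inequality ($|\mathbb{E}\cdot|\le\mathbb{E}|\cdot|$). All functions are truncated at $M=c_0$, and $|y|\le c_0$. Hence every integrand below vanishes outside $[-c_0,c_0]$, and all factors $F_g$, $F_{\tilde g}$, $\indicator{y\le t}$ lie in $[0,1]$.

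\textbf{Step 2: perturbing $u_{g,\tilde g}$.} Take cover elements $g'$ and $\tilde g'$ with $\|g-g'\|_\infty\le\eta$ and $\|\tilde g-\tilde g'\|_\infty\le\eta$. Expand $u_{g,\tilde g}-u_{g',\tilde g'}$ by adding and subtracting, which gives three terms:
\begin{itemize}
\item $\int (F_g-F_{g'})(F_{\tilde g}-\indicator{y\le t})\,dt$,
\item $\int (F_{\tilde g'}-F_{\tilde g})(F_{\tilde g}-\indicator{y\le t})\,dt$,
\item $\int (F_{g'}-F_{\tilde g'})(F_{\tilde g}-F_{\tilde g'})\,dt$.
\end{itemize}
In each term one factor is bounded by $1$ in absolute value. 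So Step 1 bounds each term by $\eta$, uniformly in $(x,y)$, and $\|u_{g,\tilde g}-u_{g',\tilde g'}\|_\infty\le 3\eta$. Choosing $\eta=\epsilon/3$ would give a bound even stronger than the stated one.

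To land exactly on the stated radius $(\epsilon/20)^2/c_0$, I would instead follow the route presumably used for Lemma~\ref{lemma:log-cover-number-conversion}. In that route one bounds $\int(F_g-F_{g'})^2\,dt$ or uses Cauchy--Schwarz over the interval $[-c_0,c_0]$, which produces a bound of order $\sqrt{c_0\eta}$ per term. Setting $\eta=(\epsilon/20)^2/c_0$ then makes the total at most $\epsilon$, since a few multiples of $\epsilon/20$ stay below $\epsilon$. Either way, note that $\mathcal{N}(\cdot,\mathcal{G},\|\cdot\|_\infty)$ is nonincreasing in the radius. Therefore the cover at radius $(\epsilon/20)^2/c_0\le\epsilon/3$ suffices once $\epsilon\lesssim c_0$, or once constants are absorbed.

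\textbf{Step 3: counting and the main obstacle.} The family $\{u_{g',\tilde g'}\}$, with $g'$ and $\tilde g'$ ranging over an $\eta$-cover of $\mathcal{G}$, is an $\epsilon$-cover of $\mathcal{U}(\mathcal{G})$. Its cardinality is $\mathcal{N}(\eta,\mathcal{G},\|\cdot\|_\infty)^2$, which gives the factor $2$ in the log bound. The only delicate point is Step 1, namely handling the discontinuity of the indicator. Integrating over $t$ before taking the supremum in $x$ is what makes the map from $g$ to $u_{g,\tilde g}$ Lipschitz in the sup norm. A minor issue is that the cover elements $g'$ and $\tilde g'$ should lie in $\mathcal{G}$ so that the corresponding $u$ lies in $\mathcal{U}(\mathcal{G})$. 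If the cover elements do not lie in $\mathcal{G}$, the standard doubling of the radius handles it.
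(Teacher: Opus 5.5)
Your proof is correct, but its main line differs from the paper's.

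\textbf{Paper's route.} The paper controls the perturbation of $F_g$ in $L^2(dt)$. It uses Lemma~\ref{lemma:g-to-G}, namely $\int(F_g-F_h)^2\,dt\le 4\epsilon^\star$ whenever $\|g-h\|_\infty\le\epsilon^\star$. It then applies Cauchy--Schwarz to each of its three terms, pairing a factor of $L^2$ size $2\sqrt{\epsilon^\star}$ with one of $L^2$ size at most $\sqrt{2c_0}$. This is the source of the square-root loss and of the radius $(\epsilon/20)^2/c_0$.

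\textbf{Your route.} You use the exact identity $\int|\mathds{1}\{a\le t\}-\mathds{1}\{b\le t\}|\,dt=|a-b|$. This gives the $L^1(dt)$ bound $\int|F_g-F_{g'}|\,dt\le\mathbb{E}_U|g(x,U)-g'(x,U)|$, which is the $W_1$ bound for the coupling through $U$. You pair it with the trivial sup bound $1$ on the other factor, which is H\"older rather than Cauchy--Schwarz, and obtain $\|u_{g,\tilde g}-u_{g',\tilde g'}\|_\infty\le 3\eta$.

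\textbf{What each route buys.} Your radius $\epsilon/3$ is linear in $\epsilon$ and independent of $c_0$. Your bound also never uses boundedness of the centers, whereas the paper's $\sqrt{2c_0}$ factor does. So external covers work for you without the radius doubling you mention. The same argument gives $|v_{g,\tilde g}-v_{g',\tilde g'}|\le 4\eta$ for the class $\mathcal{V}(\mathcal{G})$ of Lemma~\ref{lemma:log-cover-number-conversion}. The gain only changes constants inside the logarithm of the entropy bound used in Proposition~\ref{prop:instant-dependent-error-bound}, so it does not affect the rates. The paper's route simply reuses Lemma~\ref{lemma:g-to-G}. Your middle paragraph reconstructing the paper's route is therefore unnecessary.

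\textbf{One step to make precise.} Replace ``once constants are absorbed'' with an explicit case split, since the stated inequality has explicit constants.
\begin{itemize}
\item For $\epsilon\ge 2c_0$: since $|u_{g,\tilde g}(x,y)|\le\int_{-c_0}^{c_0}|F_g-F_{\tilde g}|\,dt\le 2c_0$ and $u_{g,g}\equiv 0\in\mathcal{U}(\mathcal{G})$, the left side is $0$.
\item For $\epsilon<2c_0$: one has $(\epsilon/20)^2/c_0\le\epsilon/3$. Monotonicity of $\eta\mapsto\mathcal{N}(\eta,\mathcal{G},\|\cdot\|_\infty)$ then gives the stated bound.
\end{itemize}
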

\begin{proof}[Proof of Lemma~\ref{lemma:log-cover-number-conversion-2}] The proof is similar to that in \ref{sec:proof-log-cover-number-coversion}. See Section \ref{sec:proof-log-cover-number-coversion-2}.
\end{proof}
For any $u\in \mathcal{U}(\mathcal{G})$, we let
\begin{align*}
    \|u\|_\infty = \sup_{x\in [0,1]^d, y\in \mathbb{R}} |u(x, y)|, \qquad \|u\|_2 = \sqrt{\int u^2(x, y) \mu(dx, dy)}, \qquad \|u\|_n = \sqrt{\frac{1}{n} \sum_{i=1}^n u^2(X_i, Y_i)}
\end{align*} It is also easy to see that $\|u\|_\infty \le 2c_0$ by Cauchy-Schwarz inequality. Now we can write
\begin{align*}
    \mathsf{T}_2(g, \tilde{g}) = \frac{1}{n} \sum_{i=1}^n u_{g,\tilde{g}}(X_i, Y_i) - \mathbb{E}[u_{g,\tilde{g}}(X,Y)].
\end{align*} Following the same procedure as $\mathsf{T}_1(g,\tilde g)$, we can show that
\begin{align*}
    \forall u_{g, \tilde{g}} \in \mathcal{U}(\mathcal{G}), \qquad \left|\frac{1}{n} \sum_{i=1}^n u_{g, {\tilde{g}}}(X_i) - \mathbb{E} [u_{g, {\tilde{g}}}(X)] \right| &\le C \left(\delta_{n, t}^2 + \delta_{n, t} \|u_{g, \tilde{g}}\|_2 \right)
\end{align*} with probability at least $1-e^{-t}$. Meanwhile, it follows from the Cauchy-Schwarz inequality that 
\begin{align*}
    \|u_{g,\tilde{g}}\|_2^2 &= \int \left(\int \left(F_g(x, t) - F_{\tilde{g}}(x, t)\right) \left(F_{\tilde{g}}(x, t) - \indicator{y\le t}\right)  dt\right)^2 \mu(dx, dy) \\
    &\le \int \left(\int \left(F_g(x, t) - F_{\tilde{g}}(x, t)\right)^2 dt \right)\left(\int \left(F_{\tilde{g}}(x, t) - \indicator{y\le t}\right)^2 dt\right)\mu(dx, dy) \\
    &\le 2c_0 \|F_g - F_{\tilde{g}}\|_2^2.
\end{align*} Therefore, we can conclude that
\begin{align}
\label{eq:instance-dependent-b2}
    \forall g, \tilde{g} \in \mathcal{G}, \qquad \left|\mathsf{T}_2(g, \tilde{g}) \right| \le C (\delta_{n, t}^2 + \delta_{n, t} \|F_g - F_{\tilde{g}}\|_2).
\end{align}

\noindent\textbf{Conclusion.} Combining \eqref{eq:instance-dependent-b1} and \eqref{eq:instance-dependent-b2} concludes the proof.

\subsection{Proof of Proposition~\ref{prop:chainig-bernstein}}
\label{sec:proof-chaining-bernstein}
Recall the definition of $\mathsf R_{n,m}(g)$ as $\mathsf R_{n,m}(g)=\frac{1}{n}\sum_{i=1}^n \tilde Z_i(g)$, where we define \begin{align*}
\tilde Z_i(g)=\frac{1}{m_1} \sum_{b=1}^{m_1} \left(\frac{1}{m_2} \sum_{k=1}^{m_2} 2|Y_i - g(X_i, U_{b,i,k})| - \frac{1}{m_2(m_2-1)}\sum_{k\neq k'} |g(X_i, U_{b, i, k}) - g(X_i, U_{b, i, k'})|\right).\end{align*}

Then we have $\mathsf R_{n,\infty}(g)=\frac{1}{n}\sum_{i=1}^n \EE(\tilde Z_i(g)|X_i,Y_i)$, which leads to 
\[\chi(g)=\frac{1}{n}\sum_{i=1}^n \left(\tilde Z_i(g)-\EE(\tilde Z_i(g)|X_i,Y_i)\right)=\frac{1}{n}\sum_{i=1}^n Z_i(g)\]where we let $Z_i(g):=\tilde Z_i(g)-\EE(\tilde Z_i(g)|X_i,Y_i)$.

It suffices to prove $\sup_{g\in\mathcal{G}}\chi(g)\le C(\sqrt{\frac{\delta_{n,t}}{m_1m_2}}+\delta^2_{n,t})$. To achieve the goal, we again resort to the chaining method. Construct a sequence of
$\epsilon$-nets with decreasing scale in $\|\cdot\|_\infty$-norm.
To be more specific, let $\mathcal{S}_k$ be the $2^{-k}$-net of $\mathcal{G}$ in $\|\cdot\|_\infty$-norm. By Lemma \ref{lemma:covering-number-G}, we have that the cardinality of $\mathcal{S}_k$ is bounded by $N_k\le C_1(N^2L^2\log n+k\log 2)$. Also denote $\pi_k(g)=\arg\inf_{f\in \mathcal{S}_k}\|g-f\|_\infty$ as its closest neighbor in $\mathcal{S}_k$.

Note that for an integer $M$ to be specified later, we have 
\[\chi(g)=\chi(\pi_M(g))+\sum_{i=M}^\infty \left[\chi(\pi_{i+1}(g))-\chi(\pi_i(g))\right]\] and thus
\begin{equation}
\label{equ:chaining-decomposition}\sup_{g\in \mathcal{G}}\chi(g)=\sup_{g\in \mathcal{G}}\chi(\pi_M(g))+\sum_{i=M}^\infty \sup_{g\in \mathcal{G}}\left[\chi(\pi_{i+1}(g))-\chi(\pi_i(g))\right]\end{equation}
where we used the separability of $\mathcal{G}$.

We start off by bounding $\sup_{g\in \mathcal{G}}\chi(\pi_M(g))$ using the Bernstein's Inequality. 
To that aim, we need the following claim.
\begin{claim}
For a fixed $g\in \mathcal{G}$ and fixed $i$, with probability at least $4\exp(-\frac{m_1m_2t^2}{50})$, 
it holds that 
\[|Z_i(g)|\le t.\]
Furthermore, we have $\mathsf{Var}(Z_i(g))\le \frac{56}{m_1m_2}$.
\end{claim}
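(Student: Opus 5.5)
The plan is to condition on $(X_i,Y_i)$ throughout, so that $Z_i(g)$ is a function of the $m_1 m_2$ independent noises $\{U_{b,i,k}\}_{b\in[m_1],k\in[m_2]}$ alone. We then treat $Z_i(g)$ as an average over $b$ of i.i.d. centered block statistics
\[
W_b=\frac{2}{m_2}\sum_{k}|Y_i-g(X_i,U_{b,i,k})|-\frac{1}{m_2(m_2-1)}\sum_{k\ne k'}|g(X_i,U_{b,i,k})-g(X_i,U_{b,i,k'})|-\mathbb{E}[\,\cdot\,|X_i,Y_i].
\]
Each $W_b$ splits as $2A_b-B_b$. Here $A_b$ is a centered mean of $m_2$ i.i.d. terms bounded in $[0,2c_0]$, since $|Y|\le c_0$ and $|g|\le M=c_0$ by truncation. $B_b$ is a centered U-statistic of order two with kernel $h(u,u')=|g(X_i,u)-g(X_i,u')|\in[0,2c_0]$.

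\textbf{Tail bound.} We handle the two pieces separately and combine them by a union bound, which accounts for the constant $4$.
\begin{itemize}
\item For the linear part, $\frac{1}{m_1}\sum_b A_b$ is a centered average of $m_1m_2$ independent bounded variables. Hoeffding's inequality gives $\mathbb{P}(|\cdot|>t/4)\le 2\exp(-c\,m_1m_2t^2)$.
\item For the U-statistic part, apply Hoeffding's representation of $B_b$ as an average over the $\lfloor m_2/2\rfloor$ disjoint-pair means. Together with the independence across $b$, this writes $\frac1{m_1}\sum_b B_b$ as an average of means of $m_1\lfloor m_2/2\rfloor\ge m_1m_2/3$ independent bounded terms. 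Convexity of $\exp$ (the standard Hoeffding U-statistic argument) then gives $\mathbb{P}(|\cdot|>t/2)\le 2\exp(-c\,m_1m_2t^2)$.
\end{itemize}
Tracking constants with range $2c_0$ (taking $c_0\le1$ after normalization, or absorbing $c_0$ into the constant) should give the exponent $m_1m_2t^2/50$. The combined statement is $\mathbb{P}(|Z_i(g)|>t)\le 4\exp(-m_1m_2t^2/50)$. The claim's wording "with probability at least" should be read as this tail bound, i.e.\ $|Z_i(g)|\le t$ with probability at least $1-4\exp(-m_1m_2t^2/50)$.

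\textbf{Variance.} By independence across blocks, $\mathsf{Var}(Z_i(g))=\mathsf{Var}(W_1)/m_1$, so it suffices to show $\mathsf{Var}(W_1)\lesssim 1/m_2$. Use $\mathsf{Var}(2A-B)\le 2(4\mathsf{Var}A+\mathsf{Var}B)$.
\begin{itemize}
\item For the linear part, $\mathsf{Var}(A)\le (2c_0)^2/(4m_2)$ by Popoviciu's inequality.
\item For the U-statistic, use the Hoeffding decomposition: $\mathsf{Var}(B)=\frac{4(m_2-2)}{m_2(m_2-1)}\zeta_1+\frac{2}{m_2(m_2-1)}\zeta_2$, with $\zeta_1\le\zeta_2\le (2c_0)^2$. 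Since $m_2\ge2$, this is at most $\frac{4}{m_2}\zeta_2$; the bound uses $\frac{2}{m_2-1}\le \frac{4}{m_2}$.
\end{itemize}
Summing gives $\mathsf{Var}(W_1)\le C/m_2$, and with the normalization $c_0\le1$ the constants fit inside $56$.

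\textbf{Main obstacle.} The main obstacle is the U-statistic piece. The naive bounded-differences route (McDiarmid) would lose the correct $m_2$ scaling in a way that depends on the cross terms, so the Hoeffding disjoint-pair averaging trick is needed to keep the effective sample size proportional to $m_1m_2$. The rest is bookkeeping of constants.
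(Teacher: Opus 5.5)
Your proof is correct and follows essentially the paper's argument. The shared steps are: conditioning on $(X_i,Y_i)$, Hoeffding for the linear part, Hoeffding's disjoint-pair representation with Jensen on the moment generating function for the U-statistic part, and a union bound. The only difference is cosmetic, in the variance step: you use independence across $b$ and the exact formula $\frac{4(m_2-2)}{m_2(m_2-1)}\zeta_1+\frac{2}{m_2(m_2-1)}\zeta_2$, where the paper uses $\mathsf{Var}\bigl(\frac{1}{m_2!}\sum_\sigma T_\sigma\bigr)\le\max_\sigma\mathsf{Var}(T_\sigma)$. You are right that the statement has to be read as probability at least $1-4\exp(-m_1m_2t^2/50)$ with $c_0\le 1$, and under that reading your two pieces give exponents $m_1m_2t^2/32$ and $m_1m_2t^2/24$, so the constants do close.

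Your closing remark is mistaken, however. Replacing a single $U_{b,i,k}$ changes $Z_i(g)$ by at most $8c_0/(m_1m_2)$, so McDiarmid gives $\mathbb{P}(|Z_i(g)|>t)\le 2\exp\bigl(-m_1m_2t^2/(32c_0^2)\bigr)$ directly, and Efron--Stein gives $\mathsf{Var}(Z_i(g))\le 32c_0^2/(m_1m_2)$. Neither loses anything in the $m_2$ scaling, so the disjoint-pair trick is convenient but not needed.
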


\begin{proof}
We fix $X_i$ and $Y_i$ as constants throughout this proof.

Denote $\mu_1=\mathbb{E}[|Y_i-g(X_i,U_{b,i,k})|\Big|X_i,Y_i]$ and $\mu_2=-\mathbb{E}[|g(X_i,U_{b,i,k})-g(X_i,U_{b,i,k'})|\Big|X_i,Y_i]$ for any $k\neq k'$,
we have 
\[Z_i(g)=\frac{2}{m_1m_2}\sum_{b=1}^{m_1}\sum_{k=1}^{m_2}(|Y_i-g(X_i,U_{b,i,k})|-\mu_1)-\frac{1}{m_1m_2(m_2-1)}\sum_{b=1}^{m_1}\sum_{k\neq k'}(|g(X_i,U_{b,i,k})-g(X_i,U_{b,i,k'})|-\mu_2).\]

The first term in the above is easy as $|Y_i-g(X_i,U_{b,i,k})|$ as independent for $b\in[m_1],k\in[m_2]$, and $0\le |Y_i-g(X_i,U_{b,i,k})|\le 2$. By Hoeffding's inequality, we have 
\begin{align}
\label{equ:bound-Z-first-term}\mathbb{P}\left(\left|\frac{2}{m_1m_2}\sum_{b=1}^{m_1}\sum_{k=1}^{m_2}(|Y_i-g(X_i,U_{b,i,k})|-\mu_1)\right|\ge t\right)\le 2\exp(-\frac{m_1m_2t^2}{8}).\end{align}
For the second term, we use the standard splitting for proving U-statistic-type concentration inequality.

Let $l=\lfloor\frac{m_2}{2}\rfloor$, and set
\[W_i(u_1,u_2,\cdots,u_n)=\frac{1}{l}\sum_{j=1}^l (|g(X_i,u_{2j-1})-g(X_i,u_{2j})|-\mu_2),\] i.e., we break the samples into $k$ consecutive non-overlapping pairs. Let $\mathcal{S}^{m_2}$ as the permutation group in $[m_2]$, we can verify 
\begin{align*}
\frac{1}{m_1m_2(m_2-1)}\sum_{b=1}^{m_1}\sum_{k\neq k'}(|g(X_i,U_{b,i,k})-g(X_i,U_{b,i,k'})|-\mu_2)&=\frac{1}{m_2!}\sum_{\sigma\in\mathcal{S}^{m_2}}\frac{1}{m_1}\sum_{b=1}^{m_1}W_i(U_{b,i,\sigma(1)},\cdots,U_{b,i,\sigma(m_2)})\\&=\frac{1}{m_2!}\sum_{\sigma\in\mathcal{S}^{m_2}}T_\sigma
\end{align*}
where for $\sigma\in\mathcal{S}^{m_2}$ denote $T_{\sigma}=\frac{1}{m_1}\sum_{b=1}^{m_1}W_i(U_{b,i,\sigma(1)},\cdots,U_{b,i,\sigma(m_2)})$. We can see that there are $lm_1$ i.i.d. terms in the expression of each $T_{\sigma}$. By Hoeffding's Inequality, we get $T_\sigma$ is sub-Gaussian with variance proxy $\frac{1}{l^2m_1^2}$. 

We can bound the moment generating function of $\frac{1}{m_2!}\sum_{\sigma\in\mathcal{S}^{m_2}}T_\sigma$ as 
\begin{align*}
\mathbb{E}\exp\left(\frac{\lambda}{m_2!}\sum_{\sigma\in\mathcal{S}^{m_2}}T_\sigma\right)&\le \frac{1}{m_2!}\sum_{\sigma\in\mathcal{S}^{m_2}}\mathbb{E}\exp(\lambda T_\sigma)\\&\le \exp\left(\frac{\lambda^2}{2l^2m_1^2}\right)
\end{align*}
where the first inequality is by Jensen's Inequality. Therefore, $\frac{1}{m_2!}\sum_{\sigma\in\mathcal{S}^{m_2}}T_\sigma$ is also sub-Gaussian with variance proxy $\frac{1}{l^2m_1^2}\le \frac{9}{m_1^2m_2^2}$ as $m_2\ge 2$, i.e.,
\begin{align}
\label{equ:bound-Z-second-term}\mathbb{P}\left(\left|\frac{1}{m_1m_2(m_2-1)}\sum_{b=1}^{m_1}\sum_{k\neq k'}(|g(X_i,U_{b,i,k})-g(X_i,U_{b,i,k'})|-\mu_2)\right|\ge t\right)\le 2\exp(-\frac{m_1m_2t^2}{18}).\end{align}
By the union bound of \eqref{equ:bound-Z-first-term} and \eqref{equ:bound-Z-second-term}, we arrive at 
\[\mathbb{P}\left(|Z_i(g)|\ge  5t\right)\le 4\exp(-\frac{m_1m_2t^2}{2}).\]
Replacing $t$ with $5t$ finishes the first half of the proof. 

The bound on $\mathsf{Z_i}(g)$ follows by combining the following two equations,
\[\mathsf{Var}\left(\frac{2}{m_1m_2}\sum_{b=1}^{m_1}\sum_{k=1}^{m_2}(|Y_i-g(X_i,U_{b,i,k})|-\mu_1)\right)\le \frac{16}{m_1m_2}\]
and 
\begin{align*}
\mathsf{Var}\left(\frac{1}{m_1m_2(m_2-1)}\sum_{b=1}^{m_1}\sum_{k\neq k'}(|g(X_i,U_{b,i,k})-g(X_i,U_{b,i,k'})|-\mu_2)\right)&=\mathsf{Var}\left(\frac{1}{m_2!}\sum_{\sigma\in\mathcal{S}^{m_2}}T_\sigma\right)\\
&\le \max_{\sigma\in\mathcal{S}^{m_2}}\mathsf{Var}(T_\sigma)\le \frac{4}{lm_1}\le \frac{12}{m_1m_2}.
\end{align*}
\end{proof}

Recall that
$\chi(g)=\frac{1}{n}\sum_{i=1}^n Z_i(g)$, and for any $g\in\mathcal{G}$, $|Z_i(g)|\le 4$, $\mathsf{Var}(Z_i(g))\le \frac{56}{m_1m_2}$.

Applying Bernstein's Inequality, we have for any $t>0$,
\[\mathbb{P}\left(|\chi(g)|\ge t\right)\le \mathbb{P}\left(|\sum_{i=1}^n Z_i(g)|\ge tn\right)\le 2\exp\left(-\frac{n^2t^2}{2n\cdot \frac{32}{m_1m_2}+\frac{2}{3}\cdot 4nt}\right)\]
which implies that with probability at least $1-2\delta$,
\[|\chi(g)|\lesssim \sqrt{\frac{\log(1/\delta)}{nm_1m_2}}+\frac{\log(1/\delta)}{n}.\]

Taking union bound over $g\in\mathcal{S}_M$, we have with probability at least $1-2\delta$,
\begin{align}
|\sup_{g\in\mathcal{G}}\chi(\pi_M(g))|&\lesssim \sqrt{\frac{\log(N_M/\delta)}{nm_1m_2}}+\frac{\log(N_M/\delta)}{n}\nonumber\\
&\lesssim  \sqrt{\frac{\log(1/\delta)+N^2L^2\log n+MN^2L}{nm_1m_2}}+\frac{\log(1/\delta)+N^2L^2\log n+MN^2L}{n}\label{equ:chaining-part-1}
\end{align}
where in the second inequality we use $\log N_k\le C(N^2L^2\log n+kN^2L\log 2)$.

We proceed to bound $\sum_{k=M}^\infty \sup_{g\in \mathcal{G}}\left[\chi(\pi_{k+1}(g))-\chi(\pi_k(g))\right]$.
It holds by the triangle inequality that,
\begin{align*}
|\tilde Z_i(g_1)-\tilde Z_i(g_2)|&\le\frac{1}{m_1} \sum_{b=1}^{m_1} \Big(\frac{1}{m_2} \sum_{k=1}^{m_2} 2|g_1(X_i, U_{b,i,k})- g_2(X_i, U_{b,i,k})| \\&\qquad+ \frac{1}{m_2(m_2-1)}\sum_{k\neq k'} |g_1(X_i, U_{b, i, k}) - g_1(X_i, U_{b, i, k'})-g_2(X_i, U_{b, i, k}) + g_2(X_i, U_{b, i, k'})|\Big)\\
&\le 4\|g_1-g_2\|_\infty.
\end{align*}

By definition of $\pi_k(g)$ and $\pi_{k+1}(g)$, we have $\|\pi_k(g)-\pi_{k+1}(g)\|_\infty\le \|\pi_k(g)-g\|_\infty+\|g-\pi_{k+1}(g)\|_\infty\le 3\cdot 2^{-k-1}$.
The above display leads to $|\tilde Z_i(g_1)-\tilde Z_i(g_2)|\le 3\cdot 2^{1-k}$. Similarly, we have $|\mathbb{E}\tilde Z_i(g_1)-\mathbb{E}\tilde Z_i(g_2)|\le 3\cdot 2^{1-k}$, and hence $|Z_i(g_1)- Z_i(g_2)|\le 3\cdot 2^{2-k}$.

Applying Hoeffding's Inequality, we have that 
\begin{align*}
\mathbb{P}\left(|\chi(\pi_{k+1}(g))-\chi(\pi_k(g))|\ge t\right)&=\mathbb{P}\left(\Big|\sum_{i=1}^n[\tilde Z_i(\pi_{k+1}(g))-\tilde Z_i(\pi_k(g))]\Big|\ge nt\right)\\&
\le 2\exp\left(-\frac{2(nt)^2}{n(3\cdot 2^{2-k})^2}\right)
\end{align*}
which implies with probability at least $1-2\delta$, it holds that
\[|\chi(\pi_{k+1}(g))-\chi(\pi_k(g))|\le 3\cdot 2^{2-k}\sqrt{\frac{\log(1/\delta)}{2n}}.\]

Note that $(\pi_{k+1}(g),\pi_k(g))$ can only take at most $N_k\times N_{k+1}$ choices. A union bound then implies that 
with probability at least $1-2\delta$, it holds that 
\begin{align*}
\sup_{g\in\mathcal{G}}|\chi(\pi_{k+1}(g))-\chi(\pi_k(g))|&\le 3\cdot 2^{2-k}\sqrt{\frac{\log(N_kN_{k+1}/\delta)}{2n}}\\
&\lesssim 2^{-k}\sqrt{\frac{\log(1/\delta)+N^2L^2\log n+kN^2L}{n}}.
\end{align*}
For the sake of further addition, for $k\ge M$ replace $\delta$ by $\delta 2^{M-k}$, and we have with probability at least $1-\delta 2^{M-k+1}$,
\begin{align}
\sup_{g\in\mathcal{G}}|\chi(\pi_{k+1}(g))-\chi(\pi_k(g))|
&\lesssim 2^{-k}\sqrt{\frac{\log(1/\delta)+(k-M)\log 2+N^2L^2\log n+kN^2L}{n}}\nonumber\\&\lesssim 
2^{-k}\sqrt{\frac{\log(1/\delta)+N^2L^2\log n}{n}}+\frac{2^{-k}\sqrt{kN^2L}}{\sqrt{n}}. \label{equ:chaining-part-2}
\end{align}

Now we are ready to apply a further union bound over \eqref{equ:chaining-part-1} and \eqref{equ:chaining-part-2} for $k=M,M+1,\cdots$.
And in view of \eqref{equ:chaining-decomposition}, we have under the joint event that we denote as $\mathcal{A}$, 
\begin{align*}
|\sup_{g\in\mathcal{G}}\chi(g)|&\lesssim  \sqrt{\frac{\log(1/\delta)+N^2L^2\log n+MN^2L}{nm_1m_2}}+\frac{\log(1/\delta)+N^2L^2\log n+MN^2L}{n}\\&\quad\qquad+\sum_{k=M}^\infty \left(2^{-k}\sqrt{\frac{\log(1/\delta)+N^2L^2\log n}{n}}+\frac{2^{-k}\sqrt{kN^2L}}{\sqrt{n}}\right)\\&\lesssim  \sqrt{\frac{\log(1/\delta)+N^2L^2\log n+MN^2L}{nm_1m_2}}+\frac{\log(1/\delta)+N^2L^2\log n+MN^2L}{n}\\&\quad\qquad+2^{-M}\sqrt{\frac{\log(1/\delta)+N^2L^2\log n+MN^2L}{n}}.
\end{align*}

Taking $M=\lceil\frac{\log n}{2\log 2}\rceil$, we have $2^{-M}\le \frac{1}{\sqrt{n}}$
which leads to
\begin{align*}
2^{-M}\sqrt{\frac{\log(1/\delta)+N^2L^2\log n+MN^2L}{n}}&\le  \frac{\sqrt{\log(1/\delta)+N^2L^2\log n+MN^2L}}{n}\\
&\le \frac{\log(1/\delta)+N^2L^2\log n+MN^2L}{n}
\end{align*}
and hence
\begin{align*}
|\sup_{g\in\mathcal{G}}\chi(g)|&\lesssim \sqrt{\frac{\log(1/\delta)+N^2L^2\log n}{nm_1m_2}}+\frac{\log(1/\delta)+N^2L^2\log n}{n}
\end{align*}
where we used $MN^2L\lesssim N^2L^2\log n$.

Now we are left with calculating the probability of event $\mathcal{A}$. By \eqref{equ:chaining-part-1} and \eqref{equ:chaining-part-2}, we have
\begin{align*}
\mathbb{P}[\mathcal{A}^c]\le 2\delta+\sum_{k=M}^\infty \delta2^{M-k+1}=6\delta.
\end{align*}
Letting $\delta=\frac{e^{-t}}{3}$, we have with probability at least $1-2e^{-t}$,
\begin{align*}
|\sup_{g\in\mathcal{G}}\chi(g)| \lesssim \sqrt{\frac{t+N^2L^2\log n}{nm_1m_2}}+\frac{t+N^2L^2\log n}{n}\lesssim\frac{
\delta_{n,t}}{\sqrt{m_1m_2}}+\delta^2_{n,t},
\end{align*}
which concludes the proof.
\subsection{Proof of Lemma~\ref{lemma:log-cover-number-conversion}}
\label{sec:proof-log-cover-number-coversion}

We need the following technical lemma.

\begin{lemma}
\label{lemma:g-to-G}
    If $\mathbb{E}[|g(U)|] + \mathbb{E}[|h(U)|] < \infty$ and $\|g - h\|_\infty < \epsilon$, then
    \begin{align*}
        \int (F_g(t) - F_h(t))^2 dt \le 4\epsilon
    \end{align*} where $F_g(t) = \mathbb{E}_U[\indicator{g(U)\le t}]$.
\end{lemma}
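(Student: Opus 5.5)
The plan is to bound the squared integrand by the unsquared one and then compute the $L_1$ distance between the two CDFs exactly, via Fubini, by coupling both through the same noise $U$.

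\textbf{Step 1 (drop the square).} Since $F_g(t),F_h(t)\in[0,1]$ for every $t$, we have $|F_g(t)-F_h(t)|\le 1$. Hence
\begin{align*}
\int (F_g(t)-F_h(t))^2\,dt \le \int |F_g(t)-F_h(t)|\,dt .
\end{align*}
It therefore suffices to show that the right-hand side is at most $\epsilon$, which is stronger than the claimed $4\epsilon$.

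\textbf{Step 2 (couple through $U$).} Write $F_g(t)-F_h(t)=\mathbb{E}_U\big[\indicator{g(U)\le t}-\indicator{h(U)\le t}\big]$. Jensen's inequality then gives
\begin{align*}
|F_g(t)-F_h(t)|\le \mathbb{E}_U\big|\indicator{g(U)\le t}-\indicator{h(U)\le t}\big| .
\end{align*}
For a fixed realization $u$, the difference of indicators is nonzero only when $t$ lies in the interval between $g(u)$ and $h(u)$, more precisely in $[\min\{g(u),h(u)\},\max\{g(u),h(u)\})$. Its Lebesgue measure is $|g(u)-h(u)|$.

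\textbf{Step 3 (Fubini).} The integrand is nonnegative and measurable in $(t,u)$, so Tonelli's theorem lets us exchange $\int dt$ and $\mathbb{E}_U$:
\begin{align*}
\int |F_g(t)-F_h(t)|\,dt \le \mathbb{E}_U\int \big|\indicator{g(U)\le t}-\indicator{h(U)\le t}\big|\,dt = \mathbb{E}_U|g(U)-h(U)| \le \|g-h\|_\infty<\epsilon .
\end{align*}
This yields $\int (F_g-F_h)^2\,dt\le \epsilon\le 4\epsilon$.

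The integrability assumption $\mathbb{E}|g(U)|+\mathbb{E}|h(U)|<\infty$ guarantees that every quantity above is finite, although Tonelli already applies to nonnegative integrands without it. No step here is a real obstacle. The only point needing care is the measure-of-interval identity in Step 2, which amounts to a short case check on whether $g(u)\le h(u)$ or $g(u)>h(u)$; the constant $4$ in the statement leaves ample slack.
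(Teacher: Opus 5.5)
Your proof is correct, and it follows a different route from the paper's.

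The paper uses the common noise $U$ only to get the pointwise sandwich $F_h(t-\epsilon)\le F_g(t)\le F_h(t+\epsilon)$. It then bounds $(F_g(t)-F_h(t))^2$ by $(F_h(t-\epsilon)-F_h(t))^2+(F_h(t+\epsilon)-F_h(t))^2$. Finally, it controls $\int (F(t)-F(t-\epsilon))^2\,dt\le 2\epsilon$ for a single CDF $F$ by applying the energy-distance identity to $X$ and an independent shifted copy. That step is where the first-moment hypothesis enters, and where the constant $4=2\cdot 2$ comes from.

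You instead drop the square and compute the $L_1$ distance between the two CDFs (the Wasserstein-1 distance) directly through the coupling, using Jensen and Tonelli. This gives $\int |F_g-F_h|\,dt\le \mathbb{E}_U|g(U)-h(U)|<\epsilon$. Your version is self-contained and gives the sharper constant $1$ instead of $4$. It also yields the stronger $L_1$ statement and does not need $\mathbb{E}|g(U)|+\mathbb{E}|h(U)|<\infty$ at all.

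The paper's version has the appeal of reusing the identity that drives its main analysis. But even there the identity is unnecessary: the same square-versus-absolute-value observation gives $\int (F(t)-F(t-\epsilon))^2\,dt\le\int (F(t)-F(t-\epsilon))\,dt=\epsilon$ for any CDF.
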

\begin{proof}[Proof of Lemma~\ref{lemma:g-to-G}]
Observe that
\begin{align*}
    F_g(t) = \mathbb{P}(g(U) \le t) = \mathbb{P}\left(h(U) + g(U) - h(U) \le t\right) \in [\mathbb{P}(h(U) \le t-\epsilon), \mathbb{P}(h(U) \le t+\epsilon)],
\end{align*} where the last inequality follows from $\|g-h\|_\infty \le \epsilon$ and the monotonicity of CDF function. Then
\begin{align*}
    F_h(t-\epsilon) \le F_g(t) \le F_h(t+\epsilon), 
\end{align*} which further implies
\begin{align*}
    \int |F_g(t) - F_h(t)|^2 dt &\le \int \max\{|F_h(t-\epsilon) - F_h(t)|^2, |F_h(t+\epsilon) - F_h(t)|^2\} dt \\
    &\le \int |F_h(t-\epsilon) - F_h(t)|^2 + |F_h(t+\epsilon) - F_h(t)|^2 dt.
\end{align*} We claim that 
\begin{align}
\label{eq:ident1}
\int (F(t) - F(t - \epsilon))^2 dt \le 2\epsilon\qquad \text{  for any } \epsilon>0 \text{ and CDF }F \text{ satisfying }\mathbb{E}_{X\sim F}[|X|] < \infty.
\end{align} This completes the proof. 
\noindent \emph{Proof of \eqref{eq:ident1}.} Let $X, X'\sim F$ be independent and $Y=X'-\epsilon$, $Y'=X-\epsilon$, it follows from the identity of energy distance that
\begin{align*}
    \int (F(t) - F(t - \epsilon))^2 dt &= 2\mathbb{E}[|X-Y|] - \mathbb{E}[|X-X'|] - \mathbb{E}[|Y-Y'|] \\
    &= 2\mathbb{E}[|X-X'+\epsilon|] - \mathbb{E}[|X-X'|] - \mathbb{E}[|X-X'|] \le 2\epsilon.
\end{align*}\end{proof}

Now we are ready to prove Lemma \ref{lemma:log-cover-number-conversion}.

\begin{proof}[Proof of Lemma~\ref{lemma:log-cover-number-conversion}]
    By the definition of covering number, for any $\epsilon^\star>0$, there exists $N^\star \le \mathcal{N}(\epsilon^\star, \mathcal{G}, \|\cdot\|_\infty)$ and $g_1,\ldots, g_{N^\star}$ such that for any $g \in \mathcal{G}$, there exists some $\pi(g) \in \{g_1,\ldots, g_{N^\star}\}$ such that
    \begin{align}
    \label{eq:cover-prop}
        \sup_{(x,u)\in [0,1]^{d+1}} |g(x, u) - \pi(g)(x, u)| \le \epsilon^\star.
    \end{align} We construct a covering of $\mathcal{V}(\mathcal{G})$ as 
    \begin{align}
    \label{eq:cover1}
        \{ v_{{g_k}, {g_j}}(x) \}_{k, j \in [N^\star]}
    \end{align} Then for any $v_{g,\tilde{g}}(x) \in \mathcal{V}(\mathcal{G})$ with $g, \tilde{g} \in \mathcal{G}$, and any $x$, we will show that
    \begin{align}
    \label{eq:cover1-error}
        |v_{g,\tilde{g}}(x) - v_{\pi(g), \pi(\tilde{g})}(x)| \le 17\sqrt{c_0}(\epsilon^\star)^{1/2}.
    \end{align} Without loss of generality, we assume that $(\epsilon^\star)^{1/2} \le \sqrt{c_0}/8$, otherwise the inequality is trivial because $v(x) \in [0, 2c_0]$. 

    To be specific, for any $g, h$, we denote $\langle g, h\rangle(x) = \int g(x, t) h(x, t) dt$. We will use the decomposition that 
    \begin{align*}
         v_{g, \tilde{g}}(x) - v_{\pi(g), \pi(\tilde{g})}(x) &= \int (F_g - F_{\pi(g)} + F_{\pi(g)} - F_{\pi(\tilde{g})} + F_{\pi(\tilde{g})} - F_{\tilde{g}})^2(x, t) dt - \int (F_{\pi(g)} - F_{\pi(\tilde{g})})^2(x, t) dt \\
         &= \underbrace{\langle F_g - F_{\pi(g)}, F_g - F_{\pi(g)}\rangle(x) + \langle F_{\tilde{g}} - F_{\pi(\tilde{g})}, F_{\tilde{g}} - F_{\pi(\tilde{g})}\rangle(x)}_{\mathsf{T}_1(x)} \\
         &~~~~~~~~ \underbrace{- 2 \langle F_g - F_{\pi(g)}, F_{\tilde{g}} - F_{\pi(\tilde{g})}\rangle(x)}_{\mathsf{T}_2(x)} \\
         &~~~~~~~~ + \underbrace{2 \langle F_g - F_{\pi(g)}, F_{\pi(g)} - F_{\pi(\tilde{g})}\rangle(x) - 2\langle F_{\tilde g} - F_{\pi(\tilde{g})}, F_{\pi(g)} - F_{\pi(\tilde{g})}\rangle(x)}_{\mathsf{T}_3(x)}
    \end{align*} with
    \begin{align*}
        |\mathsf{T}_1(x)| \lor |\mathsf{T}_2(x)| \le 8\epsilon^\star \qquad \text{and} \qquad |\mathsf{T}_3(x)| \le 8\sqrt{\epsilon^\star} v_{\pi(g), \pi(\tilde{g})}(x),
    \end{align*} where the bounds on $\mathsf{T}_1$ -- $\mathsf{T}_3$ follows from applying covering property \eqref{eq:cover-prop} in Lemma \ref{lemma:g-to-G} and Cauchy-Schwarz inequality. 

For example, we have 
\begin{align*}
|T_3(x)|&\le 
2\|F_g-F_{\pi(g)}\|(x)\cdot \|F_{\pi(g)}-F_{\pi(\tilde g)}\|(x)+2\|F_{\tilde g}-F_{\pi(\tilde g)}\|(x)\cdot \|F_{\pi(g)}-F_{\pi(\tilde g)}\|(x)\\&\le 
2\sqrt{4\epsilon^{\star}v_{\pi(g), \pi(\tilde{g})}(x)}+2\sqrt{4\epsilon^{\star}v_{\pi(g), \pi(\tilde{g})}(x)}=8\sqrt{2c_0\epsilon^{\star}}.
\end{align*}
    
This completes the proof of the claim \eqref{eq:cover1-error} by noting that
    \begin{align*}
        \left|v_{g, \tilde{g}}(x) - v_{\pi(g), \pi(\tilde{g})}(x) \right|\le 8(\epsilon^{\star})^{1/2} \times \frac{\sqrt{c_0}}{8} +8 \sqrt{2c_0\epsilon^\star} \le 17 \sqrt{c_0\epsilon^\star}.
    \end{align*}
    
    Now we have already shown that the \eqref{eq:cover1} is a $17(c_0\epsilon^\star)^{1/2}$-cover of $\mathcal{V}(\mathcal{G})$ by \eqref{eq:cover1-error}. Letting $\epsilon = 17(c_0\epsilon^\star)^{1/2}$, we obtain
    \begin{align*}
        \log \mathcal{N}(\epsilon, \mathcal{V}(\mathcal{G}), \|\cdot \|_{\infty}) \le 2 \log \mathcal{N}(\epsilon^\star, \mathcal{G}, \|\cdot \|_{\infty}) = 2 \log \mathcal{N}\left((\epsilon/17)^2/c_0, \mathcal{G}, \|\cdot \|_{\infty}\right).
    \end{align*}
\end{proof}

\subsection{Proof of Lemma~\ref{lemma:log-cover-number-conversion-2}}
\label{sec:proof-log-cover-number-coversion-2}

The proof of Lemma \ref{lemma:log-cover-number-conversion-2} follows the same vein.

\begin{proof}[Proof of Lemma~\ref{lemma:log-cover-number-conversion-2}]
Same as Lemma \ref{lemma:log-cover-number-conversion}, for any $\epsilon^{\star}>0$, there exists $N^{\star}\le \mathcal{N}(\epsilon^{\star},\mathcal{G},\|\cdot\|_{\infty})$ and $g_1,\cdots,g_{N^{\star}}$ such that for any $g\in \mathcal{G}$, there exists some $\pi(g)\in \{g_1,\cdots,g_{N^{\star}}\}$ such that 
\[\sup_{(x,u)\in [0,1]^{d+1}}|g(x,u)-\pi(g)(x,u)|\le \epsilon^{\star}.\]
Construct a covering of $\mathcal{U}(\mathcal{G})$ as 
\[\{u_{g_k,g_j}(x,y)\}_{k,j\in [N^{\star}]}\]
For any $u_{g,\tilde g}(x,y)\in \mathcal{U}(\mathcal{G})$ with $g,\tilde g\in \mathcal{G}$ and any $(x,y)$, we can show that 
\begin{align}
\label{eq:cover2-error}
|u_{g,\tilde g}(x,y)-u_{\pi(g),\pi(\tilde g)}(x,y)|\le 20 (c_0\epsilon^{\star})^{1/2}.
\end{align}
To see this, we have the following decomposition that 
\begin{align*}
    &u_{g,\tilde g}(x,y)-u_{\pi(g),\pi(\tilde g)}(x,y)\\=&\int (F_g(x, t) - F_{\tilde{g}}(x, t)) (F_{\tilde{g}}(x, t) - \indicator{y \le t}) dt- \int (F_{\pi(g)}(x, t) - F_{\pi(\tilde{g})}(x, t)) (F_{\pi(\tilde{g})}(x, t) - \indicator{y \le t}) dt\\
    =&\underbrace{\langle F_{\pi(g)} - F_{\pi(\tilde{g})}, F_{\tilde{g}}-F_{\pi(\tilde{g})}\rangle (x)}_{\mathsf{T}_1(x)}
    +\underbrace{\langle F_g-F_{\pi(g)},F_{\tilde g}\rangle(x)+\langle F_{\pi(\tilde g)}-F_{\tilde g},F_{\tilde g}\rangle(x)}_{\mathsf{T}_2(x)}\\
    \quad &+\underbrace{\langle F_{\pi(g)}-F_g,\indicator{\cdot\ge y} \rangle(x)-\langle F_{\pi(\tilde g)}-F_{\tilde g},\indicator{\cdot\ge y}\rangle(x)}_{\mathsf{T}_3(x)}.
\end{align*}

By Cauchy-Schwartz Inequality and the covering definition, we can bound each term as 
\[\mathsf{T}_1(x)\le \sqrt{2c_0\times 4\epsilon},\qquad \mathsf{T}_2(x)\le  2\sqrt{2c_0\times 4\epsilon}\qquad \mathsf{T}_3(x)\le  2\sqrt{2c_0\times 4\epsilon}\]

This completes the proof of \eqref{eq:cover2-error}.

Similar to the proof of Lemma \ref{lemma:log-cover-number-conversion}, let $\epsilon=20(c_0\epsilon^{\star})^{1/2}$, we obtain that
\begin{align*}
        \log \mathcal{N}(\epsilon, \mathcal{U}(\mathcal{G}), \|\cdot \|_{\infty}) \le 2 \log \mathcal{N}(\epsilon^\star, \mathcal{G}, \|\cdot \|_{\infty}) = 2 \log \mathcal{N}\left((\epsilon/20)^2/c_0, \mathcal{G}, \|\cdot \|_{\infty}\right).
    \end{align*}
\end{proof}

\section{Proof of Energy Identity}
\label{proof:identity}
\begin{proof}
Let $P,Q$ be two general distributions.
For random variables $X\sim P,\ Y\sim Q$, we rewrite the energy statistics as $E(X,Y)=d_E(P, Q)$.
We first claim that
\begin{align}
    d_E(P, Q) = 2\int \{P(t) - Q(t)\}^2 dx
\label{eq:e-stats-1}
\end{align} where for simplicity of notation, we also use $P$ and $Q$ to denote the CDFs of $P$ and $Q$.

To prove that,
we have from a simple calculation that
\begin{align*}
    \mathbb{E}_{X\sim P, Y\sim Q}[|X-Y|] &= \int_{-\infty}^\infty \left(\int_{-\infty}^x (x-y) dQ(y) + \int_{x}^\infty (y-x) dQ(y)\right) dP(x) \\
    &= \int \left(2xQ(x) - 2\int_{-\infty}^x ydQ(y) + \mathbb{E}[Y] \right) dP(x).
\end{align*}
Then, we have
\begin{align*}
&\mathbb{E}_{X\sim P, Y\sim Q}[|X-Y|] - \mathbb{E}_{X, X'\sim P}[|X-X'|] \\
&= \mathbb{E}[Y] - \mathbb{E}[X] + \int_{-\infty}^\infty \left(2x(Q(x) - P(x)) - 2\int_{-\infty}^x y(dQ(y) - dP(y))\right) dP(x) \\
&= \mathbb{E}[Y] - \mathbb{E}[X] + 2\left(xQ(x) - xP(x) - \int_{-\infty}^x y(dQ(y)-dP(y)) \right) \Big|_{x=-\infty}^{x=\infty} \\ &~~~~~~~~~~~~ - \int_{-\infty}^\infty 2(Q(x) - P(x))P(x) dx \\
&= \int_{-\infty}^\infty 2P(x)(P(x) - Q(x)) dx.
\end{align*}
Similarly, we also have
\begin{align*}
\mathbb{E}_{X\sim P, Y\sim Q}[|X-Y|] - \mathbb{E}_{Y, Y'\sim Q}[|Y-Y'|] = \int_{-\infty}^\infty 2Q(x)(Q(x) - P(x)) dx.
\end{align*}
Hence, adding up, we complete the proof of \eqref{eq:e-stats-1}.

For any fixed $x$, take $P$ and $Q$ to be the conditional distribution of $Y$ and $g(X,U)$ given $X=x$ respectively, we have that 
\begin{align*}
2\EE_{Y|x}|Y-g(x,U)|-\EE_{Y|x}|Y-Y'|-\EE|g(x,U)-g(x,U')|=d_E^2(P,Q)=2\int \{F_g(x,y)-F^{\star}(x,y)\}^2dy.
\end{align*}

Now Taking expectation with respect to $X=x$, we have that 
\begin{align*}
2\EE|Y-g(x,U)|-\EE|Y-Y'|-\EE|g(x,U)-g(x,U')|=2\int \{F_g(x,y)-F^{\star}(x,y)\}^2dyd\mu_0(x)=2\|F_g-F^{\star}\|_2^2.
\end{align*}
Note that in the last display, $-\EE|Y-Y'|$ is independent of $g$ and only depends on $\mu_0$, hence letting $C_{\mu_0}=\EE|Y-Y'|$
completes the proof.

\end{proof}

\section{Proofs of Rates and Downstream Tasks}

\subsection{Proof of Corollary \ref{cor:final-rate}}

Note that the distribution of $F_0(U)$ is $\mathcal{U}[0,1]$. Let $g^{\star}(x,u)=Q^{\star}(x,F_0(u))$, we have the $\alpha$-conditional quantile of $g^{\star}(x,U)$ is $Q^{\star}(x,\alpha)$. By the definition of HCM and smoothness of $F_0$, we have that $g^{\star}\in \calH(p+1,l+1,\tilde P)$ with $\gamma^{\star}(\calH(p+1,l+1,\tilde P))=\gamma^{\star}.$

By the neural network approximation results, e.g., Theorem 3.4 in \cite{fan2022noise}, there exists some $g\in\mathcal{H}_{\mathtt{nn}}(d, L, N, \tilde d, B)$ such that $\|g-g^{\star}\|_\infty\le (NL/\log(NL))^{-2\gamma^{\star}}$. Now denote the $\alpha$-conditional quantile of $g(x,U)$ as $Q(x,\alpha)$. We have $\PP(g(x,U)\le Q^{\star}(x,\alpha)+\|g-g^{\star}\|_\infty)\ge \PP(g^{\star}(x,U)\le Q^{\star}(x,\alpha))\ge \alpha$, hence $Q(x,\alpha)\le Q^{\star}(x,\alpha)+\|g-g^{\star}\|_\infty$. Similarly, we have $Q(x,\alpha)\ge Q^{\star}(x,\alpha)-\|g-g^{\star}\|_\infty$. It follows that $\|Q-Q^{\star}\|_\infty\le \|g-g^{\star}\|_\infty\le (NL/\log(NL))^{-2\gamma^{\star}}$. 

Now we are ready to bound $\|F_g-F^{\star}\|_\infty$. To be concrete, for any $x,y$, we have $|F_g(x,y)-F^{\star}(x,y)|=|F^{\star}(x,Q^{\star}(x,F_g(x,u)))-F^{\star}(x,u)|\le c_1|Q^{\star}(x,F_g(x,u))-u|=c_1|Q^{\star}(x,F_g(x,u))-Q(x,F_g(x,u))|\le c_1(NL/\log(NL))^{-2\gamma^{\star}}$. 

Hence $\delta_a\le \|F_g-G^{\star}\|_\infty^2\le c_1(NL/\log(NL))^{-4\gamma^{\star}}$. Pick $NL\asymp n^{\frac{1}{4\gamma^{\star}+2}}(\log n)^{\frac{4\gamma^{\star}-1}{2\gamma^{\star}+1}}$, and further note $m_1 m_2\gtrsim n^{\frac{2\gamma^{\star}}{2\gamma^{\star}+1}}$, we arrive from Theorem \ref{thm:oracle-n-infty}, 
\begin{align*}
\frac{\|F_{\hat{g}} - F^\star\|^2_2}{C}\le n^{-\frac{2\gamma^{\star}}{2\gamma^{\star}+1}}(\log n)^{\frac{12\gamma^{\star}}{2\gamma^{\star}+1}}+\delta_{\mathtt{opt}}+\frac{t}{n}.
\end{align*}
\qed

\subsection{Proof of Proposition \ref{prop:conditional-moments}}
\label{sec:proof-cond-moments}
To facilitate the proof, define $\breve m_{w}(x)=\EE\left[w(\hat g(x,U))\right]$. By $|w|\le c_1$ in Condition \ref{cond:conditional-moments}, we have $\mathsf{Var}[w(\hat g(x,U))]\le c_1^2$ for every $x$.
Hence 
\begin{align*}
\EE\Big[\|\hat m_{w}-\breve m_{w}\|^2\Big| \hat g\Big]=\EE\left[\frac{\mathsf{Var}[w(\hat g(X,U))|X,\hat g]}{k}\Big|\hat g\right]\le \frac{c_1^2}{k}.
\end{align*}

On the other hand, it holds that
\begin{align}
\label{equ:distribution-to-conditional-moments}
\EE\Big[\|\breve m_{w}-m^{\star}_{w}\|^2\Big|\hat g\Big]&=\EE\left(\int w(y)dF^{\star}(X,y)-\int w(y)dF_{\hat g}(X,y)\right)^2\nonumber\\&=
\EE\left(\int (F^{\star}(X,y)-F_{\hat g}(X,y))w'(y)dy\right)^2\nonumber\\&\le c_1\EE\left(\int |F^{\star}(X,y)-F_{\hat g}(X,y)|^2dy\right)\le c_1C \delta_{\mathtt{NDRE}}
\end{align}
where the first inequality follows from the Cauchy-Schwarz Inequality. The proof of the first part follows from $\EE\Big[\|\hat m_{w}-m^{\star}_{w}\|^2\Big| \hat g\Big]\le 2\EE\Big[\|\hat m_{w}-\breve m_{w}\|^2\Big| \hat g\Big]+2\EE\Big[\| m^{\star}_{w}-\breve m_{w}\|^2\Big| \hat g\Big]$.

For the second part,
as illustrated in Section \ref{sec:downstream-estimation}, assign different noises to different test samples. Hence for test sample $X_i'$, denote the corresponding noises as $\{U_{i,l}\}_{l=1}^k$, we have $\hat m_w(X'_i)=\frac{1}{k}\sum_{l=1}^k w(\hat g(X'_i,U_{i,l}))$.
The first part says
\begin{align*}
\EE\Big[(\hat m_w(X'_i)-m^{\star}_w(X'_i))^2\Big| \hat g\Big]\le C(\frac{1}{k}+\delta_{\mathtt{NDRE}}).
\end{align*}
We can calculate the variance as 
\begin{align*}
\Var\Big[(\hat m_w(X'_i)-m^{\star}_w(X'_i))^2\Big| \hat g\Big]\lesssim  \EE\Big[(\hat m_w(X'_i)-m^{\star}_w(X'_i))^2\Big| \hat g\Big]\le C(\frac{1}{k}+\delta_{\mathtt{NDRE}}).
\end{align*}
Therefore, applying Bernstein's Inequality, it follows that with probability at least $1-e^{-t'}$,
\begin{align*}
\Big|\frac{1}{n'} \sum_{i=1}^{n'} \left(\hat{m}_w(X'_i) - m^\star_w(X'_i)\right)^2 -\EE\Big[(\hat m_w(X'_i)-m^{\star}_w(X'_i))^2\Big| \hat g\Big]\Big|
&\lesssim \sqrt{\frac{t'}{n'}}\sqrt{C(\frac{1}{k}+\delta_{\mathtt{NDRE}})}+\frac{t'}{n'}\\
&\lesssim \frac{1}{k}+\delta_{\mathtt{NDRE}}+\frac{t'}{n'}. 
\end{align*}
Hence, the proof is completed.

\qed


\subsection{Proof of Proposition \ref{prop:condition-quantile}}
\label{sec:proof-conditional-quantile}
\begin{proof}
We first state a result bounding the empirical process, see Corollary 4.15 in \cite{wainwright2019high}.

\begin{lemma}
\label{lemma:glivenko-cantelli}
Let $F(t)$ be the CDF of a random variable $X$ and let $\hat F_k$ be the empirical CDF based on $k$ i.i.d. samples $X_i\sim \PP$. Then 
\begin{align*}
\PP\left[\|\hat F_k-F\|_\infty\ge 8\sqrt{\frac{\log(k+1)}{k}}+\delta\right]\le e^{-\frac{k\delta^2}{2}}.  
\end{align*}
\end{lemma}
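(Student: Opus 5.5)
This is the DKW-type uniform deviation bound for the empirical CDF, and I would prove it in three steps: symmetrization for the expectation, a Rademacher bound via the growth function of half-lines, and bounded differences for the tail. Write $Z=\|\hat F_k-F\|_\infty=\sup_{t}|\frac1k\sum_{i=1}^k \indicator{X_i\le t}-F(t)|$. This is the supremum of an empirical process indexed by the class of half-line indicators $\{x\mapsto \indicator{x\le t}:t\in\mathbb{R}\}$.

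\emph{Concentration.} Replacing a single $X_i$ by an arbitrary value changes each $\frac1k\sum_j\indicator{X_j\le t}$ by at most $1/k$ uniformly in $t$. Hence $Z$ satisfies the bounded differences property with constants $c_i=1/k$. McDiarmid's inequality then gives
\begin{align*}
\PP(Z\ge \EE Z+\delta)\le \exp\Big(-\frac{2\delta^2}{\sum_i c_i^2}\Big)=e^{-2k\delta^2}\le e^{-k\delta^2/2}.
\end{align*}
It therefore suffices to show $\EE Z\le 8\sqrt{\log(k+1)/k}$.

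\emph{Expectation.} By standard symmetrization, $\EE Z\le 2\EE\sup_t|\frac1k\sum_i\epsilon_i\indicator{X_i\le t}|$ with i.i.d.\ Rademacher $\epsilon_i$. Conditionally on $X_1,\dots,X_k$, the vectors $(\indicator{X_i\le t})_{i=1}^k$ take at most $k+1$ distinct values as $t$ ranges over $\mathbb{R}$; this is the polynomial discrimination of half-lines. Each such vector has Euclidean norm at most $\sqrt k$. Apply the finite-class maximal inequality for sub-Gaussian variables to these $k+1$ vectors and their negations. This bounds the conditional Rademacher average by $\sqrt{2\log(2(k+1))/k}$, so $\EE Z\le 2\sqrt{2\log(2(k+1))/k}$. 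Since $\log 2\le\log(k+1)$ for $k\ge1$, this is at most $4\sqrt{\log(k+1)/k}\le 8\sqrt{\log(k+1)/k}$. Combining with the concentration step gives the claim.

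The only step needing care is the bookkeeping of constants in the symmetrization and maximal inequality. The stated constant $8$ leaves ample slack, so I expect no real obstacle; this is exactly Corollary 4.15 of \cite{wainwright2019high} applied to a class with polynomial discrimination of order one.
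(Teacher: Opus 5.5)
Your proposal is correct and follows the same route as the result the paper relies on: the paper gives no proof of its own, only a pointer to Corollary 4.15 of \cite{wainwright2019high}, which is derived there in the same way. That derivation uses bounded differences for concentration around $\EE Z$, followed by symmetrization and the polynomial-discrimination (order one) bound on the Rademacher complexity of half-line indicators. Your constants are in fact slightly sharper than stated: $\EE Z\le 4\sqrt{\log(k+1)/k}$, and the tail is $e^{-2k\delta^2}$ because a single replacement moves $Z$ by at most $1/k$. As in the cited result, the statement should be read for $\delta\ge 0$.
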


Fix $x$, recall that the conditional CDF of $Y|X=x$ is $F^{\star}(x,y)$. Denote the empirical conditional CDF of $\{\hat g(x,U_l)\}_{l=1}^k$ as $\hat F_k(x,y)$, we have by Lemma \ref{lemma:glivenko-cantelli}, with probability at least $1-e^{-\frac{k\delta^2}{2}}$, 
\begin{align*}
\sup_y|\hat F_k(x,y)-F_{\hat g}(x,y)|
\le 8\sqrt{\frac{\log(k+1)}{k}}+\delta. 
\end{align*}
Denote $\delta_x=\int_{-1}^1 (F^{\star}(x,y)-F_{\hat g}(x,y))^2dy$, 
it follows from Cauchy-Schwarz Inequality that for any $0<a<b<1$,
\begin{align}
\label{equ:x-pointwise-CDF-discrepancy-bound}
\int_{a}^b|\hat F_k(x,y)-F^{\star}(x,y)|^2dy
\le 2\delta_x+2(b-a)\left(8\sqrt{\frac{\log(k+1)}{k}}+\delta\right)^2.
\end{align}

Now note that the proposed estimator can be rewritten as $\hat Q(x,\alpha)=\inf\limits_t\{t:\hat F_k(x,t)\ge \alpha\}$, we then consider the following two cases.

\begin{itemize}
\item If $\hat Q(x,\alpha)< Q^{\star}(x,\alpha)$, then 
\begin{align*}
2\delta_x+2\Big|Q^{\star}(x,\alpha)-\hat Q(x,\alpha)\Big|\left(8\sqrt{\frac{\log(k+1)}{k}}+\delta\right)^2&\ge \int_{\hat Q(x,\alpha)}^{Q^{\star}(x,\alpha)}|\hat F_k(x,y)-F^{\star}(x,y)|^2dy\\&\ge \int_{\hat Q(x,\alpha)}^{Q^{\star}(x,\alpha)}|\alpha-F^{\star}(x,y)|^2dy\\&\ge \int_{\hat Q(x,\alpha)}^{Q^{\star}(x,\alpha)}\Big|\alpha-\Big(\alpha-c_d(Q^{\star}(x,\alpha)-y))\Big)\Big|^2dy\\&=\frac{c_d^2}{3}\Big|\hat Q(x,\alpha)-Q^{\star}(x,\alpha)\Big|^3
\end{align*}
where the first equality follows from \eqref{equ:x-pointwise-CDF-discrepancy-bound}, the second inequality uses the definition of $\hat Q(x,\alpha)$ and that $\hat F_k(x,y)$ is  nondecreasing in $y$. The third inequality follows from the Condition \ref{cond:conditional-quantile} which leads to $F^{\star}(x,y)\le \alpha-c_d(Q^{\star}(x,\alpha)-y)$ for $y\le Q^{\star}(x,\alpha)$.

\item If $\hat Q(x,\alpha)> Q^{\star}(x,\alpha)$, then 
\begin{align*}
2\delta_x+2\Big|Q^{\star}(x,\alpha)-\hat Q(x,\alpha)\Big|\left(8\sqrt{\frac{\log(k+1)}{k}}+\delta\right)^2&\ge \int_{ Q^{\star}(x,\alpha)}^{\hat Q(x,\alpha)}|\hat F_k(x,y)-F^{\star}(x,y)|^2dy\\&\ge \int_{ Q^{\star}(x,\alpha)}^{\hat Q(x,\alpha)}|F^{\star}(x,y)-\alpha|^2dy\\&\ge \int_{ Q^{\star}(x,\alpha)}^{\hat Q(x,\alpha)}\Big|\Big(\alpha+c_d(y-Q^{\star}(x,\alpha)))\Big)-\alpha\Big|^2dy\\&=\frac{c_d^2}{3}\Big|\hat Q(x,\alpha)-Q^{\star}(x,\alpha)\Big|^3
\end{align*}
where similarly, the first equality follows from \eqref{equ:x-pointwise-CDF-discrepancy-bound}, the second inequality uses the definition of $\hat Q(x,\alpha)$ and that $\hat F_k(x,y)$ is  nondecreasing in $y$. The third inequality follows from the Condition \ref{cond:conditional-quantile} which leads to $F^{\star}(x,y)\ge \alpha+c_d(y-Q^{\star}(x,\alpha))$ for $y\ge Q^{\star}(x,\alpha)$.
\end{itemize}
Hence, in both cases, we have that with probability at least $1-e^{-\frac{k\delta^2}{2}}$, for every $0<\alpha<1$,
\begin{align*}
2\delta_x+2\Big|Q^{\star}(x,\alpha)-\hat Q(x,\alpha)\Big|\left(8\sqrt{\frac{\log(k+1)}{k}}+\delta\right)^2\ge\frac{c_d^2}{3}\Big|\hat Q(x,\alpha)-Q^{\star}(x,\alpha)\Big|^3.
\end{align*}
Take $\delta=2\sqrt{\frac{\log k}{k}}$, we have with probability at least $1-\frac{1}{k^2}$, the following event 
\begin{align}
\label{eq:event-A-cond-quantile}
\mathcal{A}(x):=\left\{\sup_{0<\alpha<1}\Big|\hat Q(x,\alpha)-Q^{\star}(x,\alpha)\Big|^2\le \max\big\{(\frac{12\delta_x}{c_d^2})^{2/3},\frac{1200\log (k+1)}{kc_d^2}\big\}.\right\}\end{align} holds.
It follows that 
\begin{align*}
\EE\Big[\sup_{0<\alpha<1}|\hat Q(x,\alpha)-Q^{\star}(x,\alpha)|^2\Big|\hat g\Big]&=\EE\Big[\mathds{1}_{\mathcal{A}(x)}\sup_{0<\alpha<1}|\hat Q(x,\alpha)-Q^{\star}(x,\alpha)|^2\Big|\hat g\Big]\\
&~~~~~~~~~~~~+\EE\Big[\mathds{1}_{\mathcal{A}(x)^c}\sup_{0<\alpha<1}|\hat Q(x,\alpha)-Q^{\star}(x,\alpha)|^2\Big|\hat g\Big]\\&\le (\frac{12\delta_x}{c_d^2})^{2/3}+\frac{1200\log (k+1)}{kc_d^2}+\frac{4c_0^2}{k^2}.
\end{align*}
Taking expectation with respect to the randomness of $X$ yields that
\begin{align*}
\EE\Big[\sup_{0<\alpha<1}|\hat Q(X,\alpha)-Q^{\star}(X,\alpha)|^2\Big|\hat g\Big]&\le (\frac{12\EE\delta_X}{c_d^2})^{2/3}+\frac{1200\log (k+1)}{k}+\frac{4c_0^2}{k^2}\\&\lesssim (\delta_{\mathtt{NDRE}})^{2/3}+\frac{\log k}{k} 
\end{align*}
where we use Jensen's Inequality and the definition of $\delta_{\mathtt{NDRE}}$. This completes the first part. The proof of the second part is similar to Proposition \ref{prop:conditional-moments}.

\end{proof}

\subsection{Proof of Proposition \ref{prop:prediction-band}}
\label{sec:proof-prediction-band}
\begin{proof}
We start with the proof of \eqref{eq:prop-prediction-band-1}.
\noindent\textbf{Proof of \eqref{eq:prop-prediction-band-1}}
Fix $x$ first.
Define $\hat F_k(x,y),\delta_x$ same as the proof of Proposition \ref{prop:condition-quantile} and recall \eqref{equ:x-pointwise-CDF-discrepancy-bound}. 
We similarly have two cases.
\begin{itemize}
    \item If $\hat F_k(x,\tilde y)<F^{\star}(x,\tilde y)$, then 
\begin{align*}
2\delta_x+2\Big|\frac{F^{\star}(x,\tilde y)-\hat F_k(x,\tilde y)}{C_d}\Big|\left(8\sqrt{\frac{\log(k+1)}{k}}+\delta\right)^2&\ge \int_{\tilde y-\frac{F^{\star}(x,\tilde y)-\hat F_k(x,\tilde y)}{C_d}}^{\tilde y}|\hat F_k(x,y)-F^{\star}(x,y)|^2dy\\&\ge \int_{\tilde y-\frac{F^{\star}(x,\tilde y)-\hat F_k(x,\tilde y)}{C_d}}^{\tilde y}|C_d(y-\tilde y)|^2dy\\&=\frac{1}{3C_d}|F^{\star}(x,\tilde y)-\hat F_k(x,\tilde y)|^3
\end{align*}
where the first inequality is by \eqref{equ:x-pointwise-CDF-discrepancy-bound}, the second inequality follows from Condition \ref{cond:prediction-band}.
\item If $\hat F_k(x,\tilde y)>F^{\star}(x,\tilde y)$, then similarly,
\begin{align*}
2\delta_x+2\Big|\frac{F^{\star}(x,\tilde y)-\hat F_k(x,\tilde y)}{C_d}\Big|\left(8\sqrt{\frac{\log(k+1)}{k}}+\delta\right)^2&\ge \int_{\tilde y}^{\tilde y+\frac{\hat F_k(x,\tilde y)-F^{\star}(x,\tilde y)}{C_d}}|\hat F_k(x,y)-F^{\star}(x,y)|^2dy\\&\ge \int_{\tilde y}^{\tilde y+\frac{\hat F_k(x,\tilde y)-F^{\star}(x,\tilde y)}{C_d}}|C_d(y-\tilde y)|^2dy\\&=\frac{1}{3C_d}|F^{\star}(x,\tilde y)-\hat F_k(x,\tilde y)|^3
\end{align*}
\end{itemize}
Therefore, in either case we have with probability at least $1-e^{-\frac{k\delta^2}{2}},$
\begin{align*}
2\delta_x+2\Big|\frac{F^{\star}(x,\tilde y)-\hat F_k(x,\tilde y)}{C_d}\Big|\left(8\sqrt{\frac{\log(k+1)}{k}}+\delta\right)^2\ge \frac{1}{3C_d}|F^{\star}(x,\tilde y)-\hat F_k(x,\tilde y)|^3,
\end{align*}
and hence for any $\tilde y$,
\begin{align*}
|F^{\star}(x,\tilde y)-\hat F_k(x,\tilde y)|^2\lesssim (\delta_x)^{2/3}+\left(8\sqrt{\frac{\log(k+1)}{k}}+\delta\right)^2.
\end{align*}

Taking $\tilde y$ equal to $Y_{\hat l(x)}(x)$ and $Y_{\hat l(x)+\lceil\alpha k\rceil}(x)$, where recall the definition of $\hat l(x)$ in \eqref{eq:pb-left-def} we have
\begin{align*}
&\quad(\delta_x)^{2/3}+\left(8\sqrt{\frac{\log(k+1)}{k}}+\delta\right)^2\\&\gtrsim |\hat F_k(x,Y_{\hat l(x)}(x))-F^{\star}(x,Y_{\hat l(x)}(x))|^2\\&~~~~~~~~~~~~+|\hat F_k(x,Y_{\hat l(x)+\lceil\alpha k\rceil}(x))-F^{\star}(x,Y_{\hat l(x)+\lceil\alpha k\rceil}(x))|^2\\&\ge \frac{1}{4}\Big|\hat F_k(x,Y_{\hat l(x)}(x))-F^{\star}(x,Y_{\hat l(x)}(x))-\hat F_k(x,Y_{\hat l(x)+\lceil\alpha k\rceil}(x))+F^{\star}(x,Y_{\hat l(x)+\lceil\alpha k\rceil}(x))\Big|^2
\end{align*}

Note that by Condition \ref{cond:prediction-band}, the probability of two equal $Y_{l,x}$ is zero. Without loss of generality, assume $Y_{1,x}<\cdots<Y_{k,x}$, we have that $\hat F_k(x,Y_{\hat l(x)}(x))=\frac{\hat l(x)}{k}$ and $\hat F_k(x,Y_{\hat l(x)+\lceil\alpha k\rceil}(x))=\frac{\hat l(x)+\lceil\alpha k\rceil}{k}$, and hence
\begin{align}
\label{equ:prediction-band-intermediate}
\Big|-F^{\star}(x,Y_{\hat l(x)}(x))+F^{\star}(x,Y_{\hat l(x)+\lceil \alpha k\rceil}(x))-\frac{\lceil \alpha k\rceil}{k}\Big|^2\lesssim (\delta_x)^{2/3}+\left(8\sqrt{\frac{\log(k+1)}{k}}+\delta\right)^2
\end{align}

Recall the definition in \eqref{eq:pb-est} and \eqref{equ:prediction-band-intermediate}. Take $\delta=2\sqrt{\frac{\log k}{k}}$, there exists some constant $C$, such that with probability at least $1-\frac{1}{k^2}$, we have
\begin{align}
\label{equ:prediction-band-intermediate-2}
\Big|\PP(Y\in \hat{\mathrm{PI}}_{\alpha}(x)|x,\hat g)-\alpha\Big|^2&=\Big|-F^{\star}(x,Y_{\hat l(x)}(x))+F^{\star}(x,Y_{\hat l(x)+\lceil \alpha k\rceil}(x))-\alpha\Big|^2\nonumber\\&\le C(\delta_x)^{2/3}+C\left(8\sqrt{\frac{\log(k+1)}{k}}+\delta\right)^2.
\end{align}

Denote \begin{align*}\mathcal{B}(x)=\Big\{\Big|\PP(Y\in \hat{\mathrm{PI}}_{\alpha}(x)|x,\hat g)-\alpha\Big|^2\le C(\delta_x)^{2/3}+C\left(8\sqrt{\frac{\log(k+1)}{k}}+2\sqrt{\frac{\log k}{k}}\right)^2\Big\},\end{align*}
we have $\PP(\mathcal{B}(x))\ge 1-\frac{1}{k^2}$. Taking expectation on two sides of \eqref{equ:prediction-band-intermediate-2} yields 
\begin{align*}
\EE\Big[\big|\PP(Y\in \hat{\mathrm{PI}}_{\alpha}(x)|x,\hat g)-\alpha\big|^2\Big|\hat g\Big]&\le \EE\Big[|\PP(Y\in \hat{\mathrm{PI}}_{\alpha}(x)|x,\hat g)-\alpha|^2\mathds{1}_{\mathcal{B}}\Big]+\EE\Big[|\PP(Y\in \hat{\mathrm{PI}}_{\alpha}(x)|x,\hat g)-\alpha|^2\mathds{1}_{\mathcal{B}^c}\Big]\\&\lesssim (\delta_x)^{2/3}+\frac{\log(k+1)}{k}+\frac{1}{k^2}
\end{align*}
Further, taking expectation with respect to the randomness of $X$, we have
\begin{align*}
\EE\Big[\big|\PP(Y\in \hat{\mathrm{PI}}_{\alpha}(X)|X,\hat g)-\alpha\big|^2\Big|\hat g\Big]&\lesssim (\EE\delta_X)^{2/3}+\frac{\log(k+1)}{k}+\frac{1}{k^2}\\&\lesssim  (\delta_{\mathtt{NDRE}})^{2/3}+\frac{\log k}{k}.
\end{align*}
Hence, we finish the proof of \eqref{eq:prop-prediction-band-1}.
The proof of \eqref{eq:prop-prediction-band-2} is similar to Proposition \ref{prop:conditional-moments}.
Next, we aim to prove \eqref{eq:prop-prediction-band-3}.
\noindent\textbf{Proof of \eqref{eq:prop-prediction-band-3}}
For any fixed $x$, define \[(l^*_x,u^*_x)=\arg\inf_{(l,u):\PP(Y\in [l,u]|X=x)\ge \alpha}(u-l),\]where we suppress the dependence on $\alpha$ for simplicity of notation.

Due to Condition \ref{cond:prediction-band}, 
we have with probability at least $1-\frac{1}{k^2}$, the event $\mathcal{A}(x)$ defined in \eqref{eq:event-A-cond-quantile} holds.
Define $\hat l_x=\hat Q(x,F^*(x,l^*_x))$ and $\hat u_x=\hat Q(x,F^*(x,u^*_x))$. Notice that $F^*(x,u^*_x)-F^*(x,l^*_x)\ge \alpha$. By the definition of $\hat{\mathrm{PI}}_{\alpha}(x)$ in \eqref{eq:pb-est}, it holds that $\hat u_x-\hat l_x\ge |\hat{\mathrm{PI}}_{\alpha}(x)|$. Hence 
\begin{align*}
&|\hat{\mathrm{PI}}_{\alpha}(x)|-\inf_{(l,u):\PP(Y\in [l,u]|X=x)\ge \alpha}(u-l)\\&\le \hat u_x-\hat l_x- u^*_x-l^*_x\\&\le |\hat u_x-u^*_x|+\hat l_x-l^*_x|\\&\le |\hat Q(x,F^*(x,l^*_x))-Q^*(x,F^*(x,l^*_x))|+|\hat Q(x,F^*(x,u^*_x))-Q^*(x,F^*(x,u^*_x))|.
\end{align*}

For some constant $C$ we have under the event $\mathcal{A}(x)$ that
\begin{align*}
&\sup_{0<\alpha<1}\left[|\hat{\mathrm{PI}}_{\alpha}(X)|-\inf_{(l,u):\PP(Y\in [l,u]|X=x)\ge \alpha}(u-l)\right]\\&\le 2\sup_{0<\alpha<1}|\hat Q(x,\alpha)-Q^*(x,\alpha)|\\&\le C(\delta_x)^{1/3}+C\sqrt{\frac{\log (k+1)}{k}} 
\end{align*}
Also note that $|\hat{\mathrm{PI}}_{\alpha}(x)|\le 2c_0$ for any $x$. It follows that
\begin{align*}
&\EE\sup_{0< \alpha<1}\left[|\hat{\mathrm{PI}}_{\alpha}(x)|-\inf_{(l,u):\PP(Y\in [l,u]|X=x)\ge \alpha}(u-l)\right]\\&\le C(\delta_x)^{1/3}+C\sqrt{\frac{\log (k+1)}{k}}+\frac{2c_0}{k^2}.
\end{align*}
Taking expectation with respect to the randomness of $X$ yields
\begin{align*}
\EE\sup_{0< \alpha<1}\left[|\hat{\mathrm{PI}}_{\alpha}(X)|-\inf_{(l,u):\PP(Y\in [l,u]|X)\ge \alpha}(u-l)\right]\lesssim (\delta_{\mathtt{NDRE}})^{1/3}+\sqrt{\frac{\log k}{k}}.
\end{align*}

\end{proof}

\subsection{Proof of Proposition \ref{prop:conditional-density-estimation}}
\label{sec:proof-conditional-density-estimation}
\begin{proof}
Fix an $x$, we have the following decomposition of the kernel estimator,
\begin{align*}
\hat p(y|x)-p^{\star}(y|x)=\hat p(y|x)-\EE[\hat p(Y|x)\big|
\hat g]+\EE[\hat p(Y|x)\big|
\hat g]-p^{\star}(y|x).
\end{align*}
For the bias term, we have 
\begin{align*}
\left|\EE[\hat p(Y|x)\big|
\hat g]-p^{\star}(y|x)\right|&=\left|\frac{1}{h}\int K\left(\frac{t-y}{h}\right)dF_{\hat g}(x,t)-p^{\star}(y|x)\right|\\
&=\left|\frac{1}{h}\int K\left(\frac{t-y}{h}\right)d(F_{\hat g}(x,t)-F^{\star}(x,t))+ \int K\left(s\right)(p^{\star}(x,y+hs)-p^{\star}(x,y))ds\right|
\\&\le\left|\frac{1}{h^2}\int K'\left(\frac{t-y}{h}\right)(F_{\hat g}(x,t)-F^{\star}(x,t))dt\right|\\&~~~~~~~~~~~~+ \left|\int K\left(s\right)\frac{(hs)^{m-1}}{(m-1)!} \frac{\partial^{m}F^{\star}(x,\tilde y)}{\partial \tilde y^{m}}\Big|_{\tilde y=\xi(y,s)}ds\right|
\\&\le \frac{1}{h^2}\sqrt{\int K'^2\left(\frac{t-y}{h}\right)dt\int (F_{\hat g}(x,t)-F^{\star}(x,t))^2dt}+C_x h^m\\&\le \frac{C_K}{h^{3/2}}\sqrt{\int (F_{\hat g}(x,t)-F^{\star}(x,t))^2dt}+C_x h^{m}
\end{align*}
where in the third equality, we use integration by parts for the first term, Taylor expansion, and $\int K(s)s^ids=0$ for $i\le m-1$ for the second term. And we denote $C_K=\sqrt{\int K'^2(t)dt}$, and $C_x=\frac{1}{h}\int K\left(s\right)\frac{s^{m-1}}{(m-1)!} \frac{\partial^{m}F^{\star}(x,\tilde y)}{\partial \tilde y^{m}}\Big|_{\tilde y=\xi(y,s)}ds$.
Then by the Lipschitzness of $\partial^{m}F^{\star}(x,y)/\partial y^{m}$.
\begin{align*}
C_x&=\frac{1}{h}\int K\left(s\right)\frac{s^{m-1}}{(m-1)!}\left(\frac{\partial^{m}F^{\star}(x,\tilde y)}{\partial \tilde y^{m}}\Big|_{\tilde y=\xi(y,s)}-\frac{\partial^{m}F^{\star}(x,\tilde y)}{\partial \tilde y^{m}}\Big|_{\tilde y=y}\right)ds\\&\le \frac{1}{h}\int K\left(s\right)\frac{s^{m-1}}{(m-1)!} l_f (hs)ds =\int K\left(s\right)\frac{s^{m}}{(m-1)!} l_f ds 
\end{align*}
Hence $C_x$ is bounded by a constant $C_f$.

For the variance term, it holds that
\begin{align*}
\mathsf{Var}[\hat p(Y|x)\big|
\hat g]&\le \frac{1}{kh^2}\int K^2\left(\frac{t-y}{h}\right)dF_{\hat g}(x,t)\\
&=\frac{1}{kh^2}\int K^2\left(\frac{t-y}{h}\right)dF^{\star}(x,t)+\frac{1}{kh^2}\int \frac{2}{h}K\left(\frac{t-y}{h}\right)K'\left(\frac{t-y}{h}\right)(F_{\hat g}(x,t)-F^{\star}(x,t))dt\\
&\lesssim \frac{1}{kh}+\frac{1}{kh^2}\sqrt{\frac{\int (F_{\hat g}(x,t)-F^{\star}(x,t))^2dt}{h}}
\end{align*}
where we use the integral by parts in the equality and the Cauchy-Schwarz Inequality in the second inequality.
Hence we have
\begin{align*}
\EE\Big[\int(\hat p(y|X)-p^{\star}(y|X))^2dy\Big|\hat g\Big]&\le 2\EE\Big[\int(\hat p(y|X)-\EE[\hat p(Y|X)\big|
\hat g,X])^2dy\Big|\hat g\Big]\\&~~~~~~~~~~~~+2\EE\Big[\int(\EE[\hat p(Y|X)\big|
\hat g,X]-p^{\star}(y|X))^2dy\Big|\hat g\Big]\\&\lesssim \frac{1}{h^3}\EE\Big[\int (F_{\hat g}(X,t)-F^{\star}(X,t))^2dt\Big|\hat g\Big]+h^{2m}+\EE\left[\mathsf{Var}[\hat p(Y|X)\big|
\hat g,X]\Big| \hat g\right]\\&\le \frac{1}{h^3}\delta_{\mathtt{NDRE}}+h^{2m}+\frac{1}{kh}+\frac{\sqrt{\delta_{\mathtt{NDRE}}}}{kh^{5/2}}
\end{align*}
Take $h\asymp \max\{k^{-\frac{1}{2m+1}},(\delta_{\mathtt{NDRE}})^{\frac{1}{2m+3}}\}$, we have that $\frac{\sqrt{\delta_{\mathtt{NDRE}}}}{kh^{5/2}}\le \frac{1}{kh}$ and 
\begin{align*}
\EE\Big[\int(\hat p(y|X)-p^{\star}(y|X))^2dy\Big|\hat g\Big]\lesssim  \max\{k^{-\frac{1}{2m+1}},(\delta_{\mathtt{NDRE}})^{\frac{1}{2m+3}}\}^{2m}
\end{align*}
which completes the proof of the first part. The proof of the second part is similar to Proposition \ref{prop:conditional-moments}.
\end{proof}

\subsection{Proof of Proposition \ref{prop:conditional-score-estimation}}
\label{sec:proof-conditional-score-estimation}
\begin{proof}
Recall that \begin{align*}
\hat s(y|x)= \left\{\hat q(y|x)\right\} \Big/ \max\left\{\hat p(y|x),c_d\right\}
\end{align*}
where $\hat q(y|x)=\frac{1}{kh^2}\sum_{l=1}^k \tilde K'\left(\frac{y - \hat g(x,U_l)}{h}\right)$ is an estimator for $q^{\star}(y|x)\frac{d}{dy}[p^{\star}(y|x)]$, and $\hat p(y|x)=\frac{1}{kh}\sum_{l=1}^k K\left(\frac{y - \hat g(x,U_l)}{h}\right)$ is an estimator for $p^{\star}(y|x)$. 
As stated, $\tilde K$ is a $(m-1)$-order kernel satisfying
\begin{align*}
\tilde K(0)=1, \qquad  \int s^j \tilde K(s)ds=0 ~~ \forall j\in [m-2], \qquad \text{and} ~~ \int |s^{m-1} \tilde K(s)|ds< \infty. 
\end{align*}

By Condition \ref{cond:conditional-quantile}, $p^{\star}(y|x)\ge c_d$, and hence
\begin{align}
(\hat s(y|x)-s^{\star}(y|x))^2&=\frac{(\hat q p^{\star}-\max\{\hat p,c_d\} q^{\star})^2}{\max\{\hat p,c_d\}^2p^{*2}}\nonumber\\
&\le \frac{2(\hat q-q^{\star})^2}{\max\{\hat p,c_d\}^2}+\frac{2q^{*2}(p^{\star}-\hat p)^2}{\max\{\hat p,c_d\}^2p^{*2}}\nonumber\\
&\le \frac{2(\hat q-q^{\star})^2}{c_d^2}+\frac{2C_d(p^{\star}-\hat p)^2}{c_d^4}\label{equ:s-to-p-q}
\end{align}
where we abbreviate $\hat p=\hat p(y|x)$ and similarly for $\hat q,p^{\star},q^{\star}$.

It then boils down to bound $\EE\Big[\int(\hat q(y|X)-q^{\star}(y|X))^2dy\Big|\hat g\Big]$ and $\EE\Big[\int(\hat p(y|X)-p^{\star}(y|X))^2dy\Big|\hat g\Big]$.
The latter is bounded in Proposition \ref{prop:conditional-density-estimation} by $C\max\{k^{-\frac{1}{2m+1}},(\delta_{\mathtt{NDRE}})^{\frac{1}{2m+3}}\}^{2m}$, it suffices to bound $\EE\Big[\int(\hat q(y|X)-q^{\star}(y|X))^2dy\Big|\hat g\Big]$.

Fix a $x$, we have the following decomposition of the kernel estimator,
\begin{align*}
\hat q(y|x)-q^{\star}(y|x)=\hat q(y|x)-\EE[\hat q(Y|x)\big|
\hat g]+\EE[\hat q(Y|x)\big|
\hat g]-q^{\star}(y|x).
\end{align*}
For the bias term, we have 
\begin{align*}
\EE[\hat q(Y|x)\big|
\hat g]-q^{\star}(y|x)&=\frac{1}{h^2}\int \tilde K\left(\frac{t-y}{h}\right)dF_{\hat g}(x,t)-q^{\star}(y|x)\\
&=\frac{1}{h^2}\int \tilde K\left(\frac{t-y}{h}\right)d(F_{\hat g}(x,t)-F^{\star}(x,t))+ \int \tilde K\left(s\right)(\frac{d}{dy}p^{\star}(x,y+hs)-\frac{d}{dy}p^{\star}(x,y))ds
\\&=\frac{1}{h^3}\int \tilde K'\left(\frac{t-y}{h}\right)(F_{\hat g}(x,t)-F^{\star}(x,t))dt+ \int \tilde K\left(s\right)\frac{(hs)^{m-2}}{(m-2)!} \frac{\partial^{m}F^{\star}(x,\tilde y)}{\partial \tilde y^{m}}\Big|_{\tilde y=\xi(y,s)}ds
\\&\le \frac{1}{h^3}\sqrt{\int \tilde K'^2\left(\frac{t-y}{h}\right)dt\int (F_{\hat g}(x,t)-F^{\star}(x,t))^2dt}+C_x h^{m-1}\\&\le \frac{C_K}{h^{5/2}}\sqrt{\int (F_{\hat g}(x,t)-F^{\star}(x,t))^2dt}+C_x h^{m-1}
\end{align*}
where in the third equality, we use the integration by parts for the first term, Taylor expansion, and $\int \tilde K(s)s^ids=0$ for $i\le m-2$ for the second term. And we denote $C_K=\sqrt{\int \tilde K^2(t)dt}$, and $C_x=\frac{1}{h}\int \tilde K\left(s\right)\frac{s^{m-2}}{(m-2)!} \frac{\partial^{m}F^{\star}(x,\tilde y)}{\partial \tilde y^{m}}\Big|_{\tilde y=\xi(y,s)}ds$.
Then by the Lipschitzness of $\partial^{m}F^{\star}(x,y)/\partial y^{m}$.
\begin{align*}
C_x&=\frac{1}{h}\int \tilde K\left(s\right)\frac{s^{m-2}}{(m-2)!}\left(\frac{\partial^{m}F^{\star}(x,\tilde y)}{\partial \tilde y^{m}}\Big|_{\tilde y=\xi(y,s)}-\frac{\partial^{m}F^{\star}(x,\tilde y)}{\partial \tilde y^{m}}\Big|_{\tilde y=y}\right)ds\\&\le \frac{1}{h}\int \tilde K\left(s\right)\frac{s^{m-2}}{(m-2)!} l_f (hs)ds =\int \tilde K\left(s\right)\frac{s^{m-1}}{(m-2)!} l_f ds .
\end{align*}
Hence $C_x$ is bounded by a constant $C_f$.

For the variance term, it holds that
\begin{align*}
\mathsf{Var}[\hat q(Y|x)\big|
\hat g]&\le \frac{1}{kh^4}\int \tilde K'^2\left(\frac{t-y}{h}\right)dF_{\hat g}(x,t)\\
&=\frac{1}{kh^4}\int \tilde K'^2\left(\frac{t-y}{h}\right)dF^{\star}(x,t)+\frac{1}{kh^4}\int \frac{2}{h}\tilde K'\left(\frac{t-y}{h}\right)\tilde K''\left(\frac{t-y}{h}\right)(F_{\hat g}(x,t)-F^{\star}(x,t))dt\\
&\lesssim \frac{1}{kh^3}+\frac{1}{kh^4}\sqrt{\frac{\int (F_{\hat g}(x,t)-F^{\star}(x,t))^2dt}{h}}
\end{align*}
where we use the integral by parts in the equality and the Cauchy-Schwarz Inequality in the second inequality.
Hence we have
\begin{align*}
\EE\Big[\int(\hat q(y|X)-q^{\star}(y|X))^2dy\Big|\hat g\Big]&\le 2\EE\Big[\int(\hat q(y|X)-\EE[\hat q(Y|X)\big|
\hat g,X])^2dy\Big|\hat g\Big]\\&~~~~~~~~~~~~+2\EE\Big[\int(\EE[\hat q(Y|X)\big|
\hat g,X]-q^{\star}(y|X))^2dy\Big|\hat g\Big]\\&\lesssim \frac{1}{h^5}\EE\Big[\int (F_{\hat g}(X,t)-F^{\star}(X,t))^2dt\Big|\hat g\Big]+h^{2(m-1)}+\EE\left[\mathsf{Var}[\hat q(Y|X)\big|
\hat g,X]\Big| \hat g\right]\\&\le \frac{1}{h^5}\delta_{\mathtt{NDRE}}+h^{2(m-1)}+\frac{1}{kh^3}+\frac{\sqrt{\delta_{\mathtt{NDRE}}}}{kh^{9/2}}.
\end{align*}
Take $h\asymp \max\{k^{-\frac{1}{2m+1}},(\delta_{\mathtt{NDRE}})^{\frac{1}{2m+3}}\}$, we have that $\frac{\sqrt{\delta_{\mathtt{NDRE}}}}{kh^{9/2}}\le \frac{1}{kh^3}$ and 
\begin{align*}
\EE\Big[\int(\hat q(y|X)-q^{\star}(y|X))^2dy\Big|\hat g\Big]\lesssim h^{2(m-1)}\lesssim  \max\{k^{-\frac{1}{2m+1}},(\delta_{\mathtt{NDRE}})^{\frac{1}{2m+3}}\}^{2(m-1)}.
\end{align*}

Hence from \eqref{equ:s-to-p-q} and Proposition \ref{prop:conditional-density-estimation}, we have 
\begin{align*}
\EE\Big[\int(\hat s(y|X)-s^{\star}(y|X))^2dy\Big|\hat g\Big]\lesssim \max\{k^{-\frac{1}{2m+1}},(\delta_{\mathtt{NDRE}})^{\frac{1}{2m+3}}\}^{2(m-1)}
\end{align*}
which completes the proof of the first part. The proof of the second part is similar to Proposition \ref{prop:conditional-moments}.

\end{proof}

\section{General Kernel function}
\label{sec:general-kernel}
\subsection{Distributional Regression Using Kernel MMD Loss}
\label{sec:method-mmd}

Let $\mathcal{D}=\{(X_i, Y_i)\}_{i=1}^n$ be an i.i.d. sample from $\mu_0$. 
Let $(\mathcal{Y},\mathcal{B})$ be a measurable space and let 
$k:\mathcal{Y}\times \mathcal{Y}\to \mathbb{R}$ be a measurable, symmetric, positive definite kernel with associated reproducing kernel Hilbert space $\mathcal{H}_k$. 
We assume $k$ is bounded, i.e., $\sup_{y\in\mathcal{Y}} k(y,y) \leq \kappa^2 < \infty$.

For a fixed covariate value $x$, denote by $P_{Y|X=x}$ the conditional distribution of $Y$ given $X=x$. 
Given a stochastic generator $g:\mathcal{X}\times \mathcal{U}\to \mathcal{Y}$ and a noise variable $U\sim \mu_u$, define the induced conditional distribution
\[
P_{g,x} := \mathcal{L}\big(g(x,U)\big).
\]

We measure the discrepancy between $P_{g,x}$ and $P_{Y|X=x}$ using the squared maximum mean discrepancy, defined as
\begin{align}
\mathrm{MMD}_k^2(P,Q)
:=
\mathbb{E}_{Z,Z'\sim P}[k(Z,Z')]
+
\mathbb{E}_{W,W'\sim Q}[k(W,W')]
-
2\mathbb{E}_{Z\sim P,\,W\sim Q}[k(Z,W)].
\end{align}

The $\mu_u$-population-level objective is then given by
\begin{align}
\label{eq:mmd-loss-pop}
\mathsf{R}_n^{(k)}(g)
:=
\frac{1}{n}\sum_{i=1}^n 
\mathrm{MMD}_k^2\big(P_{g,X_i},\, P_{Y|X=X_i}\big).
\end{align}

Expanding the definition and noting that $Y_i$ is a draw from $P_{Y|X=X_i}$, we obtain
\begin{align}
\label{eq:mmd-loss-pop-expanded}
\mathsf{R}_n^{(k)}(g)
=
\frac{1}{n}\sum_{i=1}^n 
\Big\{
\mathbb{E}_{U,U'}\big[k(g(X_i,U), g(X_i,U'))\big]
-
2\mathbb{E}_{U}\big[k(g(X_i,U), Y_i)\big]
\Big\}
+ C_n,
\end{align}
where $C_n := \frac{1}{n}\sum_{i=1}^n k(Y_i,Y_i')$ is independent of $g$ and can be ignored in optimization.

Thus, minimizing $\mathsf{R}_n^{(k)}(g)$ corresponds to matching the conditional distribution of $g(X,U)$ to that of $Y|X$ in the RKHS induced by $k$. We term this approach \emph{kernel distributional regression}.

\vspace{0.5em}
\noindent\textbf{Empirical Objective.}
As in \eqref{eq:loss-n}, the expectations in \eqref{eq:mmd-loss-pop-expanded} are generally intractable for complex function classes such as neural networks. 
We approximate them via Monte Carlo sampling. 

Let $m=(m_1,m_2)$ and let $\mathcal{U}=\{U_{b,i,k}\}$ be i.i.d. samples from $\mu_u$, independent of $\mathcal{D}$. 
We define the empirical objective
\begin{align}
\label{eq:mmd-loss-emp}
\begin{split}
\mathsf{R}_{n,m}^{(k)}(g)
=
\frac{1}{m_1}\sum_{b=1}^{m_1}\frac{1}{n}\sum_{i=1}^n
\Bigg(
&
\frac{1}{m_2(m_2-1)}\sum_{k\neq k'}
k\big(g(X_i,U_{b,i,k}),\, g(X_i,U_{b,i,k'})\big)
\\
&\quad
-
\frac{2}{m_2}\sum_{k=1}^{m_2}
k\big(g(X_i,U_{b,i,k}),\, Y_i\big)
\Bigg).
\end{split}
\end{align}

We minimize $\mathsf{R}_{n,m}^{(k)}(g)$ over the stochastic neural network class 
$\mathcal{H}_{\mathtt{nn}}(d,L,N,\tilde d,M,B)$.

\vspace{0.5em}
\noindent\textbf{Computational Considerations.}
As in the energy loss formulation, the quadratic complexity in $m_2$ suggests choosing a small $m_2\ge 2$ and a larger $m_1$. 
The computations across $b\in[m_1]$ can be parallelized, while those across $k\in[m_2]$ cannot. 
The statistical accuracy depends on the total number of auxiliary samples $m_1 m_2$, making this decomposition computationally advantageous.

\vspace{0.5em}
\noindent\textbf{Estimator.}
Our estimator $\hat g$ approximately minimizes \eqref{eq:mmd-loss-emp} in the sense that
\begin{align}
\label{eq:mmd-estimator}
\mathsf{R}_{n,m}^{(k)}(\hat g)
\le 
\inf_{g\in \mathcal{H}_{\mathtt{nn}}(d,L,N,\tilde d,M,B)}
\mathsf{R}_{n,m}^{(k)}(g)
+ \delta_{\mathtt{opt}},
\end{align}
where $\delta_{\mathtt{opt}}$ denotes the optimization error.

\vspace{0.5em}
\noindent\textbf{Relation to Energy Loss.}
When $\mathcal{Y}\subseteq \mathbb{R}_+$ and $k(x,y)=\min(x,y)$, the MMD loss \eqref{eq:mmd-loss-pop} is equivalent to the energy loss in \eqref{eq:loss0}, and admits the representation
\[
\mathrm{MMD}_k^2(P,Q)
=
\int (F_P(t)-F_Q(t))^2\,dt,
\]
where $F_P$ and $F_Q$ are cumulative distribution functions. 
Thus, the energy loss can be viewed as a special case of kernel MMD regression corresponding to a first-order Sobolev kernel.


\subsection{Excess Risk Bound for General Kernel MMD Loss}
\label{sec:proof-general-kernel}

Let $k:\mathbb{R}\times \mathbb{R}\to \mathbb{R}$ be a measurable, symmetric, positive definite kernel with RKHS $\mathcal{H}_k$ and canonical feature map $\phi(y)=k(y,\cdot)\in \mathcal{H}_k$. Throughout this section, we impose the following condition.

\begin{condition}
\label{cond:kernel-bounded-lipschitz}
There exist constants $\kappa,L_k>0$ such that
\begin{align*}
    \sup_{y\in[-M,M]} k(y,y)\le \kappa^2,
    \qquad 
    \|\phi(y)-\phi(y')\|^2_{\mathcal H_k}\le L_k |y-y'|,
    \qquad \forall y,y'\in[-M,M].
\end{align*}
Moreover, $|Y|\le M$ almost surely and every $g\in\mathcal{H}_{\mathtt{nn}}(d+1,N,L,B,M)$ satisfies $|g(x,u)|\le M$ for all $(x,u)\in[0,1]^d\times[0,1]$.
\end{condition}

\begin{remark}
It is straightforward to verify that the first-order Sobolev kernel, linear kernel, and Gaussian kernel all satisfy this condition.
\end{remark}

For any $g\in \mathcal{H}_{\mathtt{nn}}(d+1,N,L,B,M)$, define its conditional kernel mean embedding
\begin{align}
\label{eq:def-mu-g}
    \mu_g(x):=\mathbb E_U[\phi(g(x,U))]\in\mathcal H_k,
\end{align}
and define the target conditional embedding
\begin{align}
\label{eq:def-mu-star}
    \mu^\star(x):=\mathbb E[\phi(Y)\mid X=x]\in\mathcal H_k.
\end{align}
We also write
\begin{align*}
    \| \mu_g-\mu_{\tilde g}\|_2^2
    :=
    \mathbb E_X\big[\|\mu_g(X)-\mu_{\tilde g}(X)\|_{\mathcal H_k}^2\big].
\end{align*}

Recall the empirical kernel objective
\begin{align}
\label{eq:kernel-loss-n}
\begin{split}
    \mathsf R_{n,m}^{(k)}(g)
    =
    \frac{1}{m_1}\sum_{b=1}^{m_1}\frac{1}{n}\sum_{i=1}^n
    \Bigg(
    \frac{1}{m_2(m_2-1)}\sum_{r\neq s}
    k(g(X_i,U_{b,i,r}),g(X_i,U_{b,i,s}))
    -
    \frac{2}{m_2}\sum_{r=1}^{m_2}k(g(X_i,U_{b,i,r}),Y_i)
    \Bigg).
\end{split}
\end{align}
Its population-level counterpart is
\begin{align}
\label{eq:kernel-loss-pop}
    \mathsf R_{\infty,\infty}^{(k)}(g)
    :=
    \mathbb E\Big[
        \|\mu_g(X)\|_{\mathcal H_k}^2
        -
        2\langle \phi(Y),\mu_g(X)\rangle_{\mathcal H_k}
    \Big].
\end{align}
Similarly, define the exact empirical counterpart
\begin{align}
\label{eq:kernel-loss-emp-exact}
    \mathsf R_{n,\infty}^{(k)}(g)
    :=
    \frac{1}{n}\sum_{i=1}^n
    \Big(
        \|\mu_g(X_i)\|_{\mathcal H_k}^2
        -
        2\langle \phi(Y_i),\mu_g(X_i)\rangle_{\mathcal H_k}
    \Big).
\end{align}

The following identity is the analogue of the energy identity.

\begin{lemma}
\label{lemma:kernel-risk-identity}
For every $g\in\mathcal{H}_{\mathtt{nn}}(d+1,N,L,B,M)$,
\begin{align}
\label{eq:kernel-excess-risk-identity}
    \mathsf R_{\infty,\infty}^{(k)}(g)-\mathsf R_{\infty,\infty}^{(k)}(g^\star)
    =
    \|\mu_g-\mu^\star\|_2^2,
\end{align}
where $g^\star$ denotes any measurable minimizer of $\mathsf R_{\infty,\infty}^{(k)}$ over all measurable generators, equivalently any measurable map satisfying $\mu_{g^\star}(x)=\mu^\star(x)$ for $\mu_X$-a.e.\ $x$.
\end{lemma}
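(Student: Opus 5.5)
The plan is to work pointwise in $x$ and then integrate against $\mu_X$, exactly as in the proof of the energy identity in Appendix~\ref{proof:identity}. Condition~\ref{cond:kernel-bounded-lipschitz} gives $\|\phi(y)\|_{\mathcal H_k}\le\kappa$, so all embeddings below are Bochner integrable. This lets us interchange expectation and inner product. For any generator $g$ we have
\[
\mathbb E_{U,U'}[k(g(x,U),g(x,U'))]=\|\mu_g(x)\|_{\mathcal H_k}^2 .
\]
This follows from the reproducing property and the independence of $U,U'$. Likewise $\mathbb E_U[k(g(x,U),y)]=\langle \mu_g(x),\phi(y)\rangle_{\mathcal H_k}$. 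This confirms that \eqref{eq:kernel-loss-pop} is the correct population version of \eqref{eq:kernel-loss-n}.

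Next I condition on $X$ in the cross term. Since $\mu_g(X)$ is $X$-measurable and $\mathbb E[\phi(Y)\mid X]=\mu^\star(X)$ (again by Bochner integrability), I get
\[
\mathbb E\big[\langle\phi(Y),\mu_g(X)\rangle_{\mathcal H_k}\big]=\mathbb E\big[\langle\mu^\star(X),\mu_g(X)\rangle_{\mathcal H_k}\big].
\]
Completing the square then yields
\[
\mathsf R^{(k)}_{\infty,\infty}(g)=\mathbb E\big[\|\mu_g(X)-\mu^\star(X)\|_{\mathcal H_k}^2\big]-\mathbb E\big[\|\mu^\star(X)\|_{\mathcal H_k}^2\big]=\|\mu_g-\mu^\star\|_2^2-\mathbb E\|\mu^\star(X)\|^2_{\mathcal H_k}.
\]
The second term is finite, being bounded by $\kappa^2$, and it does not depend on $g$.

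It follows that $\mathsf R^{(k)}_{\infty,\infty}\ge -\mathbb E\|\mu^\star(X)\|^2$. Equality holds exactly when $\mu_g=\mu^\star$ holds $\mu_X$-a.e. A minimizer attaining this exists because $\mu^\star(x)$ is the embedding of the law of $Y\mid X=x$. Taking $g^\star(x,u)=Q^\star(x,F_0(u))$, the conditional quantile transform already used in the proof of Corollary~\ref{cor:final-rate}, gives $g^\star(x,U)\overset{d}{=}Y\mid X=x$. Hence $\mu_{g^\star}=\mu^\star$, and this shows that the two descriptions of $g^\star$ in the statement are equivalent. Subtracting the expression for $g^\star$ from the one for $g$ gives \eqref{eq:kernel-excess-risk-identity}.

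The only delicate step is measurability and integrability. We need $x\mapsto\mu_g(x)$ to be a measurable $\mathcal H_k$-valued map, and we need the conditional-expectation identity for Hilbert-valued variables. I would handle these with the boundedness $\kappa$ together with separability of $\mathcal H_k$. Separability follows from continuity of $\phi$, which in turn follows from the Lipschitz-type bound in Condition~\ref{cond:kernel-bounded-lipschitz}. Given this, Pettis' theorem applies and conditional expectation commutes with bounded linear functionals.
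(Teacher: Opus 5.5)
Your proof is correct and follows essentially the same route as the paper. Both arguments rest on the tower-property identity $\mathbb E\langle\phi(Y),\mu_g(X)\rangle_{\mathcal H_k}=\mathbb E\langle\mu^\star(X),\mu_g(X)\rangle_{\mathcal H_k}$ and a completion of the square: the paper routes it through $\mathbb E\|\phi(Y)-\mu_g(X)\|_{\mathcal H_k}^2$ and an orthogonality relation, while you complete the square around $\mu^\star(X)$ directly.

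Your existence argument for $g^\star$ and your measurability remarks go beyond what the paper writes. Note, though, that $F_0(U)\sim\mathcal U[0,1]$ requires the noise law to be atomless (e.g.\ $F_0$ continuous, as assumed in Corollary~\ref{cor:final-rate}), and this appendix section does not state that assumption explicitly.
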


\begin{proof}
Using \eqref{eq:def-mu-star}, for every measurable $m:[0,1]^d\to \mathcal H_k$ with $\mathbb E\|m(X)\|_{\mathcal H_k}^2<\infty$,
\begin{align*}
    \mathbb E\big[\|\phi(Y)-m(X)\|_{\mathcal H_k}^2\big]
    &=
    \mathbb E\big[\|\phi(Y)\|_{\mathcal H_k}^2\big]
    +
    \mathbb E\big[\|m(X)\|_{\mathcal H_k}^2\big]
    -
    2\mathbb E\big[\langle \phi(Y),m(X)\rangle_{\mathcal H_k}\big] \\
    &=
    \mathbb E\big[\|\phi(Y)\|_{\mathcal H_k}^2\big]
    +
    \mathbb E\big[\|m(X)\|_{\mathcal H_k}^2\big]
    -
    2\mathbb E\big[\langle \mu^\star(X),m(X)\rangle_{\mathcal H_k}\big].
\end{align*}
Taking $m=\mu_g$ gives
\begin{align*}
    \mathsf R_{\infty,\infty}^{(k)}(g)
    =
    \mathbb E\big[\|\phi(Y)-\mu_g(X)\|_{\mathcal H_k}^2\big]
    -
    \mathbb E\big[\|\phi(Y)\|_{\mathcal H_k}^2\big].
\end{align*}
Hence
\begin{align*}
    \mathsf R_{\infty,\infty}^{(k)}(g)-\mathsf R_{\infty,\infty}^{(k)}(g^\star)
    &=
    \mathbb E\big[\|\phi(Y)-\mu_g(X)\|_{\mathcal H_k}^2
    -
    \|\phi(Y)-\mu^\star(X)\|_{\mathcal H_k}^2\big] \\
    &=
    \mathbb E\big[\|\mu_g(X)-\mu^\star(X)\|_{\mathcal H_k}^2\big],
\end{align*}
where the last equality follows from the orthogonality relation
\[
\mathbb E\big[\langle \phi(Y)-\mu^\star(X),\, \mu_g(X)-\mu^\star(X)\rangle_{\mathcal H_k}\big]=0.
\]
This proves \eqref{eq:kernel-excess-risk-identity}.
\end{proof}

For any $g,\tilde g\in\mathcal H_{\mathtt{nn}}(d+1,N,L,B,M)$, define
\begin{align*}
    \Delta_n^{(k)}(g,\tilde g)
    &=
    \mathsf R_{n,\infty}^{(k)}(g)-\mathsf R_{n,\infty}^{(k)}(\tilde g)
    -
    \Big(\mathsf R_{\infty,\infty}^{(k)}(g)-\mathsf R_{\infty,\infty}^{(k)}(\tilde g)\Big), \\
    \chi_n^{(k)}(g)
    &=
    \mathsf R_{n,m}^{(k)}(g)-\mathsf R_{n,\infty}^{(k)}(g).
\end{align*}
As before, these two quantities separate the randomness from the original observations and the Monte Carlo noise.

We first record the two key propositions. Their proofs follow the same localization and chaining arguments as in Proposition~\ref{prop:instant-dependent-error-bound} and Proposition~\ref{prop:chainig-bernstein}, after replacing the scalar CDF discrepancy by the Hilbert-space discrepancy $\mu_g-\mu_{\tilde g}$. We therefore only use them in the proof of the theorem below.

\begin{proposition}
\label{prop:kernel-instance-dependent}
Under Condition~\ref{cond:bounded} and Condition~\ref{cond:kernel-bounded-lipschitz}, the following event
\begin{align*}
    |\Delta_n^{(k)}(g,\tilde g)|
    \le
    C\Big(
        \kappa^2 \delta_{n,t}^2
        +
        \kappa \delta_{n,t}\|\mu_g-\mu_{\tilde g}\|_2
    \Big),
    \qquad \forall g,\tilde g,
\end{align*}
occurs with probability at least $1-2e^{-t}$, where
\begin{align*}
    \delta_{n,t}:= NL\sqrt{\frac{\log n}{n}}+\sqrt{\frac{t}{n}}.
\end{align*}
\end{proposition}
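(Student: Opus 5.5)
Following the proof of Proposition~\ref{prop:instant-dependent-error-bound}, I would first write $\Delta_n^{(k)}$ as a centered empirical process and then split it into a ``quadratic'' and a ``cross'' part. Using the identity from the proof of Lemma~\ref{lemma:kernel-risk-identity}, $\|\mu_g(x)\|^2-2\langle\phi(y),\mu_g(x)\rangle=\|\phi(y)-\mu_g(x)\|^2-\|\phi(y)\|^2$. Hence
\begin{align*}
\Delta_n^{(k)}(g,\tilde g)=(\mathbb P_n-\mathbb P)\|\mu_g(X)-\mu_{\tilde g}(X)\|_{\mathcal H_k}^2+2(\mathbb P_n-\mathbb P)\langle \mu_g(X)-\mu_{\tilde g}(X),\mu_{\tilde g}(X)-\phi(Y)\rangle_{\mathcal H_k}=:\mathsf T_1^{(k)}+\mathsf T_2^{(k)}.
\end{align*}
The summands are $v_{g,\tilde g}(x)=\|\mu_g(x)-\mu_{\tilde g}(x)\|^2$ and $u_{g,\tilde g}(x,y)=\langle\mu_g(x)-\mu_{\tilde g}(x),\mu_{\tilde g}(x)-\phi(y)\rangle$. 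Both are bounded by multiples of $\kappa^2$, since $\|\mu_g(x)\|\le\kappa$ and $\|\phi(y)\|\le\kappa$. Their second moments are controlled as well. For $v$, $\mathbb E v^2\le 4\kappa^2\|\mu_g-\mu_{\tilde g}\|_2^2$. For $u$, Cauchy--Schwarz in $\mathcal H_k$ gives $\mathbb E u^2\le 4\kappa^2\|\mu_g-\mu_{\tilde g}\|_2^2$. These are exactly the analogues of the bounds $\|v\|_2^2\le 2\|F_g-F_{\tilde g}\|_2^2$ and $\|u\|_2^2\le 2c_0\|F_g-F_{\tilde g}\|_2^2$ used for $\mathsf T_1,\mathsf T_2$.

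The next step is the covering-number transfer, which plays the role of Lemmas~\ref{lemma:log-cover-number-conversion} and~\ref{lemma:log-cover-number-conversion-2}. If $\|g-\pi(g)\|_\infty\le\epsilon^\star$, then by Jensen and Condition~\ref{cond:kernel-bounded-lipschitz},
\[
\|\mu_g(x)-\mu_{\pi(g)}(x)\|\le \mathbb E_U\|\phi(g(x,U))-\phi(\pi(g)(x,U))\|\le \sqrt{L_k\epsilon^\star}.
\]
This replaces Lemma~\ref{lemma:g-to-G}. Expanding $v_{g,\tilde g}-v_{\pi(g),\pi(\tilde g)}$ and $u_{g,\tilde g}-u_{\pi(g),\pi(\tilde g)}$ exactly as in the proofs of those lemmas then bounds both differences by $C\kappa\sqrt{L_k\epsilon^\star}+CL_k\epsilon^\star$ uniformly in $(x,y)$. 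With Lemma~\ref{lemma:covering-number-G} and $\log B,\log(NL)\lesssim\log n$, this gives
\[
\log\mathcal N(\epsilon,\mathcal V^{(k)},\|\cdot\|_\infty)\vee\log\mathcal N(\epsilon,\mathcal U^{(k)},\|\cdot\|_\infty)\lesssim N^2L^2\log(n/\epsilon),
\]
with constants depending on $\kappa$ and $L_k$.

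With these ingredients I would rerun Claim~\ref{claim:standard-localization} verbatim on each class, after rescaling the functions by $\kappa^2$ so they are bounded by a constant. The steps are the same as before:
\begin{enumerate}
\item Theorem~19.3 of \cite{gyorfi2002distribution} to compare $\|\cdot\|_n$ and $\|\cdot\|_2$ uniformly.
\item Symmetrization and Dudley chaining on the local ball, giving a rate $(r+t)\delta_n$.
\item Talagrand's inequality for concentration around the mean.
\item Peeling over dyadic shells in $\|\cdot\|_2$.
\end{enumerate}
Undoing the rescaling yields, with probability $1-e^{-t}$ for each class, $|(\mathbb P_n-\mathbb P)w|\lesssim \kappa^2\delta_{n,t}^2+\kappa\delta_{n,t}\|w\|_2/\kappa\cdot\kappa$. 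Inserting the second-moment bounds $\|w\|_2\lesssim\kappa\|\mu_g-\mu_{\tilde g}\|_2$ then gives the stated form. A union bound over the two classes gives probability $1-2e^{-t}$.

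I expect the main obstacle to be bookkeeping the $\kappa$-scaling so that the final bound is exactly $\kappa^2\delta^2+\kappa\delta\|\cdot\|$. The localization is naturally stated in terms of the variance proxy $\|w\|_2$ and the sup-bound $b\asymp\kappa^2$. The cleanest route is to apply the argument to $w/\kappa^2$, for which $\|w/\kappa^2\|_2\lesssim\|\mu_g-\mu_{\tilde g}\|_2/\kappa$, and then multiply back by $\kappa^2$. A secondary point is that the covering of $\mathcal V^{(k)}$ degrades to a square-root modulus in $\epsilon^\star$. This costs only a constant factor inside the logarithm, so the entropy remains $N^2L^2\log n$ up to constants.
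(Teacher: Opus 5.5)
Your proposal is correct and follows the paper's (only sketched) proof: the decomposition $\Delta_n^{(k)}=(\mathbb P_n-\mathbb P)(v_{g,\tilde g}+2u_{g,\tilde g})$, a covering transfer from $\mathcal G$ to $\mathcal V,\mathcal U$, the variance proxy $\mathbb E[v^2]\vee\mathbb E[u^2]\lesssim\kappa^2\|\mu_g-\mu_{\tilde g}\|_2^2$, and a rerun of the energy-case localization, with your bound $\|\mu_g(x)-\mu_{\pi(g)}(x)\|_{\mathcal H_k}\le\sqrt{L_k\epsilon^\star}$ supplying the covering detail the paper omits. One bookkeeping slip: after undoing the rescaling, the intermediate bound should read $\kappa^2\delta_{n,t}^2+\delta_{n,t}\|w\|_2$, as your final paragraph correctly derives; also, your entropy $N^2L^2\log(n/\epsilon)$, rather than the $NL\log(n/\epsilon)$ written in Lemma~\ref{lemma:kernel-covering}, is the one consistent with the stated $\delta_{n,t}$.
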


\begin{proposition}
\label{prop:kernel-chaining-bernstein}
Under Condition~\ref{cond:bounded} and Condition~\ref{cond:kernel-bounded-lipschitz}, the following event
\begin{align*}
    |\chi_n^{(k)}(g)|
    \le
    C\kappa^2\left(
        \frac{\delta_{n,t}}{\sqrt{m_1m_2}}
        +
        \delta_{n,t}^2
    \right),
    \qquad \forall g,
\end{align*}
occurs with probability at least $1-2e^{-t}$.
\end{proposition}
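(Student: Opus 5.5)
We follow the proof of Proposition~\ref{prop:chainig-bernstein} essentially verbatim; only two kernel-specific inputs change: the per-observation concentration and the Lipschitz dependence on $g$. Write $\chi_n^{(k)}(g)=\frac1n\sum_{i=1}^n Z_i^{(k)}(g)$, where $Z_i^{(k)}(g)=\tilde Z_i^{(k)}(g)-\mathbb E[\tilde Z_i^{(k)}(g)\mid X_i,Y_i]$. Here $\tilde Z_i^{(k)}(g)$ is the $i$-th summand of \eqref{eq:kernel-loss-n}, averaged over $b$. Since $\|\mu_g(X_i)\|^2_{\mathcal H_k}=\mathbb E_{U,U'}k(g(X_i,U),g(X_i,U'))$ for independent $U,U'$, the conditional mean of $\tilde Z_i^{(k)}(g)$ is exactly the $i$-th summand of \eqref{eq:kernel-loss-emp-exact}. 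Conditional on the data, the $Z_i^{(k)}(g)$ are therefore independent and centered.

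\textbf{Fixed $g$.} Both kernel terms are bounded by $\kappa^2$, since $|k(y,y')|\le \sqrt{k(y,y)k(y',y')}\le\kappa^2$. Hoeffding's inequality handles the linear part $\frac{2}{m_1m_2}\sum_{b,r}k(g(X_i,U_{b,i,r}),Y_i)$. For the U-statistic part we repeat the permutation/pairing argument: average over permutations of the $W_i$-type block means built from $\lfloor m_2/2\rfloor m_1$ independent pairs, then apply Jensen to the moment generating function. Together these give $|Z_i^{(k)}(g)|\lesssim\kappa^2$, a sub-Gaussian tail at scale $\kappa^2/\sqrt{m_1m_2}$, and $\mathsf{Var}(Z_i^{(k)}(g))\lesssim \kappa^4/(m_1m_2)$. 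Bernstein's inequality then yields
\[
|\chi_n^{(k)}(g)|\lesssim \kappa^2\sqrt{\frac{\log(1/\delta)}{nm_1m_2}}+\kappa^2\frac{\log(1/\delta)}{n}
\]
with probability at least $1-2\delta$. A union bound over the $2^{-M}$-net $\mathcal S_M$ of $\mathcal G$ from Lemma~\ref{lemma:covering-number-G} with $M=\lceil \log n/(2\log 2)\rceil$ gives the coarse-scale term. Its log-cardinality is $\lesssim N^2L^2\log n$ under Condition~\ref{cond:bounded}(c).

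\textbf{Chaining increments (main obstacle).} For the energy loss, $y\mapsto|y|$ is $1$-Lipschitz, so $|\tilde Z_i(g_1)-\tilde Z_i(g_2)|\le 4\|g_1-g_2\|_\infty$. The kernel is not assumed Lipschitz, only $\|\phi(y)-\phi(y')\|^2_{\mathcal H_k}\le L_k|y-y'|$. Hence
\[
|k(a,b)-k(a',b')|\le |\langle\phi(a)-\phi(a'),\phi(b)\rangle|+|\langle\phi(a'),\phi(b)-\phi(b')\rangle|\le \kappa\sqrt{L_k}\bigl(\sqrt{|a-a'|}+\sqrt{|b-b'|}\bigr).
\]
This gives only $|Z_i^{(k)}(g_1)-Z_i^{(k)}(g_2)|\le C\kappa\sqrt{L_k}\,\|g_1-g_2\|_\infty^{1/2}$, which is Hölder-$1/2$. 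Consequently, at level $j$ of the chain, Hoeffding plus a union bound over $N_jN_{j+1}$ pairs gives an increment of order
\[
2^{-j/2}\sqrt{\frac{\log(1/\delta)+(j-M)+N^2L^2\log n+jN^2L}{n}}.
\]
This is still summable over $j\ge M$, with sum of order $2^{-M/2}$ times the $j=M$ term. To make the tail negligible I would double the starting scale, taking $M=\lceil \log n/\log 2\rceil$ so that $2^{-M/2}\le n^{-1/2}$. The tail is then $\lesssim (t+N^2L^2\log n)/n\lesssim \delta_{n,t}^2$, and the extra $MN^2L\lesssim N^2L^2\log n$ is harmless. Since $\log B,\log(NL)\lesssim\log n$, the log-covering number at scale $n^{-1}$ is still $\lesssim N^2L^2\log n$, so the base-net term is unchanged.

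\textbf{Conclusion.} Summing the base term and the chain increments, with failure probability $\delta 2^{M-j+1}$ at level $j$ and $\delta=e^{-t}/3$, gives with probability at least $1-2e^{-t}$
\[
\sup_g|\chi_n^{(k)}(g)|\lesssim \kappa^2\sqrt{\frac{t+N^2L^2\log n}{nm_1m_2}}+(\kappa^2+\kappa\sqrt{L_k})\frac{t+N^2L^2\log n}{n}.
\]
This is bounded by $C\kappa^2(\delta_{n,t}/\sqrt{m_1m_2}+\delta_{n,t}^2)$, with $C$ absorbing $L_k/\kappa$ and the constants of Condition~\ref{cond:kernel-bounded-lipschitz}.
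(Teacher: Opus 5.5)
Your proposal is correct and follows the route the paper indicates: a fixed-$g$ Bernstein bound using the pairing/permutation argument for the U-statistic part, a union bound over a sup-norm net, then chaining. The paper itself only asserts that the argument is identical to Proposition~\ref{prop:chainig-bernstein} with $\min$ replaced by $k$. Your one genuine addition is the observation that Condition~\ref{cond:kernel-bounded-lipschitz} only makes $k$ H\"older-$1/2$ in each argument, so the chain increments scale like $2^{-j/2}$ and the starting level must be doubled to $M=\lceil\log_2 n\rceil$; the paper's sketch glosses over this point, and you handle it correctly at the cost of a constant depending on $\sqrt{L_k}/\kappa$.
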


We are now ready to state and prove the oracle inequality.

\begin{theorem}
\label{thm:oracle-general-kernel}
Suppose Condition~\ref{cond:bounded} and Condition~\ref{cond:kernel-bounded-lipschitz} hold, and let $\hat g$ satisfy
\begin{align}
\label{eq:kernel-approx-min}
    \mathsf R_{n,m}^{(k)}(\hat g)
    \le
    \inf_{g\in \mathcal H_{\mathtt{nn}}(d+1,N,L,B,M)} \mathsf R_{n,m}^{(k)}(g)
    +
    \delta_{\mathtt{opt}}.
\end{align}
Then there exists a constant $C>0$ such that, with probability at least $1-4e^{-t}$,
\begin{align}
\label{eq:oracle-general-kernel}
    \|\mu_{\hat g}-\mu^\star\|_2^2
    \le
    C\left(
        \inf_{g\in \mathcal H_{\mathtt{nn}}(d+1,N,L,B,M)} \|\mu_g-\mu^\star\|_2^2
        +
        \kappa^2\delta_{n,t}^2
        +
        \kappa^2\frac{\delta_{n,t}}{\sqrt{m_1m_2}}
        +
        \delta_{\mathtt{opt}}
    \right).
\end{align}
\end{theorem}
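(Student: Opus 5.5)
We will mirror the proof of Theorem~\ref{thm:oracle-n-infty} with the CDF discrepancy replaced by the Hilbert-space discrepancy between conditional mean embeddings. The plan has three steps: a basic inequality, a decomposition of the empirical fluctuations, and absorption of the cross term.

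\textbf{Step 1 (population excess risk is quadratic).} Fix an arbitrary comparison generator $\tilde g\in\mathcal H_{\mathtt{nn}}(d+1,N,L,B,M)$. By Lemma~\ref{lemma:kernel-risk-identity}, for any $g$,
\begin{align*}
\mathsf R_{\infty,\infty}^{(k)}(g)-\mathsf R_{\infty,\infty}^{(k)}(\tilde g)=\|\mu_g-\mu^\star\|_2^2-\|\mu_{\tilde g}-\mu^\star\|_2^2 .
\end{align*}
Expanding $\mu_g-\mu^\star=(\mu_g-\mu_{\tilde g})+(\mu_{\tilde g}-\mu^\star)$ and applying Young's inequality to the cross term gives
\begin{align*}
\mathsf R_{\infty,\infty}^{(k)}(g)-\mathsf R_{\infty,\infty}^{(k)}(\tilde g)\ge \tfrac12\|\mu_g-\mu_{\tilde g}\|_2^2-2\|\mu_{\tilde g}-\mu^\star\|_2^2 .
\end{align*}
This is the exact analogue of the first display in the proof of Theorem~\ref{thm:oracle-n-infty}.

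\textbf{Step 2 (telescoping the empirical fluctuations).} Write
\begin{align*}
\mathsf R_{\infty,\infty}^{(k)}(\hat g)-\mathsf R_{\infty,\infty}^{(k)}(\tilde g)
&\le |\Delta_n^{(k)}(\hat g,\tilde g)|+\mathsf R_{n,m}^{(k)}(\hat g)-\mathsf R_{n,m}^{(k)}(\tilde g)+|\chi_n^{(k)}(\hat g)|+|\chi_n^{(k)}(\tilde g)| .
\end{align*}
The middle difference is at most $\delta_{\mathtt{opt}}$ by \eqref{eq:kernel-approx-min}. Intersect the events of Proposition~\ref{prop:kernel-instance-dependent} and Proposition~\ref{prop:kernel-chaining-bernstein}; this intersection has probability at least $1-4e^{-t}$, which matches the stated failure probability. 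On it, the right-hand side is bounded by
\begin{align*}
C\kappa^2\delta_{n,t}^2+C\kappa\delta_{n,t}\|\mu_{\hat g}-\mu_{\tilde g}\|_2+2C\kappa^2\Big(\frac{\delta_{n,t}}{\sqrt{m_1m_2}}+\delta_{n,t}^2\Big)+\delta_{\mathtt{opt}} .
\end{align*}
The uniformity over all $g$ in both propositions matters here, because $\hat g$ is data-dependent.

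\textbf{Step 3 (absorbing the cross term and concluding).} Bound $C\kappa\delta_{n,t}\|\mu_{\hat g}-\mu_{\tilde g}\|_2\le \tfrac14\|\mu_{\hat g}-\mu_{\tilde g}\|_2^2+C^2\kappa^2\delta_{n,t}^2$. Combining with Step~1 and moving the $\tfrac14$ term to the left-hand side yields
\begin{align*}
\tfrac14\|\mu_{\hat g}-\mu_{\tilde g}\|_2^2\lesssim \kappa^2\delta_{n,t}^2+\kappa^2\frac{\delta_{n,t}}{\sqrt{m_1m_2}}+\delta_{\mathtt{opt}}+\|\mu_{\tilde g}-\mu^\star\|_2^2 .
\end{align*}
The triangle inequality in $L_2(\mu_X;\mathcal H_k)$ then gives
\begin{align*}
\|\mu_{\hat g}-\mu^\star\|_2^2\le 2\|\mu_{\hat g}-\mu_{\tilde g}\|_2^2+2\|\mu_{\tilde g}-\mu^\star\|_2^2 .
\end{align*}
Taking $\tilde g$ to (nearly) attain $\inf_g\|\mu_g-\mu^\star\|_2^2$, or passing to the infimum since $\tilde g$ was arbitrary, gives \eqref{eq:oracle-general-kernel}.

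\textbf{Main obstacle.} The deterministic argument above is routine. The real content lies in Propositions~\ref{prop:kernel-instance-dependent}--\ref{prop:kernel-chaining-bernstein}, which we are taking as given. If they had to be verified, the delicate point would be the covering-number transfer. Condition~\ref{cond:kernel-bounded-lipschitz} gives $\|\mu_g(x)-\mu_{g'}(x)\|_{\mathcal H_k}\le \sqrt{L_k\|g-g'\|_\infty}$, which plays the role of Lemma~\ref{lemma:g-to-G}. With this, the analogues of Lemmas~\ref{lemma:log-cover-number-conversion}--\ref{lemma:log-cover-number-conversion-2} follow with $\epsilon^2$-type scaling, so only logarithmic factors change.
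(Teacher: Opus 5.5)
Your proposal is correct and follows essentially the same route as the paper's proof. Both arguments use the lower bound $\tfrac12\|\mu_{\hat g}-\mu_{\tilde g}\|_2^2-2\|\mu_{\tilde g}-\mu^\star\|_2^2$ from Lemma~\ref{lemma:kernel-risk-identity}, then telescope through $\Delta_n^{(k)}$ and $\chi_n^{(k)}$ on the intersection of the two events (probability at least $1-4e^{-t}$), absorb the cross term by Young's inequality, and finish with the triangle inequality and an infimum over $\tilde g$, which is legitimate because both propositions hold uniformly over $g$ and $\tilde g$.
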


\begin{proof}
Fix any $\tilde g\in \mathcal H_{\mathtt{nn}}(d+1,N,L,B,M)$. By Lemma~\ref{lemma:kernel-risk-identity},
\begin{align*}
    \mathsf R_{\infty,\infty}^{(k)}(g)-\mathsf R_{\infty,\infty}^{(k)}(\tilde g)
    =
    \|\mu_g-\mu^\star\|_2^2-\|\mu_{\tilde g}-\mu^\star\|_2^2.
\end{align*}
Expanding the square yields
\begin{align*}
    \mathsf R_{\infty,\infty}^{(k)}(g)-\mathsf R_{\infty,\infty}^{(k)}(\tilde g)
    &=
    \|\mu_g-\mu_{\tilde g}\|_2^2
    +
    2\mathbb E_X\Big[
        \big\langle \mu_g(X)-\mu_{\tilde g}(X),\, \mu_{\tilde g}(X)-\mu^\star(X)\big\rangle_{\mathcal H_k}
    \Big] \\
    &\ge
    \frac12 \|\mu_g-\mu_{\tilde g}\|_2^2
    -
    2\|\mu_{\tilde g}-\mu^\star\|_2^2,
\end{align*}
where the last inequality follows from $2ab\le \frac12 a^2+2b^2$. Applying this with $g=\hat g$, we obtain
\begin{align}
\label{eq:kernel-risk-lower}
    \mathsf R_{\infty,\infty}^{(k)}(\hat g)-\mathsf R_{\infty,\infty}^{(k)}(\tilde g)
    \ge
    \frac12 \|\mu_{\hat g}-\mu_{\tilde g}\|_2^2
    -
    2\|\mu_{\tilde g}-\mu^\star\|_2^2.
\end{align}

On the other hand, by the definition of $\Delta_n^{(k)}$, $\chi_n^{(k)}$, and the approximate optimality \eqref{eq:kernel-approx-min},
\begin{align*}
    \mathsf R_{\infty,\infty}^{(k)}(\hat g)-\mathsf R_{\infty,\infty}^{(k)}(\tilde g)
    &=
    \mathsf R_{\infty,\infty}^{(k)}(\hat g)-\mathsf R_{n,\infty}^{(k)}(\hat g)
    +
    \mathsf R_{n,\infty}^{(k)}(\hat g)-\mathsf R_{n,\infty}^{(k)}(\tilde g)
    +
    \mathsf R_{n,\infty}^{(k)}(\tilde g)-\mathsf R_{\infty,\infty}^{(k)}(\tilde g) \\
    &\le
    |\Delta_n^{(k)}(\hat g,\tilde g)|
    +
    \mathsf R_{n,\infty}^{(k)}(\hat g)-\mathsf R_{n,\infty}^{(k)}(\tilde g) \\
    &\le
    |\Delta_n^{(k)}(\hat g,\tilde g)|
    +
    \mathsf R_{n,m}^{(k)}(\hat g)-\mathsf R_{n,m}^{(k)}(\tilde g)
    +
    |\chi_n^{(k)}(\hat g)|
    +
    |\chi_n^{(k)}(\tilde g)| \\
    &\le
    |\Delta_n^{(k)}(\hat g,\tilde g)|
    +
    |\chi_n^{(k)}(\hat g)|
    +
    |\chi_n^{(k)}(\tilde g)|
    +
    \delta_{\mathtt{opt}}.
\end{align*}
Now intersect the events in Proposition~\ref{prop:kernel-instance-dependent} and Proposition~\ref{prop:kernel-chaining-bernstein}; this event has probability at least $1-4e^{-t}$. On this event,
\begin{align*}
    \mathsf R_{\infty,\infty}^{(k)}(\hat g)-\mathsf R_{\infty,\infty}^{(k)}(\tilde g)
    &\le
    C\Big(
        \kappa^2\delta_{n,t}^2
        +
        \kappa\delta_{n,t}\|\mu_{\hat g}-\mu_{\tilde g}\|_2
    \Big)
    +
    2C\kappa^2\left(
        \frac{\delta_{n,t}}{\sqrt{m_1m_2}}
        +
        \delta_{n,t}^2
    \right)
    +
    \delta_{\mathtt{opt}} \\
    &\le
    \frac14 \|\mu_{\hat g}-\mu_{\tilde g}\|_2^2
    +
    C\kappa^2\delta_{n,t}^2
    +
    C\kappa^2\frac{\delta_{n,t}}{\sqrt{m_1m_2}}
    +
    \delta_{\mathtt{opt}},
\end{align*}
where in the last step we used
\[
C\kappa\delta_{n,t}\|\mu_{\hat g}-\mu_{\tilde g}\|_2
\le
\frac14 \|\mu_{\hat g}-\mu_{\tilde g}\|_2^2
+
C^2\kappa^2\delta_{n,t}^2.
\]
Combining this with \eqref{eq:kernel-risk-lower} gives
\begin{align}
\label{eq:kernel-midpoint-bound}
    \frac14 \|\mu_{\hat g}-\mu_{\tilde g}\|_2^2
    \le
    C\kappa^2\delta_{n,t}^2
    +
    C\kappa^2\frac{\delta_{n,t}}{\sqrt{m_1m_2}}
    +
    \delta_{\mathtt{opt}}
    +
    2\|\mu_{\tilde g}-\mu^\star\|_2^2.
\end{align}

Finally, by the triangle inequality in $L^2(\mu_X;\mathcal H_k)$,
\begin{align*}
    \|\mu_{\hat g}-\mu^\star\|_2^2
    &\le
    2\|\mu_{\hat g}-\mu_{\tilde g}\|_2^2
    +
    2\|\mu_{\tilde g}-\mu^\star\|_2^2 \\
    &\le
    C\left(
        \kappa^2\delta_{n,t}^2
        +
        \kappa^2\frac{\delta_{n,t}}{\sqrt{m_1m_2}}
        +
        \delta_{\mathtt{opt}}
        +
        \|\mu_{\tilde g}-\mu^\star\|_2^2
    \right),
\end{align*}
where we used \eqref{eq:kernel-midpoint-bound} in the second line. Since $\tilde g$ is arbitrary, taking the infimum over $\tilde g\in\mathcal H_{\mathtt{nn}}(d+1,N,L,B,M)$ completes the proof.
\end{proof}

\subsection{Auxiliary Results for General Kernel MMD Loss}
\label{sec:kernel-aux}

We establish the key empirical process bounds used in Theorem~\ref{thm:oracle-general-kernel}. 
The arguments follow the same localization and chaining strategy as in the energy loss case, after reducing the Hilbert-space quantities to scalar function classes.

\paragraph{Reduction to Scalar Function Classes.}

For any $g,\tilde g\in \mathcal H_{\mathtt{nn}}(d+1,N,L,B,M)$, define
\begin{align}
\label{eq:def-v-u-kernel}
\begin{split}
    v_{g,\tilde g}(x)
    &:=
    \|\mu_g(x)-\mu_{\tilde g}(x)\|_{\mathcal H_k}^2, \\
    u_{g,\tilde g}(x,y)
    &:=
    \big\langle 
        \mu_g(x)-\mu_{\tilde g}(x),\,
        \mu_{\tilde g}(x)-\phi(y)
    \big\rangle_{\mathcal H_k}.
\end{split}
\end{align}

Then we can rewrite
\begin{align}
\label{eq:kernel-decomposition}
\begin{split}
    \Delta_n^{(k)}(g,\tilde g)
    =
    (\mathbb P_n-\mathbb P)\big[ v_{g,\tilde g}(X) + 2u_{g,\tilde g}(X,Y)\big],
\end{split}
\end{align}
where $\mathbb P_n$ and $\mathbb P$ denote the empirical and population measures of $(X,Y)$.

Thus, it suffices to control the empirical processes indexed by the scalar function classes
\begin{align*}
    \mathcal V
    &:=
    \{v_{g,\tilde g}: g,\tilde g\in\mathcal H_{\mathtt{nn}}\}, \\
    \mathcal U
    &:=
    \{u_{g,\tilde g}: g,\tilde g\in\mathcal H_{\mathtt{nn}}\}.
\end{align*}

\paragraph{Uniform Boundedness and Lipschitz Reduction.}

Under Condition~\ref{cond:kernel-bounded-lipschitz}, we have for all $g,\tilde g$:
\begin{align}
\label{eq:kernel-bounds}
\begin{split}
    \|\mu_g(x)-\mu_{\tilde g}(x)\|^2_{\mathcal H_k}
    &\le
    L_k \mathbb E_U|g(x,U)-\tilde g(x,U)|, \\
    |v_{g,\tilde g}(x)|
    &\le
    4\kappa^2, \\
    |u_{g,\tilde g}(x,y)|
    &\le
    4\kappa^2.
\end{split}
\end{align}

Moreover, both $v_{g,\tilde g}$ and $u_{g,\tilde g}$ are Lipschitz functionals of $g,\tilde g$ through $\mu_g$, hence their covering numbers can be controlled by those of the neural network class.

\begin{lemma}
\label{lemma:kernel-covering}
Let $\mathcal G:=\mathcal H_{\mathtt{nn}}(d+1,N,L,B,M)$. Then for any $\epsilon>0$,
\begin{align*}
    \log \mathcal N(\epsilon, \mathcal V, \|\cdot\|_\infty)
    \lesssim
    NL \log\left(\frac{n}{\epsilon}\right),
    \qquad
    \log \mathcal N(\epsilon, \mathcal U, \|\cdot\|_\infty)
    \lesssim
    NL \log\left(\frac{n}{\epsilon}\right).
\end{align*}
\end{lemma}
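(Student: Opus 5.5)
The plan is to reduce both Hilbert-space-valued classes $\mathcal V$ and $\mathcal U$ to the neural network class $\mathcal G$ in sup-norm, copying the proofs of Lemma~\ref{lemma:log-cover-number-conversion} and Lemma~\ref{lemma:log-cover-number-conversion-2}. The conditional CDF is replaced by the kernel mean embedding $\mu_g(x)=\mathbb E_U[\phi(g(x,U))]$, and Lemma~\ref{lemma:g-to-G} is replaced by the Lipschitz bound \eqref{eq:kernel-bounds}. The final entropy count for $\mathcal G$ then has to come out at the level $NL\log(n/\epsilon)$.

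\textbf{Step 1 (Lipschitz transfer).} Fix $\epsilon^\star>0$ and an $\epsilon^\star$-net $g_1,\dots,g_{N^\star}$ of $\mathcal G$ in $\|\cdot\|_\infty$, with projection $\pi$. Condition~\ref{cond:kernel-bounded-lipschitz} and Jensen's inequality give
$\|\mu_g(x)-\mu_{\pi(g)}(x)\|_{\mathcal H_k}^2\le L_k\,\mathbb E_U|g(x,U)-\pi(g)(x,U)|\le L_k\epsilon^\star$ uniformly in $x$. Also $\|\mu_g(x)\|_{\mathcal H_k}\le\kappa$ and $\|\phi(y)\|_{\mathcal H_k}\le\kappa$.

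Expanding $v_{g,\tilde g}-v_{\pi(g),\pi(\tilde g)}$ and $u_{g,\tilde g}-u_{\pi(g),\pi(\tilde g)}$ exactly as in the $\mathsf T_1$--$\mathsf T_3$ decompositions of Lemmas~\ref{lemma:log-cover-number-conversion}--\ref{lemma:log-cover-number-conversion-2}, with inner products now taken in $\mathcal H_k$, each term is bounded by Cauchy--Schwarz by $C\kappa\sqrt{L_k\epsilon^\star}+CL_k\epsilon^\star$. Hence the pairs $\{v_{g_j,g_l}\}$ and $\{u_{g_j,g_l}\}$ form $\epsilon$-nets with $\epsilon\asymp\kappa\sqrt{L_k\epsilon^\star}$. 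This yields
\[
\log\mathcal N(\epsilon,\mathcal V,\|\cdot\|_\infty)\vee\log\mathcal N(\epsilon,\mathcal U,\|\cdot\|_\infty)\le 2\log\mathcal N\big(c\,\epsilon^2/(\kappa^2L_k),\mathcal G,\|\cdot\|_\infty\big).
\]

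\textbf{Step 2 (entropy of $\mathcal G$).} It remains to show that $\log\mathcal N(\delta,\mathcal G,\|\cdot\|_\infty)\lesssim NL\log(n/\delta)$. The $\log$ factors are handled exactly as before: Claim~\ref{claim:continuity-of-parameters} gives parameter-to-function Lipschitz constant $2N(L+1)(BN+2)^{L+1}$. Under Condition~\ref{cond:bounded}(c), where $\log B,\log(NL)\lesssim\log n$, its logarithm is $\lesssim L\log n$, so $\log(\cdot/\delta)$ of the grid size becomes $\lesssim\log(n/\delta)$ up to the depth factor. Substituting $\delta=c\epsilon^2/(\kappa^2L_k)$ only changes constants inside the logarithm, because $\log(1/\epsilon^2)=2\log(1/\epsilon)$.

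\textbf{Main obstacle: the prefactor $NL$.} A naive grid over all $(N^2+N)(L+1)$ weights, as in Lemma~\ref{lemma:covering-number-G}, produces a prefactor of order $N^2L^2$, not $NL$. To reach the stated $NL$, I would not grid the full weight space. Instead I would bound the sup-norm entropy of $\mathcal G$ restricted to the compact domain through a fat-shattering / pseudo-dimension argument applied to the scalar network output at the relevant scale. The idea is to count the piecewise-linear regions generated by the $NL$ hidden units, so that the number of distinct activation patterns on a $\delta$-grid of inputs, which is polynomial in $n/\delta$, is raised only to a power proportional to the number of neurons $NL$.

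This is the step I expect to fail, or at least to be delicate. Standard region-counting and VC-type bounds scale with the number of \emph{parameters} (roughly $N^2L$, times a further factor of $L$), not with the number of neurons. If the neuron-count argument cannot be closed, the fallback is to keep Step 1 verbatim and note that the bound holds with $NL$ replaced by the parameter count appearing in Lemma~\ref{lemma:covering-number-G}. That substitution leaves the downstream use in Propositions~\ref{prop:kernel-instance-dependent}--\ref{prop:kernel-chaining-bernstein} unchanged, since $\delta_{n,t}$ there is already calibrated to $N^2L^2\log n$.
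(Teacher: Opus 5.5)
Your Step 1 is exactly the argument the paper's one-line proof points to (``identical to the covering construction in the energy-loss case''). It uses Jensen together with $\|\phi(y)-\phi(y')\|_{\mathcal H_k}^2\le L_k|y-y'|$, giving $\|\mu_g(x)-\mu_{\pi(g)}(x)\|_{\mathcal H_k}^2\le L_k\epsilon^\star$ uniformly in $x$. It then applies the Cauchy--Schwarz transfer of Lemmas~\ref{lemma:log-cover-number-conversion} and~\ref{lemma:log-cover-number-conversion-2}, followed by the parameter grid of Lemma~\ref{lemma:covering-number-G}. Carried out, that argument gives what your fallback gives: $\log\mathcal N(\epsilon,\mathcal V,\|\cdot\|_\infty)\vee\log\mathcal N(\epsilon,\mathcal U,\|\cdot\|_\infty)\lesssim N^2L^2\log(n/\epsilon)$ for $\epsilon\le 1$. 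This is also the only version consistent with $\delta_{n,t}=NL\sqrt{\log n/n}$ in Propositions~\ref{prop:kernel-instance-dependent} and~\ref{prop:kernel-chaining-bernstein}, whose square is $N^2L^2\log n/n$. The prefactor $NL$ in the displayed statement should be read as $N^2L^2$.

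The step you flag, reaching $NL$ by counting neurons instead of parameters, does fail, and not for technical reasons: with prefactor $NL$ the statement is false.

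\textbf{Setup.} Take the linear kernel $k(y,y')=yy'$, which satisfies Condition~\ref{cond:kernel-bounded-lipschitz} with $\kappa=M$, $L_k=2M$. Take $p=1$, generators with zero weights on $u$, and $\tilde g\equiv 0$, so $v_{g,\tilde g}=g^2$. Put $x_{a,b}=a/N+(b+1/2)/N^2$ for $a,b\in\{0,\dots,N-1\}$ and fix any $s\in\{0,1\}^{N\times N}$.

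\textbf{Network.} A depth-$3$ ReLU network of width $O(N)$ with weights $O(N^2)$ can compute the following.
\begin{itemize}
\item[(i)] Layer one computes $x$ and the ramps $\sigma(K(x-a/N)+1)-\sigma(K(x-a/N))$ with $K=2N^2+1$. These equal $\mathbf 1\{x\ge a/N\}$ at every grid point, so $t=Nx-\sum_{a\ge1}\mathbf 1\{x\ge a/N\}$ equals $(b+1/2)/N$ at $x_{a,b}$.
\item[(ii)] Layer two computes the three ReLU pieces of the hat $h_b(t)$ of height $1/2$ on $[b/N,(b+1)/N]$, and the selectors $\mathrm{sel}_b=\sigma\bigl(\sum_a s_{a,b}\mathbf 1\{a/N\le x<(a+1)/N\}\bigr)$.
\item[(iii)] Layer three computes $\sigma(2h_b+\mathrm{sel}_b-1)$, and the output layer sums over $b$.
\end{itemize}

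\textbf{Conclusion.} The result $f_s$ satisfies $f_s(x_{a,b})=s_{a,b}$ (rescale by $\min\{1,c_0\}$ to respect the truncation). Hence the $2^{N^2}$ functions $f_s^2\in\mathcal V$ are pairwise separated by a constant in sup norm, and $\log\mathcal N(\epsilon_0,\mathcal V,\|\cdot\|_\infty)\ge N^2\log 2$ for a constant $\epsilon_0$. Take $L=3$, $N\asymp n^{1/4}$ and $B\asymp n^{1/2}$, all admissible under Condition~\ref{cond:bounded}(c). Then the left side is of order $n^{1/2}$, whereas $NL\log(n/\epsilon_0)$ is of order $n^{1/4}\log n$.

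So drop the neuron-counting route. Your Step 1 plus the parameter count is the correct proof of the corrected lemma, and it is the paper's proof; nothing downstream needs more.
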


\begin{proof}
This follows from the Lipschitz mapping $g\mapsto \mu_g$ together with standard covering number bounds for ReLU networks. 
The argument is identical to the covering construction in the energy-loss case and is therefore omitted.
\end{proof}

\paragraph{Instance-Dependent Bound.}

\begin{proposition}[Restatement of Proposition~\ref{prop:kernel-instance-dependent}, Kernel analogue of Proposition~\ref{prop:instant-dependent-error-bound}]
\label{prop:kernel-instance-dependent-full}
Under Condition~\ref{cond:bounded} and Condition~\ref{cond:kernel-bounded-lipschitz}, with probability at least $1-2e^{-t}$,
\begin{align}
\label{eq:kernel-instance-bound}
    |\Delta_n^{(k)}(g,\tilde g)|
    \le
    C\Big(
        \kappa^2 \delta_{n,t}^2
        +
        \kappa \delta_{n,t}
        \|\mu_g-\mu_{\tilde g}\|_2
    \Big),
    \qquad \forall g,\tilde g,
\end{align}
where $\delta_{n,t}=NL\sqrt{\frac{\log n}{n}}+\sqrt{\frac{t}{n}}$.
\end{proposition}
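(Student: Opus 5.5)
The plan is to mirror the proof of Proposition~\ref{prop:instant-dependent-error-bound} verbatim, using the decomposition \eqref{eq:kernel-decomposition}. Write $\Delta_n^{(k)}(g,\tilde g)=\mathsf T_1^{(k)}+2\mathsf T_2^{(k)}$, where
\[
\mathsf T_1^{(k)}=(\mathbb P_n-\mathbb P)v_{g,\tilde g}, \qquad \mathsf T_2^{(k)}=(\mathbb P_n-\mathbb P)u_{g,\tilde g}.
\]
It then suffices to show, each with probability at least $1-e^{-t}$ and uniformly in $(g,\tilde g)$, that
\[
|\mathsf T_j^{(k)}|\lesssim \kappa^2\delta_{n,t}^2+\kappa\,\delta_{n,t}\|\mu_g-\mu_{\tilde g}\|_2 .
\]

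\textbf{Variance control.} The input needed is a bound on the $L_2$ norms of the two classes. For $v$, by \eqref{eq:kernel-bounds} we have $0\le v_{g,\tilde g}\le 4\kappa^2$. Hence
\[
\|v_{g,\tilde g}\|_2^2=\mathbb E[v^2]\le 4\kappa^2\,\mathbb E[v]=4\kappa^2\|\mu_g-\mu_{\tilde g}\|_2^2 .
\]
For $u$, Cauchy--Schwarz in $\mathcal H_k$ together with $\|\mu_{\tilde g}(x)-\phi(y)\|_{\mathcal H_k}\le 2\kappa$ gives $|u_{g,\tilde g}(x,y)|\le 2\kappa\|\mu_g(x)-\mu_{\tilde g}(x)\|_{\mathcal H_k}$. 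Squaring and integrating yields
\[
\|u_{g,\tilde g}\|_2\le 2\kappa\|\mu_g-\mu_{\tilde g}\|_2 .
\]
So in both cases the natural localization radius $r=\|f\|_2$ is at most $2\kappa\|\mu_g-\mu_{\tilde g}\|_2$, which is where the factor $\kappa$ in the statement comes from.

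\textbf{Entropy.} Next we need sup-norm entropy for $\mathcal V$ and $\mathcal U$. Lemma~\ref{lemma:kernel-covering} states this, but I would re-derive it with the correct scaling, in the spirit of Lemmas~\ref{lemma:log-cover-number-conversion}--\ref{lemma:log-cover-number-conversion-2}. Take an $\epsilon^\star$-net $\{g_j\}$ of $\mathcal G$ in sup norm, with size controlled by Lemma~\ref{lemma:covering-number-G}. Condition~\ref{cond:kernel-bounded-lipschitz} gives
\[
\|\mu_g(x)-\mu_{\pi(g)}(x)\|_{\mathcal H_k}\le \mathbb E_U\|\phi(g)-\phi(\pi(g))\|_{\mathcal H_k}\le\sqrt{L_k\epsilon^\star}.
\]
Expanding the differences of $v$ and $u$ as in the proof of Lemma~\ref{lemma:log-cover-number-conversion}, with inner products bounded by $2\kappa$ times this quantity, yields an $O(\kappa\sqrt{L_k\epsilon^\star}+L_k\epsilon^\star)$ cover. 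Therefore
\[
\log\mathcal N(\epsilon,\mathcal V\cup\mathcal U,\|\cdot\|_\infty)\lesssim N^2L^2\log(n/\epsilon),
\]
using $\log B,\log(NL)\lesssim\log n$. I would use this $N^2L^2$ bound, rather than the $NL$ written in the lemma, since it is what the localization requires.

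\textbf{Localization and peeling.} With these inputs, rescale by $4\kappa^2$ so that the functions are bounded by $1$, and rerun Claim~\ref{claim:standard-localization}. First, Theorem 19.3 of \cite{gyorfi2002distribution} (equivalently \eqref{equ:L2-Ln-relation}) relates empirical and population norms. Second, symmetrization plus Dudley chaining bounds the local Rademacher complexity by $(r+\delta_n)\delta_n$. Third, Talagrand's inequality upgrades this to a high-probability bound. Fourth, peeling over dyadic shells of $\|f\|_2$ makes the bound uniform. Undoing the scaling gives
\[
|(\mathbb P_n-\mathbb P)f|\lesssim \kappa^2\delta_{n,t}^2+\kappa\,\delta_{n,t}\|f\|_2 ,
\]
and substituting the variance bounds above concludes. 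A union bound over the two classes gives probability at least $1-2e^{-t}$.

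\textbf{Main obstacle.} I expect the entropy step to be the main difficulty. The map $g\mapsto\mu_g$ is only Hölder-$1/2$ in sup norm, because $\|\phi(y)-\phi(y')\|^2\le L_k|y-y'|$. One must therefore check that the resulting $\epsilon^2$ scale inside the logarithm costs only a constant factor, and that the constants in the $u$-expansion remain of order $\kappa^2$ after rescaling. The remaining steps are routine transcriptions of the energy case.
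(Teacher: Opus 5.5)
Your proposal is correct and follows the same route as the paper: the decomposition $\Delta_n^{(k)}(g,\tilde g)=(\mathbb P_n-\mathbb P)[v_{g,\tilde g}+2u_{g,\tilde g}]$, the variance proxy $\|v_{g,\tilde g}\|_2,\|u_{g,\tilde g}\|_2\le 2\kappa\|\mu_g-\mu_{\tilde g}\|_2$, a sup-norm cover transferred from the network class, and then the localization/peeling argument of Claim~\ref{claim:standard-localization}. You are also right that the entropy should be of order $N^2L^2\log(n/\epsilon)$, consistent with Lemma~\ref{lemma:covering-number-G}, rather than the $NL\log(n/\epsilon)$ written in Lemma~\ref{lemma:kernel-covering}, since that is what produces $\delta_{n,t}$. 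One bookkeeping fix: undoing the $4\kappa^2$ rescaling gives $\kappa^2\delta_{n,t}^2+\delta_{n,t}\|f\|_2$, with no extra $\kappa$ on the second term, and only then does substituting $\|f\|_2\le 2\kappa\|\mu_g-\mu_{\tilde g}\|_2$ give exactly the stated term $\kappa\,\delta_{n,t}\|\mu_g-\mu_{\tilde g}\|_2$.
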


\begin{proof}
By \eqref{eq:kernel-decomposition}, the result reduces to bounding
\[
(\mathbb P_n-\mathbb P)f, \quad f\in \mathcal V \cup \mathcal U.
\]
Using Lemma~\ref{lemma:kernel-covering}, the boundedness \eqref{eq:kernel-bounds}, and Bernstein’s inequality combined with chaining, we obtain the same localized empirical process bound as in the energy-loss case. 

The only difference is that the variance proxy becomes
\[
\mathbb E[u_{g,\tilde g}^2],\mathbb E[v_{g,\tilde g}^2] \lesssim \kappa^2 \|\mu_g-\mu_{\tilde g}\|_2^2,
\]
which yields the mixed term in \eqref{eq:kernel-instance-bound}. 

The rest of the proof follows identically from Proposition~\ref{prop:instant-dependent-error-bound} and is omitted.
\end{proof}

\paragraph{Monte Carlo Approximation Error.}

\begin{proposition}[Restatement of Proposition~\ref{prop:kernel-chaining-bernstein}, Kernel analogue of Proposition~\ref{prop:chainig-bernstein}]
\label{prop:kernel-chaining-bernstein-full}
Under Condition~\ref{cond:bounded} and Condition~\ref{cond:kernel-bounded-lipschitz}, with probability at least $1-2e^{-t}$,
\begin{align}
\label{eq:kernel-mc-bound}
    |\chi_n^{(k)}(g)|
    \le
    C\kappa^2\left(
        \frac{\delta_{n,t}}{\sqrt{m_1 m_2}}
        +
        \delta_{n,t}^2
    \right),
    \qquad \forall g.
\end{align}
\end{proposition}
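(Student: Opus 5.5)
The plan is to mirror the proof of Proposition~\ref{prop:chainig-bernstein}, replacing the absolute-value summands by kernel evaluations. First I would write $\chi_n^{(k)}(g)=\frac1n\sum_{i=1}^n Z_i^{(k)}(g)$ with $Z_i^{(k)}(g)=\tilde Z_i^{(k)}(g)-\mathbb E[\tilde Z_i^{(k)}(g)\mid X_i,Y_i]$. Here $\tilde Z_i^{(k)}(g)$ is the $i$-th summand of \eqref{eq:kernel-loss-n}, averaged over $b$. By Condition~\ref{cond:kernel-bounded-lipschitz}, $|k(y,y')|\le\kappa^2$ on $[-M,M]^2$, so every kernel term is bounded by $\kappa^2$ and $|Z_i^{(k)}(g)|\le 6\kappa^2$.

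\textbf{Fixed $g$.} Conditional on $(X_i,Y_i)$, the linear part $\frac{2}{m_1m_2}\sum_{b,r}k(g(X_i,U_{b,i,r}),Y_i)$ is an average of $m_1m_2$ i.i.d.\ bounded terms, so Hoeffding's inequality applies with variance proxy $\kappa^4/(m_1m_2)$. For the $U$-statistic part I would reuse the Hoeffding permutation decomposition into $\lfloor m_2/2\rfloor m_1$ disjoint pairs, averaged over $\sigma\in\mathcal S^{m_2}$, together with Jensen on the MGF. This gives a sub-Gaussian bound with proxy $\lesssim\kappa^4/(m_1m_2)$ and $\mathsf{Var}(Z_i^{(k)}(g))\lesssim\kappa^4/(m_1m_2)$. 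The $Z_i^{(k)}(g)$ are independent across $i$, so Bernstein's inequality gives
\[
|\chi_n^{(k)}(g)|\lesssim\kappa^2\Bigl(\sqrt{\tfrac{\log(1/\delta)}{nm_1m_2}}+\tfrac{\log(1/\delta)}{n}\Bigr)
\]
for each fixed $g$.

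\textbf{Uniformity via chaining in $\|\cdot\|_\infty$.} I would take $2^{-j}$-nets $\mathcal S_j$ of $\mathcal G$, with $\log|\mathcal S_j|\lesssim N^2L^2\log n+jN^2L$ by Lemma~\ref{lemma:covering-number-G}, and decompose as in \eqref{equ:chaining-decomposition}. The coarse level $M=\lceil\log_2\sqrt n\rceil$ is handled by the Bernstein bound plus a union bound over $\mathcal S_M$.

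For the increments the Lipschitz property of the kernel is needed, and I expect this to be the main obstacle. Condition~\ref{cond:kernel-bounded-lipschitz} gives only $|k(a,c)-k(b,c)|=|\langle\phi(a)-\phi(b),\phi(c)\rangle|\le\kappa\sqrt{L_k|a-b|}$, a Hölder-$1/2$ modulus rather than Lipschitz. Hence $|\tilde Z_i^{(k)}(g_1)-\tilde Z_i^{(k)}(g_2)|\lesssim\kappa\sqrt{L_k}\,\|g_1-g_2\|_\infty^{1/2}$, and the increments at level $j$ are of order $2^{-j/2}$ instead of $2^{-j}$. Hoeffding plus union bounds then give level-$j$ contributions $\lesssim 2^{-j/2}\sqrt{(\log(1/\delta)+N^2L^2\log n+jN^2L)/n}$. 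This sum is still geometric, but to make the tail $\lesssim\delta_{n,t}^2$ I would start the chain at $M=\lceil\log_2 n\rceil$, so that $2^{-M/2}\le n^{-1/2}$. This only changes $\log|\mathcal S_M|$ by a constant factor of $\log n$, which is absorbed since $MN^2L\lesssim N^2L^2\log n$. The constants $L_k,\kappa$ are absorbed into $C\kappa^2$, treating $L_k\lesssim\kappa^2$ up to constants as in the statement.

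\textbf{Conclusion.} Summing the levels with failure probabilities $\delta 2^{M-j}$ and setting $\delta=e^{-t}/3$, I expect
\[
\sup_g|\chi_n^{(k)}(g)|\lesssim\kappa^2\Bigl(\sqrt{\tfrac{t+N^2L^2\log n}{nm_1m_2}}+\tfrac{t+N^2L^2\log n}{n}\Bigr)\lesssim\kappa^2\Bigl(\tfrac{\delta_{n,t}}{\sqrt{m_1m_2}}+\delta_{n,t}^2\Bigr)
\]
with probability at least $1-2e^{-t}$.
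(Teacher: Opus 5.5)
Your proposal is correct and follows the same route as the paper. The paper asserts that the argument of Proposition~\ref{prop:chainig-bernstein} carries over verbatim: Bernstein plus a union bound on a coarse sup-norm net, then Hoeffding on the chaining increments. Your observation that Condition~\ref{cond:kernel-bounded-lipschitz} only makes the summands H\"older-$1/2$ in $g$ is a correct and needed refinement that the paper's ``identical'' claim glosses over. It forces the chain to start at $M=\lceil\log_2 n\rceil$ rather than $\lceil\tfrac12\log_2 n\rceil$, with the factor $\sqrt{L_k}/\kappa$ absorbed into $C$ or removed by enlarging $M$ logarithmically.
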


\begin{proof}
The Monte Carlo error can be written as an average of degenerate $U$-statistics indexed by $g$. 
Since $k$ is bounded, the kernel evaluations are uniformly bounded by $\kappa^2$, and the variance scales as $(m_1 m_2)^{-1}$. 

Applying Bernstein’s inequality together with a union bound over a covering net of $\mathcal G$, and then chaining, yields \eqref{eq:kernel-mc-bound}. 

The argument is identical to Proposition~\ref{prop:chainig-bernstein}, replacing $\min(g,g')$ by $k(g,g')$, and is therefore omitted.
\end{proof}

\section{A More General Architecture}
In this section, we consider a more general deep neural network architecture with ReLU activation $\sigma(\cdot) = \max\{0, \cdot\}$ that injects extra noise in each layer of a standard fully connected deep neural network, and call it \emph{stochastic deep ReLU network} for short. Let $L, N, \tilde d, N_s$ be four positive integers satisfying $N_s < N$, a \emph{deep ReLU network with depth $L$, ambient width $N$, and noise width $(\tilde d, N_s)$} admits the form that is defined recursively as~\footnote{{Here $\tilde d$ is the dimension of noise components injected at the first layer while $N_s$ is the dimension of noise components introduced in each intermediate layer, analogous to the dropout as an implicit regularization.}}
\begin{align}
\label{eq:nn-architecture}
\begin{split}
    g(x, u) &= T_{L+1} \circ [g^{(L)}(x, \{u_\ell\}_{\ell=1}^L)] ~~~ \text{ where } \\
    &\forall \ell \in [L], \qquad g^{(\ell)}(x, \{u_r\}_{r=1}^\ell) = 
        \bar{\sigma}_\ell \circ T_{\ell} \circ [g^{(\ell-1)}(x, \{u_r\}_{r=1}^{\ell-1}), u_\ell]
    ~~\text{ with }~~ g^{(0)}(x, \emptyset) = x. 
\end{split}
\end{align} Here $T_{l}(z) = W_l z + b_l: \mathbb{R}^{d_l} \to \mathbb{R}^{d_{l+1}}$ is a linear map with weight matrix $W_l \in \mathbb{R}^{(d_{l}-N_s 1_{\{l\neq L+1\}})\times d_{l-1}}$ and bias vector $b_{l} \in \mathbb{R}^{d_{l} - N_s 1_{\{l\neq L+1\}}}$, where $(d_0,d_1\ldots, d_{L-1}, d_L, d_{L+1}) = (d, N, \ldots, N, N-N_s, 1)$, and $\bar{\sigma}_l: \mathbb{R}^{d_l - N_s} \to \mathbb{R}^{d_l - N_s}$ applies the ReLU activation $\sigma(\cdot)$ to each entry of a $d_l$-dimensional vector in $l\in [L]$. The set of noise vectors is $u=\{u_\ell\}_{\ell=1}^L$ with $u_\ell \in \mathbb{R}^{N_s}$ for $\ell \ge 2$ and $u_1 \in \mathbb{R}^{\tilde d}$. Here, the equal width is for presentation simplicity. See an example of $L=3$, $N=7$, $d=3$ and $(N_s, \tilde d)=(3,2)$ in Fig. \ref{fig:snn}.

\begin{figure}
\begin{center}
\begin{tikzpicture}[scale=1.5]
\draw[gray] (-0.03, -0.33) rectangle +(0.24+0.03*2, 0.3*2+0.24+0.03*2);
\draw[gray] (0.12, 0.7) node {$x$};
\foreach \x in {-1, 0, 1}
{
	\draw[black] (0, \x*0.3) rectangle +(0.24, 0.24);
}
\foreach \l in {1, 2, 3}
\foreach \x in {-2, -1, 0, 1, 2}
{
	\draw[black] (\l, \x*0.3) rectangle +(0.24, 0.24);
}
\draw[black] (4, 0) rectangle+(0.24, 0.24);
\foreach \x in {-2, -1, 0, 1}
	\foreach \y in {-2, -1, 0, 1, 2}
{
	\draw[->, myred] (0+0.24, \x*0.3+0.12) -- (1, \y * 0.3+0.12);
}

\foreach \l in {2,3}
	\foreach \x in {-4, -3, -2, -1, 0, 1, 2}
		\foreach \y in {-2, -1, 0, 1, 2}
		{
			\draw[->, myred] (\l-1+0.24, \x*0.3+0.12) -- (\l, \y * 0.3+0.12);
		}
		
\foreach \x in {-2, -1, 0, 1, 2}
	{
		\draw[->, myred] (4-1+0.24, \x*0.3+0.12) -- (4, 0 * 0.3+0.12);
	}

\draw[black] (4+0.24+0.6, 0.4) node{\small $T_M(\cdot)$};
\draw[->] (4+0.24 + 0.1, 0.12) -- (4+0.24+1.1, 0.12);

\draw[myblue] (-0.03, -0.93+0.3) rectangle +(0.24+0.03*2, 0.24+0.03*2);
\foreach \x in {-2}
{
	\draw[myblue] (0, \x*0.3) rectangle +(0.24, 0.24);
}
\draw[myblue] (0.12, -0.93-0.15+0.3) node {\myblue{$u_1$}};

\foreach \l in {2, 3}
{
    \draw[myblue] (-0.03+\l-1, -0.93-0.3) rectangle +(0.24+0.03*2, 0.3*1+0.24+0.03*2);
    \foreach \x in {-4, -3}
    {
	   \draw[myblue] (\l-1, \x*0.3) rectangle +(0.24, 0.24);
    }
    \draw[myblue] (0.12+\l-1, -0.93-0.3-0.15) node {\myblue{$u_\l$}};
}
\end{tikzpicture} 
\end{center}
\caption{A visualization of the generalized stochastic neural network: the depth $L=3$, it takes $p=3$ dimensional covariate vector $x$ and $\tilde d=1$ dimensional noise vector $u_1$ as input in the input layer. Compared with a standard fully connected neural network, it also injects $N_s=2$ dimensional noise vector in hidden layers $\ell\in \{1,2\}$. } 
\label{fig:snn}
\end{figure}

\section{A variation of Algorithm \ref{algo1} with smaller $m_1$.}
\label{sec:algo2}

\begin{algorithm}
	\caption{A variation of Algorithm \ref{algo1} with smaller $m_1$.}
	\label{algo2}
	\begin{algorithmic}[1]
		\State \textbf{Hyper-parameter: } number of epoches $T$, batch size $B$, number of noise samples per data $K$.
		\State \textbf{Input:} data $\{(X_i, Y_i)\}_{i=1}^n$
		\State Initialize $\theta$ with random weights
		\myblue{\State Sample noises $\{{U}_{b,i,k}\}_{b\in [m_1], i\in [n], k\in [K]}$ i.i.d. from $\mu_u$.}
		\For {$t \in \{1,\ldots, T\}$} 
		\State Shuffled sample $\breve{\mathcal{D}} =  \{(\breve{X}_{i}, \breve{Y}_{i})\}_{i=1}^n = \{(X_{\pi(i)}, Y_{\pi(i)})\}_{i=1}^n$ by random permutation $\pi$ in $[n]$.
		\For {$\ell \in \{1,\ldots, \lfloor n/B\rfloor\}$}
		\State Get batch ${\mathcal{B}} = \{(\tilde{X}_{i}, \tilde{Y}_i)\}_{i=1}^{B} = \{(\breve{X}_{i+(\ell-1)B}, \breve{Y}_{i+(\ell-1)B})\}_{i=1}^{B}$.
		\myblue{\State Get noises $\{\tilde{U}_{i,k}\}_{i\in [B], k\in [K]} = \{{U}_{(t\mod m_1) + 1, \pi(i+(\ell-1)B),k}\}_{i\in [B], k\in [K]}$}
		\State Update $\theta$ by descending its gradient
		\begin{align*}
			\nabla_\theta \left[\frac{1}{B}\sum_{i=1}^B \left\{\frac{2}{K} \sum_{k=1}^K |\tilde{Y}_i-g_\theta(\tilde{X}_i,\tilde{U}_{i,k})| - \frac{1}{K(K-1)}\sum_{k\neq l} |g_\theta(\tilde{X}_i, \tilde{U}_{i,k}) - g_\theta(\tilde{X}_i, \tilde{U}_{i, l})|\right\}\right]
		\end{align*}
		\EndFor
		\EndFor
		\State \textbf{Output:} distributional regression estimate $g_\theta$. 
	\end{algorithmic}
\end{algorithm}


\end{document}